\numberwithin{equation}{section}
\definecolor{ct_black}{HTML}{000000}
\definecolor{ct_orange}{HTML}{ED872D}
\definecolor{ct_purple}{HTML}{7A68A6}
\definecolor{ct_blue}{HTML}{348ABD}
\definecolor{ct_turquoise}{HTML}{188487}
\definecolor{ct_red}{HTML}{E32636}
\definecolor{ct_pink}{HTML}{CF4457}
\definecolor{ct_green}{HTML}{467821}
\definecolor{ct2_green}{HTML}{9FF781}
\definecolor{ct2_green_dark}{HTML}{088A08}
\newcommand{\norm}[1]{\left\lVert#1\right\rVert}
\theoremstyle{plain}
\newtheorem{thm}{Theorem}[section]
\newtheorem{cor}[thm]{Corollary}
\newtheorem{lem}[thm]{Lemma}
\newtheorem{prop}[thm]{Proposition}
\theoremstyle{definition}
\newtheorem{defn}[thm]{Definition}
\newtheorem{rem}[thm]{Remark}
\newcommand{\ee}{{\mathrm e}}
\newcommand{\ii}{{\mathrm i}}
\newcommand{\dd}{{\mathrm d}}
\newcommand{\tr}{{\mathrm{tr} }}
\newcommand{\Tr}{{\mathrm{Tr}}}
\newcommand{\tand}{{\qquad \mathrm{and} \qquad}}
\newcommand{\HB}{H_\mathrm{B}}
\newcommand{\HBo}{H_\mathrm{B,1}}
\newcommand{\HBt}{H_\mathrm{B,2}}
\newcommand{\HBz}{H_\mathrm{B,0}}
\newcommand{\HBs}{H_{\mathrm{B},s}}
\newcommand{\UB}{U_\mathrm{B}}
\newcommand{\UBo}{U_\mathrm{B,1}}
\newcommand{\UBt}{U_\mathrm{B,2}}
\newcommand{\HHB}{\mathcal{H}_\mathrm{B}}
\newcommand{\HE}{H_\mathrm{E}}
\newcommand{\UE}{U_\mathrm{E}}
\newcommand{\HHE}{\mathcal{H}_\mathrm{E}}
\newcommand{\HI}{H_\mathrm{I}}
\newcommand{\UI}{U_\mathrm{I}}
\newcommand{\DI}{D_\mathrm{I}}
\newcommand{\DIo}{D_{\mathrm{I},1}}
\newcommand{\DIt}{D_{\mathrm{I},2}}
\newcommand{\Hbc}{H_\mathrm{bc}}
\newcommand{\IB}{\mathcal{I}_\mathrm{B}}
\newcommand{\IE}{\mathcal{I}_\mathrm{E}}
\newcommand{\II}{\mathcal{I}_\mathrm{I}}
\newcommand{\vm}{\mathbf m}
\newcommand{\vn}{\mathbf n}
\newcommand{\vp}{\mathbf p}
\newcommand{\Id}{\mathrm{Id}}
\newcommand{\idB}{I}
\newcommand{\idE}{I}
\begin{document}

\title{Bulk-Edge Correspondence for Two-Dimensional \\ Floquet Topological Insulators}

\author{Gian Michele Graf  and Cl\'ement Tauber
 \\	\footnotesize{Institute for Theoretical Physics, ETH Z\"{u}rich }} 
\date{}

\maketitle

\begin{abstract}
Floquet topological insulators describe independent electrons on a lattice driven out of equilibrium by a time-periodic Hamiltonian, beyond the usual adiabatic approximation. In dimension two such systems are characterized by integer-valued topological indices associated to the unitary propagator, alternatively in the bulk or at the edge of a sample. In this paper we give new definitions of the two indices, relying neither on translation invariance nor on averaging, and show that they are equal. In particular weak disorder and defects are intrinsically taken into account. Finally indices can be defined when two driven sample are placed next to one another either in space or in time, and then shown to be equal. The edge index is interpreted as a quantized pumping occurring at the interface with an effective vacuum. 

\end{abstract}

\section{Introduction}

Bulk-edge correspondence is a crucial concept in the context of Quantum Hall effect and topological insulators. From the topological point of view, the bulk properties of an infinite sample can be deduced by looking at the gapless modes, propagating at the edge of a sample with boundary, and \textit{vice versa} \cite{Hatsugai93,ElgartGrafSchenker05,GrafPorta13}. This duality is commonly observed in physical systems where both bulk and edge index are well understood. Sometimes it is even assumed to fill the lack of interpretation of a bulk invariant, the physics at the edge being usually more intuitive. In any case a proof of this correspondence is as much a mathematical challenge as a helpful identity for physics. 

In analogy with topological insulators, it was recently realized that topological phases could arise in periodically driven systems. The initial proposal was to induce topology on a two-dimensional sample through a time-periodic perturbation of a trivial material, e.g. by irradiation of graphene \cite{OkaAoki09,InoueTanaka10} or semi-conductor quantum wells by microwaves \cite{LindnerRefaelGalitski11}, but it was then realized that a large class of time-periodic Hamiltonians of independent electrons may support topological properties, as long as the unitary propagator after one period is gapped  \cite{KitgawaBergRudnerDemler10,RudnerPRX13}.

For samples that are also space-translation invariant, Rudner \textit{et al.} \cite{RudnerPRX13} defined a topological bulk index that is integer-valued and equal to the number of edge modes that appear in the spectrum for associated dynamics on a strip geometry. Moreover an explicit definition of the edge index and a proof of the bulk-edge correspondence was proposed in \cite{RudnerPRX13}, but with the extra assumption that the unitary propagator is also periodic in time. Recently the requirement of spatial invariance has been dropped and similar result were obtained for disordered systems \cite{FulgaMaksymenko16}, or \cite{TitumPRX16} where averaging over fluxes threading the sample has been used. An interacting model was proposed in \cite{KlinovajaStanoLoss16}. Finally the bulk invariant has been generalized to the cases with time-reversal or chiral symmetry \cite{Lyon15bis, Lyon15, Fruchart16}, and bulk-edge correspondence for one-dimensional chiral systems was studied in \cite{AsbothTarasinskiDelplace14}.

In this paper we give new definitions both for the bulk and edge index that do not require space-translation invariance of the Hamiltonian, nor averaging, and show a general proof of the bulk-edge correspondence. We only assume that the Hamiltonian is local (short range), periodic  and regular enough in time. The construction works as soon as the bulk one-period propagator  has a spectral gap. If space-translation invariance is present though, the definition generalizes the existing one. If not, it applies to weakly disordered systems, see Rem.\,\ref{rem:disorder}. Moreover in this approach the edge index is interpreted as a quantized pumping of charges after one cycle. Exploiting a duality between space and time (see Sect.\,\ref{subsec:interface}), we show that this pumping actually occurs at the interface with an effective vacuum, computed from the original Hamiltonian and depending on the spectral gap under consideration. 

The concept of topological pump and the study of periodically driven system in this context is not new but until recently the adiabatic hypothesis has been always implied. From Thouless' original work \cite{Thoules83} to more recent and abstract considerations \cite{ProdanSchulz16book}, the driving was always assumed to be slow enough in order to use the adiabatic theorem. In particular the time-dependent spectrum of the Hamiltonian is the relevant object of interest, and usually a persistent gap all along the driving is assumed. We stress that Floquet topological insulators and in particular the present work are not placed in this frame. Here the driving can be arbitrary and we do not make any assumption on the spectrum of the Hamiltonian, but only on the corresponding propagator. Finally note that this notion of non-adiabatic quantized pumping has already been observed in \cite{TitumPRX16}.

The paper is organized as follows. First Sect.\,\ref{sec:FTI} describes the context of Floquet topological insulators for which the construction applies. The main results are then stated in Sect.\,\ref{sec:main_results}. The definition of bulk and edge indices, as well as the bulk-edge correspondence, is done in two steps. Inspired by \cite{RudnerPRX13}, we first assume that the bulk propagator is periodic in time. The edge invariant is interpreted as charge pumping and can be identified with an index of pair of projections~\cite{AvronSeilerSimon94}. The bulk index is a mixture of commutative (in time) and non-commutative (in space) expression of the odd Chern number \cite{ProdanSchulz16}. For the general case we define the bulk and edge index through a relative time evolution that allows to reduce matters to the previous case, by considering an effective Hamiltonian for each spectral gap of the bulk propagator. The index of an interface is also defined to provide a simple interpretation of this effective Hamiltonian.

Sect.\,\ref{sec:timeevol} then studies the locality and continuity properties of bulk and edge propagators, required for the indices to be well-defined, and compare these propagators. All this is established through the notion of confinement \cite{ElgartGrafSchenker05} and switch functions \cite{AvronSeilerSimon94}. The proofs are finally detailed in Sect.\,\ref{sec:proofs}, mostly following the statements of Sect.\,\ref{sec:main_results} but postponing some computations to App.\,\ref{sec:app}. Although the mathematical expressions of the indices look similar to those for topological insulators, the operators involved are quite different and indeed describe another physics. 

Finally note that shortly after this work was completed an independent result on similar matters was proposed in \cite{SadelSchulz17}. Based on K-theory, it extends this bulk-edge correspondence to every dimension, but the physical interpretation is less immediate than in the functional analysis approach. Moreover our work does not rely on any covariance property.

\section{Floquet topological insulators \label{sec:FTI}}

\subsection{Bulk and edge Hamiltonians}

We consider a tight-binding model of independent electrons on the two-dimensional lattice $\mathbb Z^2$. The bulk Hilbert space is $\HHB = \ell^2(\mathbb Z^2) \otimes \mathbb C^N$, where $\mathbb C^N$ accounts for internal degrees of freedom (sub-lattice, spin, orbital, etc.). For $\vm \in \mathbb Z^2$, we denote by the usual ket notation $|\vm \rangle \in \ell^2(\mathbb Z^2)$ the state localized at site $\vm$ and $\langle \vm |$ its corresponding bra. For any operator $K$ on $\HHB$ and $\vm,\vn \in \mathbb Z^2$, the kernel $K_{\vm,\vn} \equiv \langle \vm | K | \vn \rangle$ is a matrix of size $N$. According to the context $|\vm| = |m_1| + |m_2|$ and $|K_{\vm,\vn}|$ denotes the operator norm of finite matrices. The operator norm on the full Hilbert space $\HHB$ is denoted by $\norm{K}$. 

The electrons are ruled by a family of one-particle Hamiltonians $H_B(t)$, namely a self-adjoint operator on $\HHB$ for each $t \in \mathbb R$. In the context of Floquet topological insulators we assume that it satisfies some further assumptions. 

\begin{defn}[Bulk Hamiltonian]\label{def:bulk_Hamiltonian}
 Let $\HB(t) : \HHB \rightarrow \HHB$ be a family of self-adjoint operators for $t \in \mathbb R$. We say that $\HB$ is a bulk Hamiltonian if it is
 \begin{enumerate} 
 	\item \textit{time-periodic:} $\exists T \in \mathbb R$ so that $\HB(t+T) = \HB(t)$ for all $t \in \mathbb R$,
 	\item \textit{local:}  $ \exists \, \mu, \,C >0$ independent of $t$ so that for any $t \in [0,T]$ and  $\vm,\vn \in \mathbb Z^2$
 	\begin{equation}\label{HB_loc}
 	|\HB(t)_{\vm,\vn}| \leq C \ee^{-\mu |\vm-\vn|}\, ;
 	\end{equation}
 	$\mu$ is called the \textit{locality exponent},
  	\item \textit{piecewise strongly continuous:} the map $t \mapsto \HB(t)$ is strongly continuous except possibly for jump discontinuities.
 \end{enumerate}
\end{defn}

Note that because of Condition 1, the parameter $t$ is reduced to a compact interval  so that the uniform bound in Condition 2 is equivalent to a family of time-dependent bounds for $t \in [0,T]$.

\begin{rem}[Physical models covered]
 Any time-periodic Hamiltonian that for each $t$ is a finite range or exponentially decaying hopping term is a bulk Hamiltonian in the sense of Def.\,\ref{def:bulk_Hamiltonian}. Moreover piecewise constant Hamiltonians (e.g. as in \cite{RudnerPRX13}) are also allowed thanks to Condition~3. However we do not require space translation invariance for a bulk Hamiltonian so that any disordered configuration can be implemented through $\HB$ \textit{a priori}, see Rem.\,\ref{rem:disorder} below. Finally  we do not require  a spectral gap uniform in times, in contrast to adiabatic theory.
\end{rem}

\begin{rem}[Underlying topology]\label{rem:topology}  
	We define a norm on local operators which is suited to bulk Hamiltonians. For fixed $\mu$ let
	\begin{equation}\label{local_norm}
	\norm{A}_\mu = \inf \lbrace C \, | \,   \forall t \in [0,T] \quad \forall \vm,\vn \in \mathbb Z^2 \quad	|A(t)_{\vm,\vn}| \leq C \ee^{-\mu |\vm-\vn|} \rbrace,
	\end{equation}
	which satisfies $\norm{\cdot}_\lambda \leq \norm{\cdot}_\mu$ for $\lambda \leq \mu$.
	This local norm will be used for homotopy considerations.
\end{rem}

The edge system is described by considering only a half-plane, which we take to be $\mathbb  N \times \mathbb Z \subset \mathbb Z^2$, so that the edge Hilbert space is $\HHE = \ell^2(\mathbb N \times \mathbb Z) \otimes \mathbb C^N$. Bulk and edge spaces are related through the partial isometry
\begin{equation}
\iota : \HHE \longrightarrow \HHB, \qquad \iota^* : \HHB \longrightarrow \HHE, 
\end{equation}
where $\iota$ is the canonical injection of $\HHE$ in $\HHB$ and $\iota^*$ is the canonical truncation of $\HHB$ to $\HHE$. In particular they satisfy
\begin{equation}\label{J_partialisom}
\iota^* \iota = \Id_{\HHE}, \qquad \iota \iota^* = P_1,
\end{equation}
where $P_1 : \HHB \rightarrow \HHB$ is the projection on states supported in the right half-plane $n_1 \geq 0$.

\begin{defn}[Edge Hamiltonian]\label{def:edge_Hamiltonian}
	For a given bulk Hamiltonian $\HB(t)$, the edge Hamiltonian $\HE(t) : \HHE \rightarrow \HHE$ is the family of self-adjoint operators defined by
	\begin{equation}
	\HE(t) = \iota^* \HB(t) \iota.
	\end{equation}
	Properties 1-3 of $\HHB$, $\HB$ are inherited to $\HHE$, $\HE$. In particular $\norm{\HE}_\mu \leq \norm{\HB}_\mu$.
\end{defn}

As a sharp cut of the bulk space, this edge Hamiltonian corresponds to Dirichlet boundary condition, but an extra term confined near the boundary can actually be added to the previous definition without changing the topological aspects, see Prop.\,\ref{prop:general_BC} below, allowing the implementation of other local boundary conditions or defects at the edge.
 
\subsection{Propagator}

The spectrum of a time-dependent Hamiltonian $H(t)$ at any given time will not be of importance. Instead we shall consider the time evolution operator generated by $H(t)$, see e.g. \cite[Thm. X.69]{ReedSimonII}.

\begin{defn}[Propagator]
	Let $H(t)$ be a family of bounded Hamiltonians on a Hilbert space $\mathcal H$, with $t \mapsto H(t)$ strongly continuous. The unitary propagator $U(t,s) \in \mathcal U(\mathcal H)$ is a two parameter family of unitary operators strongly continuous in $t$ and $s$ satisfying 
	\begin{equation}\label{prop_propag}
	U(t,t) = \Id_\mathcal{H}, \qquad U(t,r)U(r,s) = U(t,s),
	\end{equation}
	and so that for any $\psi \in \mathcal H$, $\varphi_s(t) = U(t,s) \psi$ is the unique solution of
	\begin{equation}\label{Schrodinger_equation}
	\ii \dfrac{\dd\,}{\dd t} \varphi_s(t) = H(t) \varphi_s(t), \qquad \varphi_s(s) = \psi.
	\end{equation}
	where we have set $\hbar=1$. $H(t)$ is called the generator of $U(t,s)$.
\end{defn}
If the Hamiltonian has jump discontinuities, the propagator is defined piecewise but remains strongly continuous even at the discontinuity points thanks to \eqref{prop_propag}. Note that in the case of a time-independent Hamiltonian $H$, the propagator is given by
\begin{equation}
U(t,s) = \ee^{-\ii (t-s) H}
\end{equation}
and satisfies $U(t+\tau,s+\tau) = U(t,s)$ for any $\tau \in \mathbb R$. 
If $H(t+T)=H(t)$ is periodic in time, then that property survives for $\tau=T$, which implies $U(t+T,s) = U(t,0)U(T,s)$ by \eqref{prop_propag}. As a result the whole family $U(t,s)$ is determined by its restriction  $U(t) \equiv U(t,0)$ to the compact interval $0 \leq t \leq T$; and its long time behavior by just $U(T)$.
The spectrum of $U(T)$ thus carries  essential  information about the solutions of \eqref{Schrodinger_equation}. This is the so-called Floquet theory. Because $U(t)$ is unitary, its spectrum $\sigma[U(t)]$ belongs to $\mathcal S^1$, and at $t=T$ we denote
\begin{equation}
\ee^{-\ii \varepsilon T} \in \sigma[U(T)] \subset \mathcal S^1
\end{equation}
so that $\varepsilon$ has the dimension of an energy. Because it is defined modulo $2\pi/T$, it is rather called \textit{quasi-energy}, in analogy with quasi-momentum in Bloch theorem. Indeed the eigenstates of $U(T)$ provide solutions to \eqref{Schrodinger_equation} that are time-periodic up to the phase $\ee^{-\ii \varepsilon T}$.

\subsection{Stroboscopic gap assumption}

The topological aspects can be characterized through the propagator of a bulk Hamiltonian.

\begin{defn}[Floquet Topological Insulator]\label{def:FTI}
We say that $\HB(t)$, a bulk Hamiltonian in the sense of Def.\,\ref{def:bulk_Hamiltonian}, is a Floquet topological insulator if the corresponding unitary propagator at $t=T$, $\UB(T)$ has a spectral gap.
\end{defn}

\begin{figure}[htb]
	\centering
	\begin{tikzpicture}[scale=0.9]
	\newdimen\r
	\pgfmathsetlength\r{1.5cm}
	\draw[thick] (0,0) circle (\r);
	\draw[line width=0.15cm,DeepSkyBlue3] (20:\r) arc (20:130:\r);
	\draw[line width=0.15cm,DeepSkyBlue3] (-40:\r) arc 
	(-40:-120:\r);
	
	\draw (195:2.5*\r) node[text centered, text width=3cm]{quasi-energy \\ gaps};
	\draw (0:2.5*\r) node[text centered, text width=3cm]{quasi-energy \\ bands};
	
	\draw[-latex,dotted] (0:1.7*\r) -- (30:1.1*\r); 
	\draw[-latex,dotted] (0:1.7*\r) -- (-50:1.1*\r); 
	
	\draw[-latex,dotted] (200:1.7*\r) -- (-15:0.9*\r); 
	\draw[-latex,dotted] (200:1.7*\r) -- (170:1.1*\r); 
	
	\draw[dashed] (0,0) -- (150:1.2*\r) node[above left]{$\ee^{-\ii T\varepsilon}$};
	\end{tikzpicture}
	\caption{\label{fig:spectral_gap} Example of spectrum for $\UB(T)$ with two quasi-energy bands and gaps.}
\end{figure}
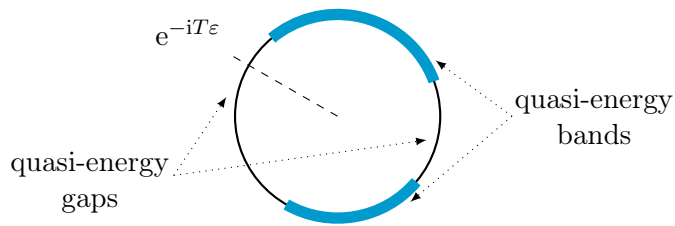

As illustrated in Fig.\,\ref{fig:spectral_gap}, the spectrum of $\UB(T)$ is typically constituted of one or several bands (of arbitrary nature) separated by gaps. By extension we also speak about quasi-energy $\varepsilon$ when $\ee^{-\ii T \varepsilon}$ is in a spectral gap of $\UB(T)$. Moreover note that assumption of a \enquote{stroboscopic} spectral gap, \textit{i.e.} only for $U_B(T)$, is sufficient to define the topological indices. The gap assumption may fail at some intermediate times, \textit{i.e.} for $\UB(t),\,0<t<T$.

\begin{rem}
	The term \enquote{insulator} is somewhat misleading here as its meaning is purely mathematical: The existence of a spectral gap. The physical interpretation is not obvious since the spectrum of a unitary operator lives on a circle, so there is no ground state (in fact energy is not even conserved) and thus no notion of Fermi energy. The analogy with (time-independent) topological insulators should then be used with care. Some attempt of interpretation is given in Sect.\,\ref{subsec:interface} below.
\end{rem}

\section{Bulk-edge correspondence \label{sec:main_results}}

The main result of this paper is to define a bulk and edge index and to show that they coincide, for each spectral gap of $\UB(T)$. The indices are respectively defined in terms of the bulk and edge propagators $\UB$ and $\UE$, generated by the corresponding Hamiltonians. To do so, the first thing to establish is that $\UB$ and $\UE$ are both local when $\HB$ is, see Sect.\,\ref{sec:localU} below. The operations of truncating space and generating time evolution do not commute, so that the truncated bulk propagator does not equal that of the edge. The important point however is that
\begin{equation}\label{pres_D}
|D(t)_{\vm,\vn}| \leq C \ee^{- \lambda |m_2-n_2|} \ee^{-\lambda|n_1|},  \qquad D(t) \equiv U_E(t) - \iota^* \UB(t) \iota
\end{equation}
for some $C>0$ and $0 < \lambda <\mu$, see Prop.\,\ref{prop:defD}. Namely the difference $D$ is confined near the edge since it is exponentially decaying in direction 1, compare with \eqref{HB_loc}. The bulk and edge indices are then defined using switch functions \cite{AvronSeilerSimon94}.
  
\begin{defn}\label{def:switch}
	A switch function $\Lambda : \mathbb Z \rightarrow \mathbb R$ is a function so that $\Lambda(n) = 1$ (resp. $0$) for $n$ large and positive (resp. negative).
	We also call switch function and denote by $\Lambda$ the multiplicative operator acting on $\ell^{2}(\mathbb Z)$, and by $\Lambda_i$ a switch function $\Lambda_i(\vn)=\Lambda(n_i)$ in direction $i$ acting on $\ell^2(\mathbb Z^2)$ or $\ell^2(\mathbb N \times \mathbb Z)$.
\end{defn}

For instance $\Lambda$ can be a step function, in which case it is a projection, such as $P_1$ in \eqref{J_partialisom}. The commutator with a switch function allows to confine a local operator in a particular direction and is a powerful tool to eventually end up with trace-class expressions. This is detailed in Sect.\,\ref{sec:local_conf_trace}.

\subsection{The case of a time-periodic propagator}

The definition and properties of the indices, as well as the bulk-edge correspondence, are first established under the auxiliary assumption that the bulk propagator satisfies:
\begin{equation}\label{UB=I}
\UB(T) =  \idB,
\end{equation}
where $\idB$ is the identity. Although not really physical, this situation still belongs to the Floquet Topological Insulators in the sense of Def.\,\ref{def:FTI} since the spectrum of $\UB(T)$ is degenerated to $\{1\}$ so that $\mathcal S^1 \setminus \{1\}$ constitutes a canonical spectral gap (see Fig.\,\ref{fig:IE} right). The general case, treated in the next section, is nothing but a reduction to this particular one. 

\begin{prop}[Edge index definition]\label{def:IEperiodic}
	Let $\HB$ be a bulk Hamiltonian so that $\UB(T) =  \idB$. Let $\HE$ and $\UE$ the associated edge Hamiltonian and propagator, and $\Lambda_2$ a switch function in direction 2 on $\HHE$. The edge index
	\begin{equation}\label{IE_def}
	\IE \equiv \Tr_{\HHE} \Big( U_E^*(T) [\Lambda_2, U_E(T)] \Big)
	\end{equation}
	is well-defined and integer valued, independent of the choice of $\Lambda_2$, and continuous in $\HB$ (in local norm)  as long as $\UB(T)=\idB$.
\end{prop}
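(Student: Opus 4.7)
The assumption $\UB(T) = \idB$ together with Prop.\,\ref{prop:defD} gives $\UE(T) = \idE + D(T)$ with $D(T)$ exponentially confined to the edge. Consequently $[\Lambda_2, \UE(T)] = [\Lambda_2, D(T)]$, which now carries confinement in \emph{both} directions: direction~1 from $D$, and direction~2 from the switch commutator. My plan is to verify four claims in turn: trace class, integrality, independence of $\Lambda_2$, and continuity in $\HB$.

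For the trace class property, the matrix element $[\Lambda_2, D(T)]_{\vm,\vn} = (\Lambda(m_2) - \Lambda(n_2)) D(T)_{\vm,\vn}$ vanishes unless $m_2$ and $n_2$ straddle the transition region of $\Lambda$, which combined with the $|m_2-n_2|$--decay of $D$ localizes both $m_2$ and $n_2$ near that region. Together with the direction~1 decay of $D(T)_{\vm,\vn}$ in $|n_1|$ (and, by applying Prop.\,\ref{prop:defD} to $\UE^*(T)$, also in $|m_1|$), this yields a summable kernel, so $\norm{\UE^*(T)[\Lambda_2, \UE(T)]}_1 \leq \sum_{\vm,\vn} |[\Lambda_2, D(T)]_{\vm,\vn}| < \infty$ and the trace is well-defined; this is the standard confinement-to-trace-class mechanism developed in Sect.\,\ref{sec:local_conf_trace}. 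For integrality, specialize to a step switch $\Lambda_2 = P$ (the projection onto $\{n_2 \geq 0\}$): then
\begin{equation}
\IE = \Tr_{\HHE}\bigl(\UE^*(T) P\, \UE(T) - P\bigr) = \Tr_{\HHE}(Q - P), \qquad Q \equiv \UE^*(T) P\, \UE(T),
\end{equation}
with $Q$ again an orthogonal projection and $Q - P$ trace class by the previous step. The Avron-Seiler-Simon formula \cite{AvronSeilerSimon94} identifies this with an index of a pair of projections, hence with an integer.

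For independence of $\Lambda_2$, I take two switches $\Lambda, \Lambda'$ whose difference $\Delta = \Lambda - \Lambda'$ has compact support in direction~2. Thanks to the bilateral decay of $D(T)$ obtained above, each of $\UE^*(T)\Delta D(T)$ and $\UE^*(T) D(T) \Delta$ is separately trace class, so cyclicity legitimately applies and $\Tr(\UE^*(T)[\Delta, \UE(T)]) = \Tr(\UE^*(T)[\Delta, D(T)]) = 0$. For continuity, the propagator $\UE(T)$ and the edge-confined difference $D(T)$ depend continuously on $\HB$ in the local norm of Rem.\,\ref{rem:topology} (as established in Sect.\,\ref{sec:timeevol}), and the trace-norm estimate above is Lipschitz in the confining data; thus $\IE$ varies continuously with $\HB$ throughout the set $\{\UB(T) = \idB\}$, which together with integer-valuedness forces local constancy.

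The main obstacle I anticipate is the cyclicity step for independence of $\Lambda_2$: one has to know that the direction-1 decay of $D(T)$ is bilateral (in both $m_1$ and $n_1$), so that $\Delta D$ and $D\Delta$ are genuinely trace class and not merely their commutator. This should follow from Prop.\,\ref{prop:defD} applied to the adjoint propagator, since passing to $\UE^*$ preserves the local structure, but it is the delicate point where the otherwise formal manipulations have to be justified.
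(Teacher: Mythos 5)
Your overall strategy tracks the paper's step for step: use $\UE(T) = \idE + D(T)$ with $D(T)$ confined (Prop.\,\ref{prop:defD}), reduce everything to $[\Lambda_2, D(T)]$, obtain integrality by identifying $\IE$ with an Avron--Seiler--Simon index of a pair of projections, split and cyclically permute the trace for independence of $\Lambda_2$, and appeal to continuity of $\HB \mapsto D(T)$. Where you diverge is the mechanism for trace class: you argue via $\ell^1$-summability of the kernel of $[\Lambda_2, D(T)]$, which requires decay of $D(T)_{\vm,\vn}$ in \emph{both} $|m_1|$ and $|n_1|$, and you correctly identify this bilateral decay as the delicate point, proposing to get the $|m_1|$-decay from Prop.\,\ref{prop:defD} applied to $\UE^*(T)$. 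That route is viable (e.g.\ re-run Prop.\,\ref{prop:defD} for the time-reversed Hamiltonian $-\HB(T-\cdot)$, or simply note $D(T)^* = -\UE^*(T)\,D(T)$ when $\UB(T)=\idB$ and factor), but it is more than the paper needs. Lemma\,\ref{lem_icjl_traceclass} obtains trace class from one-sided (column) confinement alone: it shows $[\Lambda_2, D(T)]\,\ee^{\lambda|n_1|}\ee^{\lambda|n_2|}$ is bounded, then writes $[\Lambda_2, D(T)]$ as this bounded operator times the manifestly trace-class diagonal $\ee^{-\lambda|n_1|}\ee^{-\lambda|n_2|}$, never touching $D^*$. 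Your kernel-sum argument is more concrete, but the paper's factorization is the cleaner engine (it powers every trace-class assertion in Sect.\,\ref{sec:proofs}) and would have spared you the detour through $\UE^*$. A small omission worth flagging: in the $\Lambda_2$-independence step, cyclicity alone gives $\Tr(\UE^*\Delta D) = \Tr(D\UE^*\Delta)$; to conclude this equals $\Tr(\UE^*D\Delta)$ you still need $[\UE^*(T), D(T)] = 0$, which is immediate from $D(T)=\UE(T)-\idE$ but should be stated, as the paper does.
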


In that case $\UE(T) = \idE + D(T)$ so that $\UE$ is time-periodic up to a correction confined at the edge. The index has the interpretation of a \textit{non-adiabatic quantized charge pumping} \cite{TitumPRX16}: It counts the net number of particles that have moved into the upper half-plane within a period. In fact, by the independence on $\Lambda_2$, we may pick $\Lambda_2 = P_2$, the projection associated to that half-plane, so that
 \begin{equation}\label{IEasindex}
  \IE = \Tr_{\HHE} \Big( U_E^*(T) P_2 U_E(T) - P_2 \Big) 
  \end{equation}
indeed computes the difference in the number of particles therein at times separated by a period. Moreover the net transport takes place near the edge because far away from it we may pretend $\UE(T) = 1$ by (\ref{pres_D}, \ref{UB=I}). See Fig.\,\ref{fig:IE}. As we shall see \eqref{IEasindex} is the index of a pair of projection \cite{AvronSeilerSimon94} and hence an integer.

 \begin{figure}[htb]
  	\centering
  	\begin{tikzpicture}
  	\filldraw[color=ct2_green,opacity=0.3 ] (0,0) -- (0,2.5)--(5,2.5)--(5,0) --cycle;
  	\draw[-latex,thick] (-1,0) -- (5,0) node[below]{$n_1 \in \mathbb N$};
  	\draw[-latex,thick] (0,-2.5) -- (0,2.5) node[above]{$n_2 \in \mathbb Z$};
  	\foreach \k in {-9,...,9}
  	{\draw (0,-2.5/10*\k) -- (-0.2,-2.5/10*\k-0.3);}
  	\draw[ct2_green_dark] (5,2.5) node[below left] {$n_2 \geq 0$};
  	\draw[red, very thick, ->] (0.2,-1.5)node {$\bullet$} -- (0.2,1.5);
  	\draw[->] (2.5-0.5,-0.5) arc (-90:-441:0.5); \draw (2.5-0.5,-0.5) node {$\bullet$};
  	\draw[->] (2.5+2,-0.5+1) arc (-90:-441:0.5); \draw (2.5+2,-0.5+1) node {$\bullet$};
  	\draw[->] (2.5+1.5,-0.5-1.5) arc (-90:-441:0.5); \draw (2.5+1.5,-0.5-1.5) node {$\bullet$}
  	;	
  	\begin{scope}[xshift=10.5cm]
  	\draw (0,0) circle (1.6);
  	\draw[dashed,thick,color=red] (0,0) circle (1.65);

  	\draw (0,-2.2) node{$\sigma\big[U_E(T)\big]$};
  	\draw[DeepSkyBlue3] (1.6,0) node {$\bullet$};
  	
  	\draw[-latex] (2.7,1.1) node[above]{$\sigma\big[U_B(T)\big] = \{1\}$} to[bend left]  (1.7,0);
  	\end{scope}
  	
  	\end{tikzpicture}
  \caption{(Left) $\IE$ compares the density in the upper right quadrant between times $t=0$ and $t=T$. Only the electrons localized at the edge contribute since $\UB(T) = \idB$.  (Right) Bulk and edge spectra. The latter might be gapless but only with extra states confined at the boundary.\label{fig:IE}}
\end{figure}
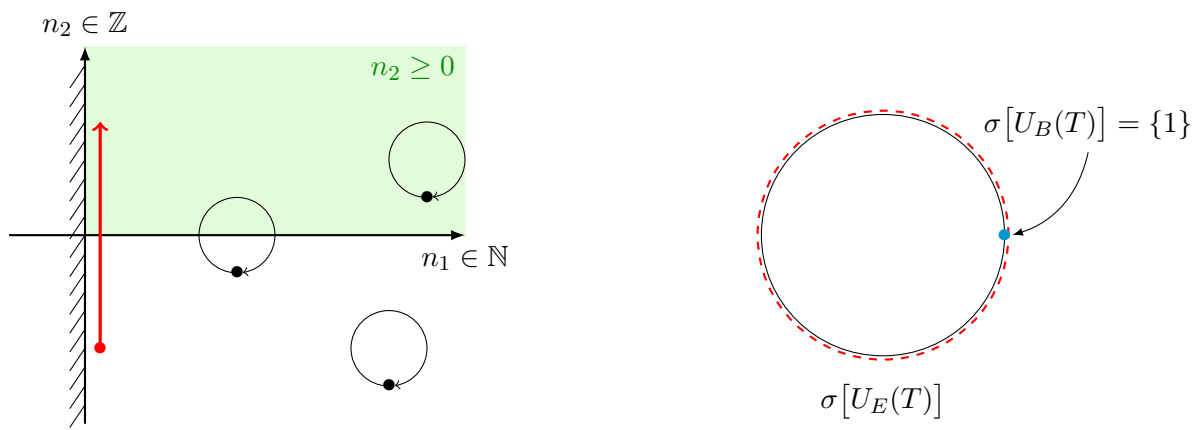

\begin{prop}[Bulk index definition] \label{prop:defI_per}
	 Let $\HB$ be a bulk Hamiltonian and $\UB$ the corresponding propagator such that $\UB(T) = \idB$. The bulk index
	\begin{equation}\label{defIB_per}
	\IB \equiv  \dfrac{1}{2} \int_0^T \dd t\, \Tr_{\HHB} \Big( \UB^*\partial_t \UB \Big[\UB^*[\Lambda_1,\UB],\,\UB^*[\Lambda_2,\UB] \Big]\Big)
	\end{equation}
	is well-defined and independent of the choice of the switch function $\Lambda_i$ in direction $i=1,2$. It is moreover an integer depending continuously on $\HB$ (in local norm) as long as $\UB(T) = \idB$.
\end{prop}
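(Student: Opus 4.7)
The plan is to establish, in order: (i) that the integrand is a trace-class operator with a bound uniform in $t$, so \eqref{defIB_per} is well-defined; (ii) independence of the switch functions $\Lambda_1,\Lambda_2$; (iii) integrality; (iv) continuity in local norm. For (i) I would rely on the locality results of Sect.\,\ref{sec:localU}: the matrix elements of $\UB(t)$ decay exponentially in $|\vm-\vn|$, whence $\UB^*\partial_t\UB = -\ii\,\UB^*\HB\UB$ is local (with a possibly degraded exponent). The commutator $[\Lambda_i,\UB]$ has matrix element $(\Lambda_i(m_i)-\Lambda_i(n_i))\UB(t)_{\vm,\vn}$, which is exponentially confined in direction $i$ since $\Lambda_i$ varies only across its step. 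The product $\UB^*[\Lambda_1,\UB]\cdot \UB^*[\Lambda_2,\UB]$ is then confined in \emph{both} directions simultaneously, so its diagonal matrix elements are summable over $\mathbb Z^2$ and the full integrand is trace-class, with $L^1$-trace bound depending polynomially on $\norm{\HB}_\mu$ and uniform in $t\in[0,T]$.

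For (ii) write $\Lambda_i' - \Lambda_i = f_i$ with $f_i$ compactly supported. Using the derivation rule $[f_i,AB] = [f_i,A]B + A[f_i,B]$, cyclicity of the trace (allowed because $f_i$ has finite rank), and the Jacobi identity for commutators, the variation of the integrand rearranges into a total time derivative $\partial_t \Tr(\cdots)$; since $\UB(0)=\UB(T)=\idB$, the boundary term vanishes upon integration over $[0,T]$. The continuity statement (iv) then follows from continuity of $\UB(t)$ in local norm as a function of $\HB$ (via a Duhamel-type estimate for $\UB$) combined with the uniform trace-norm bound from (i), allowing dominated convergence in $t$.

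Step (iii) is the one I expect to be the main obstacle. Under $\UB(0)=\UB(T)=\idB$ the map $t\mapsto \UB(t)$ is a loop in the unitaries of local operators on $\HHB$, and \eqref{defIB_per} is the stroboscopic, spatially noncommutative analogue of the winding $3$-form $\tfrac{1}{24\pi^2}\tr(g^{-1}dg)^3$, with spatial derivatives replaced by commutators with switch functions in the odd-Chern-number spirit of \cite{ProdanSchulz16}. My plan is (a) to use (iv) together with a smoothing argument to show that $\IB$ is invariant under continuous deformations within the class $\UB(T)=\idB$, and then (b) to deform to a reference loop where $\IB$ can be evaluated by hand --- for instance a direct sum of decoupled two-site cells whose propagator performs an explicit loop in $\mathrm{U}(N)$, yielding a known integer winding. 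An attractive alternative is to recast the trace as the index of a Fredholm operator built from $\UB$ and the switch functions, which would both prove integrality and give a geometric interpretation; the risk here is circularity with the forthcoming bulk-edge equality, from which integrality of $\IB$ would also follow \emph{a posteriori} via that of $\IE$.
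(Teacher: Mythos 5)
Your step (i) is essentially the paper's: locality of $\UB$ gives via the switch-function commutators two operators simultaneously confined in orthogonal directions, whose product is trace class uniformly in $t$. Step (iv) also follows the paper's logic (continuity of $\HB\mapsto\UB$ in local norm plus the trace-norm estimate). The problems lie in (ii) and (iii).

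In (ii) you justify splitting and cycling the trace by saying $f_i=\Lambda_i'-\Lambda_i$ \enquote{has finite rank.} This is false: on $\ell^2(\mathbb Z^2)$, $f_i$ is a multiplication operator compactly supported only in direction $i$ while extending over all of $\mathbb Z$ in the other direction, so it is neither finite rank nor compact. What actually permits splitting the trace is that $f_i$ is $i$-confined (and trivially $j$-local), so that products with $j$-confined factors are trace class (Lem.\,\ref{lem_icjl_traceclass}). Moreover the claim that the variation \enquote{rearranges into a total time derivative} is incomplete: in the paper's computation \eqref{comput_IB_invswitch}, after opening the outer commutator one is left with a $\partial_t\Tr(\cdots)$ piece \emph{and} a term of the form $\Tr\big([\Lambda_j,\,\cdot\,]\big)$; the former vanishes by periodicity, the latter by Lem.\,\ref{lem:com_switch_0}. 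Both mechanisms are needed.

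For (iii) you diverge from the paper and leave a genuine gap. The paper proves integrality by taking a Fourier transform of $\UB$ in the time direction and applying Lem.\,\ref{lem:periodic} with a switch function $\Lambda_t$ in that direction; this converts the $t$-integral into a third commutator and identifies $\IB$ with the noncommutative odd Chern number $C_3$ of \cite{ProdanSchulz16}, which is known to be an integer. Your route (a) — deform to a reference loop of decoupled cells — presupposes a homotopy classification of loops in local unitaries with $\UB(0)=\UB(T)=\idB$ that you do not establish, and it is not clear the reference loops you have in mind exhaust the homotopy classes; establishing this would effectively require the machinery you are trying to avoid. Your route (b) via a Fredholm index is plausible but not carried out, and as you note it risks circularity with the bulk-edge theorem. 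Either way, integrality is the nontrivial part of the proposition, and your sketch does not settle it.
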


The bulk-edge correspondence then provides a physical interpretation of the bulk index\footnote{It was recently identified with a magnetization density in a particular system \cite{Nathan16}.} through the edge picture. The main result, proved in Sect.\,\ref{sec:proof_BE_per}, is indeed

\begin{thm}[Bulk-edge correspondence]\label{thm:BE_per}
	 Let $\HB$ be a bulk Hamiltonian so that $\UB(T) = \idB$. Then 
	\begin{equation}
	\IB = \IE.
	\end{equation}
\end{thm}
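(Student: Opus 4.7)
The plan is a direct computation of $\IB$ exploiting two facts: the freedom in choosing the switch functions (Prop.\,\ref{def:IEperiodic} and Prop.\,\ref{prop:defI_per}) and the confinement of $D(t) = U_E(t) - \iota^* U_B(t) \iota$ near the edge from~(\ref{pres_D}). I would take $\Lambda_1 = P_1$, the sharp projection onto $n_1 \geq 0$, so that the relation $\iota \iota^* = \Lambda_1$ from~(\ref{J_partialisom}) directly links the bulk expression (on $\HHB$) to the edge one (on $\HHE$). The switch $\Lambda_2$ may be kept general, or also specialised to $P_2$ in order to match the pair-of-projections representation~(\ref{IEasindex}) of $\IE$.

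\textbf{Key manipulation.} Setting $A_i(t) = U_B^{*}[\Lambda_i, U_B] = U_B^{*}\Lambda_i U_B - \Lambda_i$ and $K(t) = U_B^{*} \partial_t U_B$, the Schr\"odinger equation yields the basic identity $\partial_t (U_B^{*} \Lambda_i U_B) = [U_B^{*}\Lambda_i U_B, K]$. Combining this with cyclicity of trace---permissible only after one commutator with a switch function has produced confinement in the corresponding direction and thereby put the relevant product into a trace ideal (see Sect.\,\ref{sec:timeevol})---I would rewrite the bulk integrand
\[
f(t) \;=\; \tfrac{1}{2} \Tr_{\HHB}\bigl(K\,[A_1,A_2]\bigr)
\]
as a total time derivative $\partial_t G(t)$ modulo terms whose $t$-integral vanishes by cyclicity. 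Integrating over $[0,T]$ then leaves $G(T)-G(0)$. The stroboscopic hypothesis $U_B(T) = U_B(0) = \idB$ forces $A_i$ to vanish at the endpoints, so nearly every boundary contribution collapses, leaving only a residual piece that, by the choice $\Lambda_1 = P_1$, is localised near $n_1 = 0$.

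\textbf{Matching with the edge.} Using $\iota \iota^{*} = P_1$ and $\iota^{*} \iota = \Id_{\HHE}$ together with cyclicity of the partial trace, the surviving $\HHB$-trace should be recast as an $\HHE$-trace of an expression in $\iota^{*} U_B(t) \iota$. The substitution $\iota^{*} U_B(t) \iota = U_E(t) - D(t)$ is then performed: since $D(t)$ is confined in both directions near the edge by~(\ref{pres_D}) and hence trace-class, the substitution introduces only controlled error terms. Finally, using $U_B(T) = \idB$ once more, so $U_E(T) = \Id_{\HHE} + D(T)$ with $D(T)$ trace-class, the expression collapses to $\Tr_{\HHE}\bigl(U_E^{*}(T)[\Lambda_2,U_E(T)]\bigr) = \IE$.

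\textbf{Principal obstacle.} The main difficulty is the trace-class bookkeeping: individual summands in the expanded integrand are not separately trace-class on $\HHB$, so every cyclic reshuffling has to be licensed by the confinement/switch-function machinery of Sect.\,\ref{sec:timeevol}---a commutator $[\Lambda_i,\,\cdot\,]$ producing confinement in one direction, then paired with either $D$ or $U_B^{*}\Lambda_j U_B - \Lambda_j$ to secure confinement in the other. A secondary obstacle is the bulk-to-edge conversion through the partial isometry $\iota$: the off-diagonal pieces $(\idB - P_1) U_B P_1$ that arise must be controlled via the locality of $U_B$ and identified, through $D$, with edge objects. Keeping track of exactly which terms survive after the two-fold application of $U_B(T) = \idB$---at the time endpoints and in the final cleanup---is the most delicate algebraic aspect.
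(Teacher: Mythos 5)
Your overall strategy---starting from $\IB$, applying the algebraic identity to peel off a total time derivative, and then converting the remaining time integral from bulk to edge---is a legitimate reversal of the paper's route (which goes edge $\to$ bulk). The algebraic step is the right one, and the boundary term produced by the integration by parts, $\bigl[\Tr\bigl([[\Lambda_2,\UB],\Lambda_1]\UB^*\bigr)\bigr]_0^T$, does indeed vanish because $\UB(T)=\UB(0)=\idB$. However, as written, the proof has two genuine gaps.

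First, and most seriously, fixing $\Lambda_1 = P_1$ from the outset destroys the bulk-to-edge matching. After the boundary term drops, you are left with
\begin{equation*}
\IB \;=\; \int_0^T \ii\,\Tr_{\HHB}\bigl([\Lambda_2,\UB]\UB^*\,[P_1,\HB]\bigr)\,\dd t ,
\end{equation*}
and the operator $[P_1,\HB]$ is confined exactly at $n_1=0$, i.e.\ at the edge. That is precisely where the difference $D=\UE-\iota^*\UB\iota$ and the contribution from the left half-plane $n_1<0$ are concentrated, so the error terms of the conversion are \emph{not} small; indeed the naive edge analogue $\Tr_{\HHE}\bigl([\Lambda_2,\UE]\UE^*[P_1,\HE]\bigr)$ is identically zero, since $P_1$ is the identity on $\HHE$. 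The paper's resolution, which your proposal omits, is to exploit the freedom in $\Lambda_1$ (Prop.\,\ref{prop:defI_per}) by using a movable switch $\Lambda_1=P_{1,r}=\chi_{n_1\geq r}$ and taking $r\to\infty$: as the switching hyperplane recedes from the edge, the $1$-confinement of $D$ near $n_1=0$ and the left-half contribution decay exponentially in $r$, and the surviving edge quantity $\Tr_{\HHE}\bigl([\Lambda_2,\UE(T)]\UE^*(T)Q_{1,r}\bigr)$ tends to $\IE$. Without the limiting parameter $r$, the \enquote{residual piece localised near $n_1=0$} you invoke cannot be identified with $\IE$ at all. Related to this, the phrase \enquote{integrating over $[0,T]$ then leaves $G(T)-G(0)$} cannot be literally correct as a bulk statement: if $G$ were a bulk trace built from the $A_i$'s, then $G(T)=G(0)$ by $\UB(T)=\UB(0)=\idB$ and you would conclude $\IB=0$. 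The nonvanishing boundary-in-time contribution only appears \emph{after} the bulk integrand has been replaced by its edge counterpart, because $\UE(T)\neq\idE$ even though $\UB(T)=\idB$.

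Second, the claim that $D(t)$ is \enquote{confined in both directions near the edge \ldots and hence trace-class} over-reads~\eqref{pres_D}. That estimate shows $D$ confined in direction $1$ but only \emph{local} (exponentially decaying kernel) in direction $2$; $D$ alone is not trace class. Only $[\Lambda_2,D]$, or the product of $D$ with another simultaneously $1$-confined and $2$-local operator, lies in the trace ideal (Lem.\,\ref{lem_icjl_traceclass}). This distinction is precisely what licenses the cyclic reshufflings you need; treating $D$ itself as trace class would allow several illegal steps.
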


Finally note that this approach generalizes the one from \cite{RudnerPRX13} where translation invariance in space is assumed, namely
\begin{equation}
\HB(t)_{\vm,\vn} = \HB(t)_{0,\vn-\vm} \qquad \forall \vm,\vn \in \mathbb Z^2,
\end{equation}
which is then also true for $\UB(t)$, $\HE(t)$ and $\UE(t)$ (only in direction 2 for the edge operators). We denote by $\widehat \UB(t,k_1,k_2): \mathbb T^3 \mapsto \mathcal U(\mathbb C^N)$ and $\widehat \UE(t,k_2) : \mathbb T^2 \mapsto \mathcal U(\ell^2(\mathbb N) \otimes \mathbb C^N)$ their corresponding Fourier transform.

\begin{prop}[Translation-invariant case]\label{prop:space_periodic_case}
	 Let $\HB$ be a bulk Hamiltonian that is translation invariant, then $\IE$ is the winding number of $\UE(T)$ along $k_2$, i.e.,
	\begin{equation}\label{IE_winding}
	\IE  =\dfrac{\ii}{2\pi}  \int_0^{2\pi} \dd k_2 \, \Tr_{\ell^2(\mathbb N) \otimes \mathbb C^N} \big(\widehat \UE^*(T,k_2) \partial_{k_2}\widehat \UE(T,k_2) \big) 
	\end{equation}
	and $\IB$ is the 3d-winding number of $\widehat \UB$, namely
	\begin{equation} \label{IB_degree}
	\IB = -\dfrac{1}{8\pi^2} \int_{\mathbb T^3} \dd t \dd k_1 \dd k_2 \, \tr_{\mathbb C^N} \big(\widehat \UB^* \partial_t \widehat\UB \big[ \widehat \UB^* \partial_{k_1} \widehat \UB,\, \widehat \UB^* \partial_{k_2} \widehat \UB\big]\big).
	\end{equation}
\end{prop}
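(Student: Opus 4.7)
The plan is to exploit translation invariance and pass to Fourier variables, using the telescoping identity
\[
\sum_{p\in\mathbb Z}\bigl(\Lambda(p+j)-\Lambda(p)\bigr) = j,
\]
valid for any switch function $\Lambda$ in the sense of Def.\,\ref{def:switch}, to convert commutators with switch functions into momentum derivatives at the trace level.

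For (\ref{IE_winding}) I would expand the trace kernel-wise,
\[
\Tr_{\HHE}\bigl(\UE^*(T)[\Lambda_2,\UE(T)]\bigr) = \sum_{\vec p,\vec n}\bigl(\Lambda_2(n_2)-\Lambda_2(p_2)\bigr)\,\tr_{\mathbb C^N}\bigl(\UE(T)^\dagger_{\vec n,\vec p}\UE(T)_{\vec n,\vec p}\bigr),
\]
note that by translation invariance in direction 2 the kernel depends only on $(n_1,p_1)$ and $j:=n_2-p_2$, and sum first over $p_2$ so that the $\Lambda_2$-difference collapses to $j$ by the telescoping identity. What remains is $\sum_j j\,\Tr_{\ell^2(\mathbb N)\otimes\mathbb C^N}(U_j^\dagger U_j)$, with $U_j$ the Fourier coefficients of $\widehat\UE(T,k_2)$ in $k_2$. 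A direct Fourier computation identifies this sum with $\frac{i}{2\pi}\int_0^{2\pi}dk_2\,\Tr_{\ell^2(\mathbb N)\otimes\mathbb C^N}(\widehat\UE^*\partial_{k_2}\widehat\UE)$, which is (\ref{IE_winding}).

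For (\ref{IB_degree}) I would apply the same logic twice. Setting $T_{12}(t):=\Tr_{\HHB}(\UB^*\partial_t\UB\cdot\UB^*[\Lambda_1,\UB]\cdot\UB^*[\Lambda_2,\UB])$ and using antisymmetry, $\IB=\tfrac12\int_0^T[T_{12}(t)-T_{21}(t)]\,dt$. Expanding the four-factor product kernel-wise under full translation invariance, each $\UB^*[\Lambda_i,\UB]$ has a kernel depending only on displacements plus the single position variable on which $\Lambda_i$ acts. The position sums in directions 1 and 2 decouple, and the telescoping identity applies independently in each, producing integer weights $j_1$ and $j_2'$. Fourier inversion then turns these into momentum derivatives $\partial_{k_i}\widehat\UB$, yielding
\[
T_{12}(t) = -\frac{1}{4\pi^2}\int_{\mathbb T^2}dk_1\,dk_2\,\tr_{\mathbb C^N}\bigl(\widehat\UB^*\partial_t\widehat\UB\cdot\widehat\UB^*\partial_{k_1}\widehat\UB\cdot\widehat\UB^*\partial_{k_2}\widehat\UB\bigr),
\]
where the prefactor $(i/2\pi)^2=-1/(4\pi^2)$ arises from the two successive Fourier-inversion steps. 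Subtracting $T_{21}$ antisymmetrises the momentum factors inside a commutator, and the prefactor $\tfrac12\int_0^T dt$ then reproduces (\ref{IB_degree}) with normalisation $-1/(8\pi^2)$.

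The technical heart is the bookkeeping in the bulk case. Unlike the edge computation, which has a single switch and a single product, the bulk expression involves four operator factors with two non-translation-invariant insertions $\Lambda_1,\Lambda_2$ along orthogonal directions; tracking which position variable each $\Lambda_i$-weight depends on after the natural change of summation variables, checking that all surviving Fourier arguments align at the same $(k_1,k_2)$, and confirming the signs and the overall $-1/(8\pi^2)$ prefactor is where the actual work lies. Absolute convergence of the iterated sums, required to justify the reordering, rests on the confinement produced by pairing both switch functions, and is already implicit in the well-definedness of (\ref{defIB_per}).
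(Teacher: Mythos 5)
Your proof is correct and is essentially the paper's argument: use translation invariance to expand the traces kernel-wise, collapse the switch-function differences via the telescoping identity $\sum_p(\Lambda(p+j)-\Lambda(p))=j$, and pass to Fourier variables to turn the resulting integer weights into momentum derivatives. The one organizational difference is in the bulk case. The paper first rewrites the commutator as $\big[\UB^*[\Lambda_1,\UB],\UB^*[\Lambda_2,\UB]\big]=-\big[\Lambda_1,\UB^*[\Lambda_2,\UB]\big]+\big[\Lambda_2,\UB^*[\Lambda_1,\UB]\big]$ so that everything is of the form $\Tr\big(A[\Lambda_1,B[\Lambda_2,C]]\big)$ with $A=\UB^*\partial_t\UB$, $B=\UB^*$, $C=\UB$ all translation invariant; it then applies Lem.\,\ref{lem:periodic2}, whose proof is precisely your double telescoping. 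You instead split the commutator as $T_{12}-T_{21}$ and expand the non-translation-invariant factors $\UB^*[\Lambda_i,\UB]$ directly, picking up internal summation variables. Both routes lead to the same five-index sum after passing to difference variables, the Fourier arguments all collapse to a single $k$ because the displacements sum to zero, and your sign and prefactor bookkeeping (ending at $-1/(8\pi^2)$) is correct. One remark worth making precise: you invoke "well-definedness of (\ref{defIB_per})" to justify reordering the iterated sums, but trace class alone does not license term-by-term kernel manipulation; what actually does is that $\UB^*[\Lambda_1,\UB]$ is simultaneously $1$-confined and $2$-local while $\UB^*[\Lambda_2,\UB]$ is $2$-confined and $1$-local (Lem.\,\ref{lem:Lambdaconfinement}, Lem.\,\ref{lem_icjl_traceclass}), so the kernel of their product decays exponentially in both coordinates and the iterated sums converge absolutely. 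The same point underlies the paper's Lem.\,\ref{lem:periodic} and \ref{lem:periodic2}, and it is also what lets one exchange the $k_2$-integral and the trace over $\ell^2(\mathbb N)\otimes\mathbb C^N$ in the edge formula.
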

Note that a more geometric way to write \eqref{IB_degree} is to use the language of differential forms, namely
\begin{equation}
\IB = -\dfrac{1}{24\pi^2} \int_{\mathbb T^3} \tr_{\mathbb C^N} \big( \big(\widehat \UB^*\, \dd  \widehat \UB\big)^{\wedge 3} \big),
\end{equation}
which is the  degree or odd Chern number \cite{Lyon15,ProdanSchulz16}. Finally a disordered system has been considered in \cite{TitumPRX16} where a finite sample is threaded by fluxes whose parameter space is a torus. That torus replaces the Brillouin zone of the space-periodic case. Thus the expression of the bulk invariant there is analogue to \eqref{IB_degree} by averaging over those fluxes, even though it is evident, at least heuristically, that there is no dependence on them in the thermodynamic limit.

\subsection{The general case \label{sec:generalcase}}

In the general case, $\UB(T) \neq \idB$, we shall define a bulk and edge index for each spectral gap of $\UB(T)$ by deforming the latter to $\idB$ and therefore come back to the previous case. Before doing that we establish the bulk-edge correspondence in a more general context. Consider two bulk Hamiltonians $\HBo$ and $\HBt$ together with their respective propagators $\UBo$ and $\UBt$ which are assumed to satisfy
\begin{equation}\label{UB1=UB2}
\UBo(T) = \UBt(T).
\end{equation}

We join the two Hamiltonians to a single one by placing their times intervals back to back, so to speak with opposite arrow of time. Explicitly, we define the relative Hamiltonian as 
\begin{equation}\label{def_HBrel}
H_\mathrm{B,rel}(t) = \left\lbrace \begin{array}{ll}
2\HBo(2t), & (0 < t < T/2) \\
-2\HBt(2(T-t)), &  (T/2 < t <  T)
\end{array}\right. 
\end{equation}
where the rescaling allows to keep the period $T$. By periodicity the second entry can be written more symmetrically to the first one as $-2\HBt(-t)$ for $-T/2 < t < 0$. The Hamiltonians complies with Def.\,\ref{def:bulk_Hamiltonian} despite jump discontinuities at $t=T/2$ and $T$. The corresponding evolution is 
\begin{equation}\label{def_UBrel}
U_\mathrm{B,rel}(t) = \left\lbrace \begin{array}{ll}
\UBo(2t), & (0 \leq t \leq T/2) \\
\UBt(2(T-t)), &  (T/2 \leq t \leq  T)
\end{array}\right.
\end{equation}
with continuity at $t=T/2$ by \eqref{UB1=UB2}. It satisfies $U_\mathrm{B,rel}(T)=1$ as intended. Indeed, the construction from the previous section applies.
\begin{cor}[Relative bulk-edge correspondence]\label{prop:rel_BE}
	 Let $\HBo$ and $\HBt$ be two bulk Hamiltonians such  that $\UBo(T)=\UBt(T)$. Consider the relative Hamiltonian $H_\mathrm{B,rel}$, \textit{cf.}  \eqref{def_HBrel}, and the associated  propagator $U_\mathrm{B,rel}$, as well as $H_\mathrm{E,rel} = \iota^* H_\mathrm{B,rel} \iota$ and $U_\mathrm{E,rel}$. The relative bulk and edge indices, defined by
	\begin{equation}
	\IB^\mathrm{rel} = \IB[U_\mathrm{B,rel}] \qquad \IE^\mathrm{rel} = \IE[U_\mathrm{E,rel}(T)],
	\end{equation}
	satisfy all the properties of Prop.\,\ref{def:IEperiodic} and \ref{defIB_per}, and moreover Thm.\,\ref{thm:BE_per} applies, namely
	\begin{equation}
	\IB^\mathrm{rel} = \IE^\mathrm{rel}.
	\end{equation}
\end{cor}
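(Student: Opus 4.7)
The strategy is to show that the relative Hamiltonian $H_\mathrm{B,rel}$ defined in \eqref{def_HBrel} is itself a bulk Hamiltonian in the sense of Def.\,\ref{def:bulk_Hamiltonian}, and that its propagator satisfies $U_\mathrm{B,rel}(T) = \idB$. Once these two facts are in hand, Prop.\,\ref{def:IEperiodic}, Prop.\,\ref{prop:defI_per}, and Thm.\,\ref{thm:BE_per} can be applied verbatim to the pair $(H_\mathrm{B,rel}, H_\mathrm{E,rel})$, yielding the integrality, continuity, switch-function independence, and the equality $\IB^\mathrm{rel} = \IE^\mathrm{rel}$.

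First I would verify the three conditions of Def.\,\ref{def:bulk_Hamiltonian} for $H_\mathrm{B,rel}$. Time-periodicity is built into the piecewise definition and its periodic extension. The locality bound \eqref{HB_loc} is inherited from $\HBo$ and $\HBt$: on each half-period $H_\mathrm{B,rel}(t)$ is (up to a factor of $\pm 2$) a time-rescaling of one of the two Hamiltonians, so it obeys \eqref{HB_loc} with constant $2\max(C_1,C_2)$ and locality exponent $\min(\mu_1,\mu_2)$. Piecewise strong continuity is preserved, as only two extra jump discontinuities (at $t=T/2$ and at $t=0\equiv T$) are introduced on top of those already present in $\HBo$ and $\HBt$.

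Next I would check the explicit formula \eqref{def_UBrel} for the propagator. On $(0,T/2)$, the substitution $\tau = 2t$ in the Schr\"odinger equation for $\UBo$ shows that $t \mapsto \UBo(2t)$ is generated by $2\HBo(2t) = H_\mathrm{B,rel}(t)$ with initial value $\idB$ at $t=0$. On $(T/2,T)$, the change of variable $s=2(T-t)$ yields
\begin{equation}
\ii \partial_t \UBt(2(T-t)) = -2\HBt(2(T-t))\, \UBt(2(T-t)),
\end{equation}
so the second line of \eqref{def_UBrel} is indeed generated by $H_\mathrm{B,rel}$. Matching at $t=T/2$ requires $\UBo(T) = \UBt(T)$, which is the hypothesis \eqref{UB1=UB2}, and evaluating at $t=T$ gives $U_\mathrm{B,rel}(T) = \UBt(0) = \idB$ as claimed.

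Once $(H_\mathrm{B,rel}, U_\mathrm{B,rel})$ is recognized as a legitimate bulk pair satisfying $U_\mathrm{B,rel}(T)=\idB$, the entire machinery of Sect.\,3.1 applies without modification, and the corollary follows. The only genuinely delicate point is the compatibility check at $t=T/2$: despite the jump of $H_\mathrm{B,rel}$ there, the propagator must remain strongly continuous, which forces the gluing condition \eqref{UB1=UB2} to intervene exactly where needed. Everything else is bookkeeping: the edge Hamiltonian $H_\mathrm{E,rel} = \iota^* H_\mathrm{B,rel} \iota$ automatically inherits the required properties by Def.\,\ref{def:edge_Hamiltonian}, and no new estimates or constructions are needed beyond those already established for the time-periodic case.
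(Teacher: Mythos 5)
Your proof is correct and follows essentially the same route as the paper. The paper states Cor.~\ref{prop:rel_BE} without a dedicated proof environment, but the paragraph preceding it asserts exactly what you verify in detail: that $H_\mathrm{B,rel}$ complies with Def.~\ref{def:bulk_Hamiltonian} despite the two new jump discontinuities, that $U_\mathrm{B,rel}$ is given by \eqref{def_UBrel} with continuity at $t=T/2$ supplied by $\UBo(T)=\UBt(T)$, and that $U_\mathrm{B,rel}(T)=\idB$, after which Props.~\ref{def:IEperiodic}, \ref{prop:defI_per} and Thm.~\ref{thm:BE_per} apply verbatim. Your chain-rule verification of the generator on each half-interval and the identification of the locality constants make explicit what the paper leaves implicit, but nothing is done differently.
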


Given a single bulk Hamiltonian $\HB$, it is still possible to define bulk and edge indices through this relative construction. The required second Hamiltonian $H_0$ will be chosen as time-independent and in such a way that $\UB(T) = \ee^{- \ii T H_0}$, \textit{i.e.} as a logarithm of $\UB(T)$.

\begin{defn}[Effective Hamiltonian]\label{def:Heff}
	 Let $\HB$ be a bulk Hamiltonian and pick $\varepsilon$ so that $\ee^{-\ii T \varepsilon}$ belongs to a gap of $\UB(T)$. The effective Hamiltonian is defined on $\HHB$ by
	\begin{equation}
	\HB^\varepsilon = \dfrac{\ii}{T} \log_{-T\varepsilon} \UB(T)
	\end{equation}
	through spectral decomposition of $\UB(T)$, where $-T\epsilon$ is the branch cut of the logarithm, defined by $
\log_{\alpha}(\ee^{\ii \phi}) = \ii \phi$ for $\alpha-2\pi < \phi < \alpha$.
\end{defn}

It will be shown in Prop.\,\ref{prop:Heff_local} that $\HB^\varepsilon$ is local. It thus conforms with Def.\,\ref{def:bulk_Hamiltonian}, since its other conditions hold true obviously. The pair $\HB,\,\HB^\varepsilon$ satisfy \eqref{UB1=UB2}, so that we have the general result:
\begin{thm}\label{thm:BE}
	(Bulk-edge correspondence) Let $\HB$ be a bulk Hamiltonian and $\varepsilon$ so that $\ee^{-\ii T \varepsilon}$ belongs to a gap of $\UB(T)$. Consider the relative Hamiltonian $H_\mathrm{B,rel}^\varepsilon$, defined by \eqref{def_HBrel} with $\HBo = \HB$ and $\HBt=\HB^{\varepsilon}$ from Def.\,\ref{def:Heff}, and the associated relative operators $U_\mathrm{B,rel}^{\varepsilon}$, $H_\mathrm{E,rel}^{\varepsilon}$ and $U_\mathrm{E,rel}^{\varepsilon}$. The bulk and edge indices
	\begin{equation}\label{IBIErel_def}
	\IB(\varepsilon) = \IB[U_\mathrm{B,rel}^{\varepsilon}], \qquad \IE(\varepsilon) = \IE[U_\mathrm{E,rel}^{\varepsilon}(T)]
	\end{equation}
	satisfy all the properties of Prop.\,\ref{def:IEperiodic} and \ref{defIB_per}, and moreover Thm.\,\ref{thm:BE_per} applies, namely
	\begin{equation}
	\IB(\varepsilon) = \IE(\varepsilon).
	\end{equation}
\end{thm}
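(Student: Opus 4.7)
The strategy is to reduce Thm.\,\ref{thm:BE} to the already established relative bulk-edge correspondence of Cor.\,\ref{prop:rel_BE}. To do so I would take $\HBo = \HB$ and $\HBt = \HB^\varepsilon$ and verify that this pair satisfies the hypotheses of that corollary, namely that $\HB^\varepsilon$ is itself a bulk Hamiltonian in the sense of Def.\,\ref{def:bulk_Hamiltonian} and that $\UBo(T) = \UBt(T)$. Once both points are in place, the identity $\IB(\varepsilon) = \IE(\varepsilon)$ is nothing but the conclusion of Cor.\,\ref{prop:rel_BE} applied to the relative data $U_\mathrm{B,rel}^\varepsilon$, $U_\mathrm{E,rel}^\varepsilon$ defined in the statement.

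The matching condition $\UBo(T)=\UBt(T)$ is essentially built into Def.\,\ref{def:Heff}. Since $\HB^\varepsilon$ is time-independent, its propagator is $\ee^{-\ii t\HB^\varepsilon}$, hence $\UBt(T) = \ee^{-\ii T\HB^\varepsilon}$. On the other hand, by spectral calculus and the branch convention $\log_{-T\varepsilon}(\ee^{\ii\phi}) = \ii\phi$ for $-T\varepsilon - 2\pi<\phi<-T\varepsilon$, the definition $\HB^\varepsilon = (\ii/T)\log_{-T\varepsilon}\UB(T)$ gives $\ee^{-\ii T\HB^\varepsilon} = \UB(T)$, which is well-defined precisely because $\ee^{-\ii T\varepsilon}$ sits in the gap, so the branch cut does not meet the spectrum. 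Thus $\UBt(T) = \UB(T) = \UBo(T)$ as required.

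Checking that $\HB^\varepsilon$ is a bulk Hamiltonian is the real work. Time-periodicity and piecewise strong continuity are trivial because $\HB^\varepsilon$ does not depend on $t$; self-adjointness follows from the fact that $\log_{-T\varepsilon}$ maps $\mathcal{S}^1 \setminus \{\ee^{-\ii T\varepsilon}\}$ into $\ii \mathbb R$ divided by $\ii$, i.e., into $\mathbb R$. The decisive property, exponential locality of the kernel, I would obtain by writing the logarithm via the Dunford-Riesz holomorphic functional calculus along a contour $\Gamma$ encircling the spectrum of $\UB(T)$ inside $\mathcal{S}^1 \setminus \{\ee^{-\ii T\varepsilon}\}$ and applying a Combes-Thomas-type estimate to the resolvent $(z-\UB(T))^{-1}$; this is exactly the content of Prop.\,\ref{prop:Heff_local}, which I would invoke as a black box. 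The point to stress is that the gap assumption, not merely locality of $\UB(T)$, is what ensures the contour stays at finite distance from the spectrum and yields uniform exponential decay of the resolvent kernels, hence of $\HB^\varepsilon$.

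With $\HB^\varepsilon$ certified as a bulk Hamiltonian and the stroboscopic matching verified, Cor.\,\ref{prop:rel_BE} applies verbatim to the relative data of \eqref{def_HBrel}-\eqref{def_UBrel} built from $(\HB,\HB^\varepsilon)$. The indices $\IB(\varepsilon)$ and $\IE(\varepsilon)$ defined in \eqref{IBIErel_def} are then, respectively, $\IB^\mathrm{rel}$ and $\IE^\mathrm{rel}$ for this pair; they inherit the well-definedness, integrality and continuity of Prop.\,\ref{def:IEperiodic} and Prop.\,\ref{prop:defI_per}, and their equality follows from Thm.\,\ref{thm:BE_per} applied at the relative level. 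The main obstacle in this plan is the locality of $\HB^\varepsilon$, which I expect to absorb into Prop.\,\ref{prop:Heff_local}; everything else is bookkeeping.
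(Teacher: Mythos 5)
Your proposal is correct and follows essentially the same route as the paper: reduce to Cor.\,\ref{prop:rel_BE} after verifying that $\HB^\varepsilon$ is a bulk Hamiltonian (the only nontrivial point being locality, established in Prop.\,\ref{prop:Heff_local}) and that the stroboscopic matching $\UB(T)=\ee^{-\ii T\HB^\varepsilon}$ holds by the branch convention. The paper itself phrases this exactly as ``nothing but a specific case of Cor.\,\ref{prop:rel_BE}'' with Prop.\,\ref{prop:Heff_local} carrying the technical load, so there is no substantive difference.
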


This is nothing but a specific case of Cor.\,\ref{prop:rel_BE}: We constructed a relative evolution that fulfills the assumption of the previous section, namely $H_\mathrm{B,rel}^\varepsilon$ is a bulk Hamiltonian and $U_\mathrm{B,rel}^{\varepsilon}(T) = \idB$. The influence of the choice of $\varepsilon$ is summarized by the next two statements:

\begin{lem}\label{lem:Heff_infl_epsilon}
	Let $\HB$ be a bulk Hamiltonian and $\varepsilon,\,\varepsilon' $ so that $\ee^{-\ii T \varepsilon}$ and $\ee^{-\ii T \varepsilon'}$ belong to a gap of $\UB(T)$. Then
	\begin{equation}\label{Heff2pi}
	\HB^{\varepsilon + 2\pi/T} - H_B^{\varepsilon} = \dfrac{2\pi}{T} \idB
	\end{equation} 
	and for $0 \leq \varepsilon'- \varepsilon < 2\pi/T$
	\begin{equation}\label{comp_Heff}
	\HB^{\varepsilon'} - \HB^\varepsilon = \dfrac{2\pi}{T} P_{\varepsilon,\varepsilon'},
	\end{equation}
	where $P_{\varepsilon,\varepsilon'}$ is the spectral projection of $U_B(T)$ associated to the spectrum between $\ee^{-\ii T \varepsilon}$ and  $\ee^{-\ii T \varepsilon'}$ clockwise.
\end{lem}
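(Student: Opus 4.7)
Both identities are direct consequences of the spectral calculus applied to $\UB(T)$ and simply track how the branch convention in Def.\,\ref{def:Heff} is displaced when $\varepsilon$ is changed. My plan is, via the spectral theorem, to write $\UB(T) = \int_{\mathcal S^1} \ee^{\ii \phi}\, \dd E(\phi)$ and then to compare the two branches of the logarithm arc by arc; no analytic input beyond the definition is needed.

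For \eqref{Heff2pi}, replacing $\varepsilon$ by $\varepsilon + 2\pi/T$ shifts the branch interval from $(-T\varepsilon - 2\pi,\, -T\varepsilon)$ to $(-T\varepsilon - 4\pi,\, -T\varepsilon - 2\pi)$. By $2\pi$-periodicity of $\phi \mapsto \ee^{\ii\phi}$ this covers the same spectral support, but every representative $\phi$ is replaced by $\phi - 2\pi$. Multiplying by $\ii/T$ and using $\int \dd E = \idB$ immediately yields $\HB^{\varepsilon + 2\pi/T} - \HB^\varepsilon = (2\pi/T)\idB$.

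For \eqref{comp_Heff}, the assumption $0 \leq \varepsilon' - \varepsilon < 2\pi/T$ places $-T\varepsilon'$ inside the interval $(-T\varepsilon - 2\pi,\, -T\varepsilon)$. I would split the branch interval for $\log_{-T\varepsilon}$ as $A \cup B$ with $A = (-T\varepsilon - 2\pi,\, -T\varepsilon']$ and $B = (-T\varepsilon',\, -T\varepsilon)$. On the spectral part lying over $A$ one checks that the branch interval of $\log_{-T\varepsilon'}$, namely $(-T\varepsilon' - 2\pi,\, -T\varepsilon')$, contains $A$, so the same representative $\phi$ is used by both logarithms and the two effective Hamiltonians agree. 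On the part lying over $B$, the branch $\log_{-T\varepsilon'}$ demands the representative $\phi - 2\pi$, so the pointwise difference after multiplication by $\ii/T$ is exactly $2\pi/T$. The spectral projection onto $B$ is, by definition, $P_{\varepsilon, \varepsilon'}$ (the arc traversed clockwise from $\ee^{-\ii T \varepsilon}$ to $\ee^{-\ii T \varepsilon'}$), which yields the claimed identity.

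The one item needing care is the geometric bookkeeping: verifying that the arc $B$ corresponds to the clockwise arc named in the statement, and noticing that $\varepsilon' - \varepsilon < 2\pi/T$ is precisely the condition ensuring that the two branch cuts do not cross any spectral value twice. Beyond this there is no real obstacle; the argument is essentially a one-line functional-calculus computation on each of the two arcs.
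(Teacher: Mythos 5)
Your proposal is correct and takes essentially the same route as the paper: both identities are read off from the spectral decomposition of $\UB(T)$ in \eqref{Hepsilon_spec} together with the elementary behaviour of $\log_\alpha$ under a $2\pi$ shift of the branch cut and under moving the cut across part of the spectrum. Your arc-by-arc bookkeeping ($A$ versus $B$) is just an explicit spelling-out of the paper's displayed logarithm identity, and you correctly identify $B$ with the clockwise arc defining $P_{\varepsilon,\varepsilon'}$.
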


\begin{prop}
	[Influence of $\varepsilon$]\label{prop:influence_varep} Let $\HB$ be a bulk Hamiltonian and $\varepsilon,\,\varepsilon' $ so that $\ee^{-\ii T \varepsilon}$ and $\ee^{-\ii T \varepsilon'}$ belong to a gap of $\UB(T)$. Then 
	\begin{equation}\label{Ibid}
	\IB(\varepsilon + 2\pi/T) = \IB(\varepsilon)
	\end{equation}
	 and for $0 \leq \varepsilon'-\varepsilon  < 2 \pi/T$
	\begin{equation}\label{Ibchern}
	\IB(\varepsilon') - \IB(\varepsilon) = c(P_{\varepsilon,\varepsilon'}),
	\end{equation}
	where
	\begin{equation}
	c(P) = - 2 \pi \ii\, \Tr \Big( P \big[[\Lambda_1, P],\, [\Lambda_2, P]\big] \Big) \in \mathbb Z
		\end{equation}
	 is the non-commutative Chern number (or Kubo-St\v{r}eda formula \cite{AvronSeilerSimon94}) of $P$.
\end{prop}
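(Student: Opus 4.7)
The plan is to reduce both equations to modifications of the second half of the relative evolution. The first halves of $H^\varepsilon_\mathrm{B,rel}$ and $H^{\varepsilon'}_\mathrm{B,rel}$ are identical (both equal $2\HB(2t)$), while the second halves differ only by $-2(\HB^{\varepsilon'} - \HB^\varepsilon)$, which Lemma~\ref{lem:Heff_infl_epsilon} identifies with $-(4\pi/T)\idB$ in the setting of \eqref{Ibid} and with $-(4\pi/T)P$, $P = P_{\varepsilon,\varepsilon'}$, in the setting of \eqref{Ibchern}. For \eqref{Ibid}, the shift by a scalar multiple of the identity multiplies the propagator by the scalar phase $\alpha(t) = e^{i(t-T/2)(4\pi/T)}$ on $[T/2, T]$. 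Since scalars commute with the switch functions $\Lambda_i$, the factors $U^*[\Lambda_i, U]$ in the integrand \eqref{defIB_per} are unaffected, while $U^*\partial_t U$ only acquires the additive $i\dot\alpha(t)\idB$. Its contribution $i\dot\alpha(t)\Tr\bigl[U^*[\Lambda_1, U],\,U^*[\Lambda_2, U]\bigr]$ is the trace of a commutator of trace-class operators and hence vanishes by tracyclicity.

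For \eqref{Ibchern}, the commutativity of $\HB^\varepsilon$ and $P$ (both functions of $\UB(T)$) yields on $[T/2, T]$ the factorization $U^{\varepsilon'}_\mathrm{B,rel}(t) = V(t)\, U^\varepsilon_\mathrm{B,rel}(t)$ with $V(t) = e^{i\phi(t) P}$, $\phi(T/2) = 0$, $\phi(T) = 2\pi$, and $[V(t), U^\varepsilon_\mathrm{B,rel}(t)] = 0$ throughout $[0, T]$. Exploiting this, I would construct a continuous family of bulk Hamiltonians on $[T/2, T]$ whose propagators take the form $e^{i a_s(t)\HB^\varepsilon}\,e^{i b_s(t)(2\pi/T)P}\,\UB(T)$, where $a_s$ and $b_s$ are smooth monotone profiles satisfying $a_s(T/2) = b_s(T/2) = 0$, $a_s(T) = T$, $b_s(T) = 2\pi$ for every $s \in [0,1]$. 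These endpoint conditions force $U^{(s)}_\mathrm{B,rel}(T) = \idB$ along the whole family, so homotopy invariance (Prop.~\ref{prop:defI_per}) keeps $\IB$ constant. One takes $a_0(t) = 2t - T$, $b_0(t) = (4\pi/T)(t - T/2)$ (simultaneous rotations, recovering $H^{\varepsilon'}_\mathrm{B,rel}$), and at $s = 1$ chooses $a_1$ to complete its sweep on $[T/2, 3T/4]$ while $b_1$ starts only at $t = 3T/4$. Then $U^{(1)}_\mathrm{B,rel}(3T/4) = \idB$, so by the local-in-time nature of \eqref{defIB_per} together with reparametrization invariance the integral splits as $\IB(\varepsilon) + \IB[V_\mathrm{rot}]$, where $V_\mathrm{rot}$ is the pure $P$-rotation loop supported on $[3T/4, T]$; hence $\IB(\varepsilon') = \IB(\varepsilon) + \IB[V_\mathrm{rot}]$.

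It remains to compute $\IB[V_\mathrm{rot}]$. Writing $V_\mathrm{rot} = (1-P) + e^{i\phi}P$ with $\phi$ sweeping $[0, 2\pi]$, a direct calculation yields $V_\mathrm{rot}^*\partial_t V_\mathrm{rot} = i\dot\phi P$ and $V_\mathrm{rot}^*[\Lambda_i, V_\mathrm{rot}] = (e^{i\phi}-1)(1-P)\Lambda_i P + (e^{-i\phi}-1) P\Lambda_i (1-P)$. Expanding the commutator and using the $(P, 1-P)$ block structure, only the factor $(e^{i\phi}-1)(e^{-i\phi}-1) = 2(1-\cos\phi)$ survives, multiplying $\Tr\bigl(P\Lambda_1(1-P)\Lambda_2\bigr) - \Tr\bigl(P\Lambda_2(1-P)\Lambda_1\bigr)$. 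Since $\int_0^{2\pi}(1-\cos\phi)\,d\phi = 2\pi$, this reproduces exactly the expansion of $c(P) = -2\pi i\,\Tr\bigl(P[[\Lambda_1, P], [\Lambda_2, P]]\bigr)$, so $\IB[V_\mathrm{rot}] = c(P)$. The main obstacle is verifying that the homotopy stays within the class of bulk Hamiltonians: the projection $P$ must be exponentially local (a Riesz projection for an isolated spectral region of the local unitary $\UB(T)$, obtained from a Combes-Thomas estimate), and the confinement and trace-class properties used to define $\IB$ must be preserved uniformly along the deformation.
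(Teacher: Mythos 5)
Your proposal is correct, and the computation of $\IB[V_{\mathrm{rot}}]$ at the end is exactly the paper's Appendix~\ref{app:chern}, but the route from the factorization to the split of the index is genuinely different and, in fact, longer than what the paper does.

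The paper relies on the algebraic additivity $\IB[UV]=\IB[U]+\IB[V]$ of Prop.~\ref{prop:add_IB}, proved in App.~\ref{app:additivity}. Both \eqref{Ibid} and \eqref{Ibchern} then follow immediately from the factorizations $U_{\mathrm{B,rel}}^{\varepsilon+2\pi/T}=U_{\mathrm{B,rel}}^{\varepsilon}U_{\Id}$ and $U_{\mathrm{B,rel}}^{\varepsilon'}=U_{\mathrm{B,rel}}^{\varepsilon}U_{P}$, with $\IB[U_{\Id}]=0$ trivially and $\IB[U_{P}]=c(P)$ computed in App.~\ref{app:chern}. Your approach replaces the additivity step with a two-parameter homotopy that uses the commutativity of $\HB^\varepsilon$, $P$ and $\UB(T)$ to first sweep the $\HB^\varepsilon$-rewinding on $[T/2,3T/4]$ and only then the $P$-rotation on $[3T/4,T]$; after that you appeal to reparametrization invariance and the fact that the integrand of \eqref{defIB_per} can be split at a time where the propagator passes through $\idB$. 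This is essentially a re-derivation of a special case of Prop.~\ref{prop:add_IB} by deformation, and it buys you nothing over the algebraic proof while costing you the extra verification you yourself flag: one must check that the interpolating family stays within the class of bulk Hamiltonians (which it does, since it is generated by $\dot a_s(t)\HB^\varepsilon+\dot b_s(t)(2\pi/T)P$ with $\HB^\varepsilon$ local by Prop.~\ref{prop:Heff_local} and $P$ local by the same Riesz-contour/Combes--Thomas argument, and piecewise continuity in $t$ is allowed by Def.~\ref{def:bulk_Hamiltonian}). Your handling of \eqref{Ibid} by the scalar-phase observation is fine; the paper even mentions that argument as an alternative, although one should write $U^{*}\partial_t U\mapsto U^{*}\partial_t U+(\bar\alpha\dot\alpha)\,\idB$ rather than $i\dot\alpha$, and one must justify $\Tr[A_1,A_2]=0$ by noting that $A_1A_2$ and $A_2A_1$ are separately trace class (Lem.~\ref{lem_icjl_traceclass}), not merely that the commutator is. Also watch the normalization in your ansatz: with the exponent written as $b_s(t)(2\pi/T)P$ the endpoint condition should be $b_s(T)=T$, not $2\pi$, for the loop to close.
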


We have $\IB(\varepsilon) = \IB(\varepsilon')$ when $\varepsilon$ and $\varepsilon'$ belong to the same gap, by $P_{\varepsilon,\varepsilon'}=0$. Similarly \eqref{Ibchern} implies \eqref{Ibid} by $P_{\varepsilon,\varepsilon'}=I$ if $\varepsilon'\nearrow\varepsilon+2\pi$. Note that \eqref{Heff2pi} also implies \eqref{Ibid} through $\UB^{\varepsilon+2\pi/T}(t)=\UB^{\varepsilon}(t) \ee^{-2\pi \ii t/T}$. In regards to the operator seen in \eqref{defIB_per} that change contributes a commutator, which leaves the trace and hence the index unaffected.

A typical situation is illustrated in Fig.\,\ref{fig:phase_spectrum}: To each gap of $\UB(T)$ one associates a single index $\IB$, and indices between two distinct gaps are related through the Chern number of the band in between, so that the set of Chern numbers only gives the relative value of the gap indices. Finally note that Thm.\,\ref{thm:BE} generalizes \ref{thm:BE_per} since when $\UB(T) = \idB$ then $\HB^\varepsilon = 0$ for every $0<\varepsilon<2\pi$ so that $\IB(\varepsilon)$ coincides with $\IB$ from the previous section.

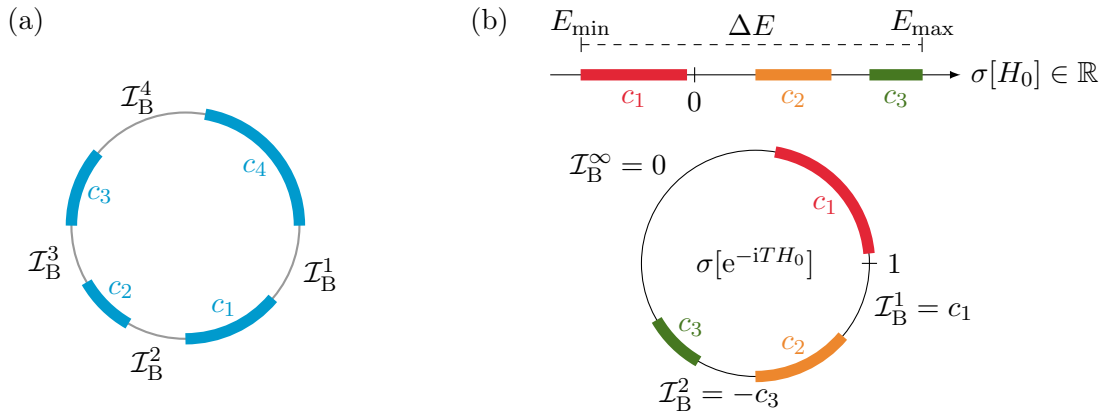
\begin{figure}[htb]
	\centering
	\begin{tikzpicture}[>=latex]
	\newdimen\radius
	\pgfmathsetlength\radius{1.5cm}
	\draw[black!40!white,thick] (0,0) circle (\radius);
	\draw[line width=0.15cm,DeepSkyBlue3] (00:\radius) arc (00:80:\radius);
	\draw[line width=0.15cm,DeepSkyBlue3] ({140}:\radius) arc ({140}:{180}:\radius);
	\draw[line width=0.15cm,DeepSkyBlue3] ({210}:\radius) arc ({210}:{240}:\radius);
	\draw[line width=0.15cm,DeepSkyBlue3] ({270}:\radius) arc ({270}:{320}:\radius);
	\node[DeepSkyBlue3] at (40:1.2) {$c_4$};
	\node[DeepSkyBlue3] at (160:1.2) {$c_3$};
	\node[DeepSkyBlue3] at (225:1.2) {$c_2$};
	\node[DeepSkyBlue3] at (295:1.2) {$c_1$};
	\node at (-20:1.9){$\IB^1$};
	\node at (110:1.8){$\IB^4$};
	\node at (195:1.9){$\IB^3$};
	\node at (255:1.9){$\IB^2$};
	\node at (-1.4\radius,1.8*\radius) {(a)};
	\node at (4,1.8*\radius) {(b)};
	
	\begin{scope}[xshift=7.5cm,yshift=2cm]
	
	\draw[->] (-2.7,0) -- (2.7,0) node[right]{$\sigma[H_0] \in \mathbb{R}$};
	\draw[line width=0.15cm,ct_red] (-0.9,0)-- node[below]{$c_1$} (-2.3,0); 
	\draw[line width=0.15cm,ct_orange] (0,0) --node[below]{$c_2$} (1,0);
	\draw[line width=0.15cm,ct_green]  (1.5,0) -- node[below]{$c_3$} (2.2,0);
	\draw (-0.8,-0.1)node[below]{$0$} -- (-0.8,0.1) ;
	\draw[dashed,|-|] (-2.3,0.4)node[above]{$E_\mathrm{min}$} --node[above]{$\Delta E$} (2.2,0.4)node[above]{$E_\mathrm{max}$};
	\end{scope}		
	\begin{scope}[xshift=7.5cm,yshift=-0.5cm]
	\newdimen\radius
	\pgfmathsetlength\radius{1.5cm}
	\draw (0,0) circle (\radius);
	\draw[line width=0.15cm,ct_red] (05:\radius) arc (05:80:\radius);
	\draw[line width=0.15cm,ct_green] ({210}:\radius) arc ({210}:{240}:\radius);
	\draw[line width=0.15cm,ct_orange] ({270}:\radius) arc ({270}:{320}:\radius);
	\draw (1.4,0) -- (1.6,0) node[right]{$1$};
	\node at (0,0) {$\sigma[\ee^{-\ii T H_0}] $};
	\node at (145:2.2) {$\IB^\infty = 0 $};
	\node at (-15:2.3) {$\IB^1 = c_1 $};
	\node at (-105:1.8) {$\IB^2 = -c_3 $};
	\node[ct_red] at (40:1.2) {$c_1$};
	\node[ct_green] at (225:1.2) {$c_3$};
	\node[ct_orange] at (295:1.2) {$c_2$};
	\end{scope}
	\end{tikzpicture}
	\caption{\label{fig:phase_spectrum}(a) Example of spectrum of $\UB(T)$ with gap indices $\IB^i$ related by Chern numbers $c_i$ of the bands through $\IB^{i+1} -\IB^i = c_i$. (b) For a time-independent Hamiltonian, the index of the \enquote{gap at infinity} always vanishes so that gap indices and Chern numbers are equivalent here. }
\end{figure}

\begin{rem}
	For a time-independent bulk Hamiltonian $H_0$, the set of gap invariants is equivalent to the set of Chern numbers, as illustrated in Fig.\,\ref{fig:phase_spectrum}(b). Indeed the spectrum of $\ee^{-\ii T H_0}$ consists in winding the spectrum of $H_0$ around the unit circle. Here $T$ is arbitrary but as long as $T <(2\pi) / \Delta E$
	where $\Delta E$ is the bandwidth of $H_0$, the propagator $\ee^{-\ii T H_0}$ possesses a \enquote{gap at infinity} coming from the gluing of the trivial gaps of $H_0$ at $\pm \infty$. When taking the branch cut in this gap (e.g. at $\varepsilon =  E_\mathrm{min} - \eta$ for $\eta$ small enough), one has $\HB^\varepsilon = H_0$, so that $\UBo = \UBt$ in the relative evolution. A direct computations shows that $\IB^\infty =  0$
	which sets a reference value for $\IB$. Thus the other gap indices are in one-to-one correspondence with the set of Chern numbers by \eqref{Ibchern}.
\end{rem}

Finally this construction is stable under continuous deformations.

\begin{cor}[Homotopy invariance]\label{cor:IB_contiuity}
	Let $\HBz$ and $\HBo$ be two bulk Hamiltonians related by a homotopy $\HBs$ of bulk Hamiltonians for $s \in [0,1]$. Assume the existence of $\varepsilon $ so that for every $s$, $\ee^{-\ii T \varepsilon}$ belongs to a gap of $U_{\mathrm{B},s}(T)$. Then 
	\begin{equation}
	\mathcal I_{\mathrm B,0}(\varepsilon) = \mathcal I_{\mathrm B,1}(\varepsilon) \tand \mathcal I_{\mathrm E,0}(\varepsilon) = \mathcal I_{\mathrm E,1}(\varepsilon)
	\end{equation}
\end{cor}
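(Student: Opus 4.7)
The plan is to reduce the statement to the continuity results already obtained in the periodic case (Prop.\,\ref{def:IEperiodic} and Prop.\,\ref{prop:defI_per}), via the relative construction used to define $\IB(\varepsilon)$ and $\IE(\varepsilon)$ in Thm.\,\ref{thm:BE}. Concretely, for each $s \in [0,1]$ form the effective Hamiltonian $\HBs^{\varepsilon}$, the relative Hamiltonian $H_{\mathrm{B,rel},s}^{\varepsilon}$ via \eqref{def_HBrel} with $\HBo = \HBs$ and $\HBt = \HBs^{\varepsilon}$, and the associated relative propagator $U_{\mathrm{B,rel},s}^{\varepsilon}$. By the very construction one has $U_{\mathrm{B,rel},s}^{\varepsilon}(T) = \idB$ for every $s$, so Thm.\,\ref{thm:BE}/Cor.\,\ref{prop:rel_BE} apply uniformly along the homotopy and yield the bulk and edge indices $\IB(\varepsilon)$ and $\IE(\varepsilon)$ for each $s$.

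The key step is to show that the map $s \mapsto H_{\mathrm{B,rel},s}^{\varepsilon}$ is continuous in the local norm $\norm{\cdot}_{\lambda}$ of Rem.\,\ref{rem:topology} for some $\lambda$ smaller than the common locality exponent of the $\HBs$. For the piece $\HBs$ this is given by assumption. For $\HBs^{\varepsilon}$ one first argues that $s \mapsto \UBs(T)$ is continuous in local norm (this is a consequence of the Dyson series or Duhamel formula applied to the propagators, together with the continuity of $\HBs$ — the same type of argument used to obtain locality of $\UB$ in Sect.\,\ref{sec:localU}), and then applies the holomorphic functional calculus
\begin{equation*}
\HBs^{\varepsilon} \;=\; \frac{\ii}{T} \oint_{\gamma} \log_{-T\varepsilon}(z)\, (z - \UBs(T))^{-1} \frac{\dd z}{2\pi \ii},
\end{equation*}
where $\gamma$ is a contour encircling the spectrum of $\UBs(T)$ while avoiding the branch cut. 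By the hypothesis that $\ee^{-\ii T \varepsilon}$ remains in a gap of $\UBs(T)$ for every $s \in [0,1]$, the contour $\gamma$ can be chosen independent of $s$ (perhaps after a compactness argument on $[0,1]$), and the resolvent depends continuously on $s$ in local norm thanks to the standard Combes--Thomas type estimates that already enter Prop.\,\ref{prop:Heff_local}. Combining these two points, and accounting for the rescaling and sign in \eqref{def_HBrel}, gives continuity of $s \mapsto H_{\mathrm{B,rel},s}^{\varepsilon}$ in local norm.

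Once the continuity of the relative Hamiltonian is established, the conclusion is immediate: by Prop.\,\ref{def:IEperiodic} and Prop.\,\ref{prop:defI_per} the integer-valued functions $s \mapsto \IB[U_{\mathrm{B,rel},s}^{\varepsilon}]$ and $s \mapsto \IE[U_{\mathrm{E,rel},s}^{\varepsilon}(T)]$ are continuous on the connected interval $[0,1]$, hence constant. Evaluating at $s=0$ and $s=1$ yields the two claimed equalities.

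The main obstacle is the continuity of $\HBs^{\varepsilon}$ in the local norm: plain operator-norm continuity follows easily from the functional calculus, but transferring this to the exponentially decaying kernel norm $\norm{\cdot}_{\lambda}$ requires a contour-integral argument together with uniform control over the locality of the resolvent $(z - \UBs(T))^{-1}$ for $z$ on $\gamma$. This uniformity is delicate because one needs both a gap condition uniform in $s$ (guaranteed by hypothesis and compactness of $[0,1]$) and a uniform Combes--Thomas bound, which should however be available from the same techniques used in Sect.\,\ref{sec:timeevol}.
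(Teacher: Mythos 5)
Your proposal is correct and takes essentially the same approach as the paper, which simply chains the continuity of $\HB \mapsto \UB$ (Prop.\,\ref{cor:U_continuity}), of $\UB \mapsto \HB^\varepsilon$ (Prop.\,\ref{HeffcontinousinUB}), and of the indices in the periodic case (Prop.\,\ref{def:IEperiodic} and \ref{prop:defI_per}) to conclude by constancy of an integer-valued continuous function. Your middle paragraph essentially re-derives the content of Prop.\,\ref{HeffcontinousinUB} (the contour-integral/resolvent argument with a common contour $\Gamma$ and the Combes--Thomas-type uniformity already packaged in Prop.\,\ref{prop:Heff_local}) rather than citing it, but the route and the ingredients are identical.
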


The proof of it follows from the continuity of the indices, and the fact that $\UB \mapsto \HB^\varepsilon$ is continuous, see Prop.\,\ref{HeffcontinousinUB}.

\begin{rem}[Weak disorder] \label{rem:disorder}
	 Any disordered configuration can be implemented through $\HB$ and the  construction works as long as a spectral gap is open. Moreover Thm.\,\ref{thm:BE} is deterministic in the sense that the definition of $\IB$ and $\IE$ and the bulk-edge correspondence are valid for any configuration and do not rely on ergodicity or average computation. Finally the indices are continuous in $\HB$ in the sense of Rem.\,\ref{rem:topology}, so that they coincide for two close configurations.
	 
	For example this covers the model developed in \cite{TitumPRX16}, but more generally 
	take $\HB^\omega= H_0 + \lambda V^\omega$ with $H_0$ translation invariant, $\{V^\omega\}_{\omega \in \Omega}$ a random potential and small $\lambda$ so that, from $0$ to  $\lambda$, $\ee^{-\ii T \varepsilon}$ is in a spectral gap of $\UB^{\omega}(T)$. Then $\IB^\omega(\varepsilon) = \IE^\omega(\varepsilon)$ for any $\omega \in \Omega$ and $\IB^\omega(\varepsilon) = \IB(\varepsilon)$, the latter corresponding to $\lambda =0$.
\end{rem}

\subsection{Index of an interface \label{subsec:interface} and space-time duality}

Though the invariants $\IB(\varepsilon)$ and $\IE(\varepsilon)$ of Thm.\,\ref{thm:BE} are mathematically well-defined and coincide, the physical interpretation of the relative evolution and effective Hamiltonian are not obvious. Here we propose a more intuitive reformulation by replacing the edge with an interface.

Consider again the general relative evolution of Cor.\,\ref{prop:rel_BE}. For two bulk Hamiltonians $\HBo$ and $\HBt$ such that $\UBo(T) = \UBt(T)$ we define the relative Hamiltonian $H_\mathrm{B,rel}$ and deduce $U_\mathrm{B,rel}$, $H_\mathrm{E,rel} = \iota^* H_\mathrm{B,rel} \iota$ and $U_\mathrm{E,rel}$. In particular 
\begin{equation}
U_\mathrm{E,rel}(t)=	\left\lbrace \begin{array}{lll}
U_{\mathrm E,1}(2t), & (0 \leq t \leq T/2) \\
U_{\mathrm E,2}(2(T-t))U_{\mathrm E,2}^*(T)U_{\mathrm E,1}(T), &  (T/2 \leq t \leq  T)
\end{array}\right.
\end{equation}
with $U_\mathrm{E,rel}(T)= U_{\mathrm E,2}^*(T)U_{\mathrm E,1}(T)$, so that the  edge index of Cor.\,\ref{prop:rel_BE} can be reformulated as, \textit{cf.}  \eqref{IE_def},
\begin{equation}\label{IErel}
\IE^\mathrm{rel} = \Tr_{\HHE} \Big( \big[\Lambda_2, U_{\mathrm E,1}(T)\big] U_{\mathrm E,1}^*(T) -\big[ \Lambda_2 ,\,U_{\mathrm E,2}(T) \big] U_{\mathrm E,2}^*(T)   \Big).
\end{equation}
The expression looks like the difference of two edge indices from Prop.\,\ref{def:IEperiodic} except that the trace cannot be split since $\UBo(T) = \UBt(T)$ differ from $\idB$. However that suggests:
\begin{defn}[Interface Hamiltonian]\label{def:interfaceH}
	 Let $\HBo$ and $\HBt$ be two bulk Hamiltonians and $H_\mathrm{int}$ be a bulk Hamiltonian that also satisfies $|H_\mathrm{int}(t)_{\vm,\vn}| \leq C \ee^{-\mu|n_1|}\ee^{-\mu|m_2-n_2|}$. Then define
	\begin{equation}
	\HI(t) = P_1 \HBo(t) P_1 + (1-P_1) \HBt(t) (1-P_1) + H_\mathrm{int}(t).
	\end{equation}
\end{defn}

This interface Hamiltonian is a bulk Hamiltonian acting on $\HHB$ and gluing $\HBo$ and $\HBt$ on each half of the sample through a perturbation $H_\mathrm{int}$ confined to the interface, as illustrated in Fig.\,\ref{fig:interface}:

\begin{figure}[htb]
	\centering
	\begin{tikzpicture}
	\draw[-latex] (-5,0) -- (5,0) node[right]{$n_1 \in \mathbb Z$};
	\draw (0,0.1) -- (0,-0.1) node[below]{$0$};
	\draw (-3.5,0) node[above]{$H_{B2}$}; \draw (3.5,0) node[above]{$H_{B1}$};
	\draw [dotted,domain=-3:3,scale=0.7,samples=50] plot(\x,{exp(-\x*\x)});
	\draw (0,0) node[above]{$H_\mathrm{int}$};
	\end{tikzpicture}
	\caption{Interface between two samples. \label{fig:interface}}
\end{figure}
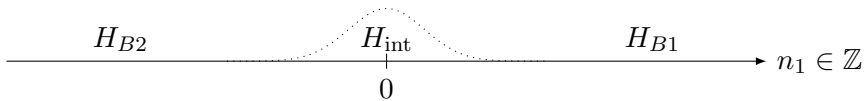

\begin{prop}[Interface index]\label{prop:interfaceindex}
	 Let $\HBo$ and $\HBt$ be two bulk Hamiltonians such that $\UBo(T)=\UBt(T)\equiv\UB(T)$. Consider the interface Hamiltonian $\HI$ from Def.\,\ref{def:interfaceH} and its evolution $\UI$. Then the interface index, defined by
	\begin{equation}\label{defII}
	\II = \Tr_{\HHB} \Big(\UI^*\UB(T)\Big[\Lambda_2, \UB^*\UI(T)\Big]\Big),
	\end{equation}
	is well-defined, integer valued, independent of the choice of $\Lambda_2$ and independent of $H_\mathrm{int}$. Moreover
	\begin{equation}
	\II = \IE^\mathrm{rel},
	\end{equation}
	where $\IE^\mathrm{rel}$ is the relative edge index \eqref{IBIErel_def}  associated to $\HBo$ and $\HBt$.
\end{prop}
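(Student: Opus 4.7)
The strategy is to recast $\II$ into the Avron-Seiler-Simon form of \eqref{IE_def}, which immediately yields well-definedness, integrality and $\Lambda_2$-independence by arguments parallel to those of Prop.\,\ref{def:IEperiodic}, and then to identify the resulting expression with $\IE^\mathrm{rel}$. Introduce the unitary $V := \UB^*(T)\UI(T)$ on $\HHB$. Expanding the commutator in \eqref{defII} and using $\UB^*\UB = \UI^*\UI = \idB$ gives
\begin{equation*}
\II \;=\; \Tr_{\HHB}\bigl(V^*[\Lambda_2, V]\bigr).
\end{equation*}
The first preparatory step is a Duhamel/locality argument, analogous to the one behind Prop.\,\ref{prop:defD}, showing that $V - \idB$ is confined near the interface in both indices: away from $n_1 = 0$ the interface Hamiltonian agrees with $\HBo$ or with $\HBt$, and $\UBo(T) = \UBt(T) = \UB(T)$, so $\UI(T) - \UB(T)$ is confined near $n_1 = 0$; multiplication by the local operator $\UB^*(T)$ (Sect.\,\ref{sec:localU}) preserves this confinement.

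With the confinement of $V - \idB$ in hand, the trace-class property, integrality and $\Lambda_2$-independence of $\II$ follow the pattern of Prop.\,\ref{def:IEperiodic} verbatim, the role of the edge being played by the interface $n_1 = 0$ in $\HHB$: choosing $\Lambda_2 = P_2$ exhibits $\II$ as the Avron-Seiler-Simon index \cite{AvronSeilerSimon94} of the pair of projections $(V^*P_2 V, P_2)$, hence an integer, and the usual continuity/integer-valuedness argument gives $\Lambda_2$-independence. Independence from $H_\mathrm{int}$ is established by the same principle applied to the convex interpolation between two admissible choices: it produces a continuous family of bulk Hamiltonians $H_\mathrm{I}^{(s)}$, of propagators $U_\mathrm{I}^{(s)}(T)$, and of integer indices $\II^{(s)}$, necessarily constant in $s$.

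The heart of the proof is the identity $\II = \IE^\mathrm{rel}$, which is established by first invoking the freedom just obtained to set $H_\mathrm{int} = 0$. Then $\HI$ is block-diagonal across $\HHB = P_1\HHB \oplus (1-P_1)\HHB$ and consequently $\UI(T) = W_+ \oplus W_-$, where $W_+$ and $W_-$ are the half-plane edge propagators of $\HBo$ and $\HBt$ respectively. Writing $\UB^*(T)$ in the same block decomposition with entries $A = P_1\UB^*(T)P_1$, $B = P_1\UB^*(T)(1-P_1)$, $C = (1-P_1)\UB^*(T)P_1$, $D = (1-P_1)\UB^*(T)(1-P_1)$, one expands $V^*[\Lambda_2, V]$ block-by-block: the diagonal entries produce traces of the form $\Tr([\Lambda_2, W_\pm] W_\pm^*)$ which, via Prop.\,\ref{prop:defD} and the unitarity identities $A^*A + C^*C = P_1$, $B^*B + D^*D = 1 - P_1$ coming from $\UB\UB^* = \idB$, reassemble into the two contributions of \eqref{IErel}, with the relative sign emerging from the opposite orientation of the two half-plane cuts; the off-diagonal contributions, controlled by the interface confinement of $B$, $C$ and of $V - \idB$, are trace-class and cancel in the algebra. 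The main obstacle is precisely this bookkeeping on non-trace-class intermediates: every cyclicity step has to be justified through the locality and comparison estimates of Sect.\,\ref{sec:localU}.
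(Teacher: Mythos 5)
The recasting $\II=\Tr_{\HHB}(V^*[\Lambda_2,V])$ with $V=\UB^*(T)\UI(T)$ is correct, as is the overall strategy (well-definedness and integrality via the index-of-pair-of-projections, $H_\mathrm{int}$-independence via homotopy, then reduce to $\IE^\mathrm{rel}$ by setting $H_\mathrm{int}=0$). You are also right that $V-\idB$ is confined near $n_1=0$, though the argument deserves one more clause than you give: for $t<T$ the propagators $\UBo(t)$ and $\UBt(t)$ differ, so the Duhamel comparison of $\UI$ with the bulk has to be done separately with $\UBo$ on the right half and with $\UBt$ on the left, and the two confinement estimates are glued together only \emph{at} $t=T$ using $\UBo(T)=\UBt(T)$. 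This is precisely the content of Lem.\,\ref{lem:defDI} in the paper, which you would essentially have to re-prove.

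The genuine gap is in the final reduction to $\IE^\mathrm{rel}$. With $H_\mathrm{int}=0$ and $\UI(T)=W_+\oplus W_-$, you have $W_+=U_{\mathrm E,1}(T)$ on $\HHE$ but $W_-=U_{\mathrm E,2}^-(T)$ on the \emph{other} half-plane $\HHE^-$, while both operators in \eqref{IErel} live on the same half-plane $\HHE$. You gesture at a ``relative sign emerging from the opposite orientation,'' but this is not a sign issue: the two traces $\Tr([\Lambda_2,W_\pm]W_\pm^*)$ are not individually finite (neither $W_+-\idE$ nor $W_--\Id_{\HHE^-}$ is confined near the interface unless $\UB(T)=\idB$), so the ``reassembly'' you describe cannot proceed term by term, and the off-diagonal contributions from the blocks $B,C$ of $\UB^*(T)$ enter the diagonal of $V^*\Lambda_2V$ through $A^*\Lambda_2A+C^*\Lambda_2C=\Lambda_2P_1+A^*[\Lambda_2,A]+C^*[\Lambda_2,C]$ in a way that does not visibly cancel. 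The paper avoids all of this by a trick you have not used: it decomposes the \emph{second} Hamiltonian $\HBt$ itself as an interface (with $H_\mathrm{int}$ its own off-diagonal), applies the same lemma to $\UBt(T)$, and then forms $\UBt^*\UI(T)$; the two left-half factors $U_{\mathrm E,2}^-$ cancel exactly, leaving $\iota\,U_{\mathrm E,2}^*U_{\mathrm E,1}(T)\,\iota^*+\iota_-\Id\,\iota_-^*$ plus a confined correction. After deforming both $H_\mathrm{int}$'s to zero the correction vanishes, the left block contributes nothing, and the right block is literally $\IE^\mathrm{rel}$. Without this double-interface decomposition your block bookkeeping, as you acknowledge, is the ``main obstacle,'' and I do not see how it closes as written.
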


Prop.\,\ref{prop:interfaceindex} establishes a duality between space and time. It tells that the relative index $\IE^\mathrm{rel}$ (and consequently $\IB^\mathrm{rel}$ through the bulk-edge correspondence) is nothing but the index $\II$ of a sharp interface between two samples ruled by $H_{\mathrm{E},1}$ and $H_{\mathrm{E},2}$, as illustrated in Fig.\,\ref{fig:duality}. Moreover a smooth gluing through $H_\mathrm{int}$ confined around the interface leads to the same index.

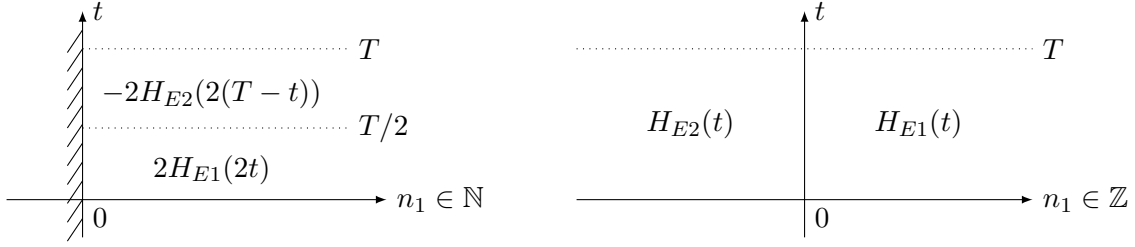
\begin{figure}[htb]
	\centering
	\begin{tikzpicture}
	\draw[-latex] (-1,0) -- (4,0) node[right]{$n_1 \in \mathbb N$};
	\draw[-latex] (0,-0.5)node[above right]{$0$} -- (0,2.5) node[right]{$t$};
	\foreach \k in {-1,...,9}
	{\draw (0,2.5/10*\k) -- (-0.2,2.5/10*\k-0.3);}
	\draw (1.7,1.4) node{$-2 H_{E2}(2(T-t))$}; \draw (1.7,0.4) node{$2 H_{E1}(2t)$};
	\draw[dotted] (0,0.95) -- (3.5,0.95) node[right]{$T/2$};
	\draw[dotted] (0,2) -- (3.5,2) node[right]{$T$};
	
	\begin{scope}[xshift=9.5cm]
	\draw[-latex] (-3,0) -- (3,0) node[right]{$n_1 \in \mathbb Z$};
	\draw[-latex] (0,-0.5)node[above right]{$0$} -- (0,2.5) node[right]{$t$};
	\draw (1.5,1) node{$ H_{E1}(t)$};
	\draw[dotted] (-3,2) -- (3,2) node[right]{$T$};
	\draw (-1.5,1) node{$ H_{E2}(t)$};
	\end{scope}
	\end{tikzpicture}

	\caption{Duality of space and time. Left: Relative evolution on a sample with an edge. Right: Evolution on the interface between two samples. \label{fig:duality}}
\end{figure}

\begin{rem}
	[Effective vacua] In the context of Thm.~\ref{thm:BE}, where $\HBo=\HB$ and $\HBt=\HB^\varepsilon$, we deduce $\IB(\varepsilon) = \IE(\varepsilon) = \II(\varepsilon)$. So the bulk index counts the number of topological edge modes appearing at the interface between the original and an effective sample ruled by $\HB^\varepsilon$. Hence the latter plays the role of a vacuum that selects the gap of $\UB(T)$ around $\ee^{- \ii T \varepsilon}$, in analogy with the choice of Fermi energy. 
	This vacuum depends on the system but is described by a time-independent and local dynamics, and there are as many distinct vacua as gaps in $\UB(T)$. By expanding the commutators in \eqref{IErel} we have
	\begin{equation}
	\IE(\varepsilon) = \Tr_{\HHE} \Big(\UE(T) \Lambda_2 \UE^*(T) - \ee^{-\ii T \HE^\varepsilon} \Lambda_2 \ee^{\ii T \HE^\varepsilon} \Big),
	\end{equation}
	so that the interpretation of Fig.\,\ref{fig:IE} still holds: $\IE(\varepsilon)$ measures the charge pumped in the upper quadrant, but relatively to the dynamics of the effective vacuum. This ensures that the index is well-defined and the pumping remains quantized. Finally if $\UB(T) = \idB$ then $\HB^\varepsilon = 0$ for every $0 < \varepsilon <2 \pi$ so the only effective vacuum is the usual one.
\end{rem}

\section{Properties of bulk and edge propagator \label{sec:timeevol}}

The indices are defined through trace expressions involving $\UB$ or $\UE$. In this section we study their properties and compare them. Before we recall a series of lemmas relating local and trace class operators through the notion of confinement \cite{ElgartGrafSchenker05}.

\subsection{Locality, confinement and switch functions \label{sec:local_conf_trace}}

In the following we say that $f : \mathbb Z^2$ or $\mathbb N \times \mathbb Z\rightarrow \mathbb R$ is a Lipschitz function (of constant 1) if it satisfies 
\begin{equation}\label{def_1Lip}
 \qquad |f(\vm)-f(\vn)| \leq |\vm - \vn|,\qquad \forall\, \vm,\vn \in \mathbb Z^2
\end{equation}
For $\lambda >0$ we denote by $\ee^{\lambda f}$ and $\ee^{-\lambda f}$ the multiplicative operators on $\HHB$ or $\HHE$.  We shall first rephrase the notion of locality appearing in Def.\,\ref{def:bulk_Hamiltonian}.

\begin{lem}\label{lem:H_expdecay}
	Let $A$ be a local operator on $\HHB$ or $\HHE$ with locality exponent $\mu>0$, then
	\begin{equation}
	\norm{A} \leq \norm{A}_\mu c(\mu)
	\end{equation}
	with $c(\mu) < \infty$ and 
	\begin{equation}
	 \forall\, 0 \leq \lambda < \mu \qquad  \Vert\ee^{- \lambda f} A \ee^{\lambda f}-  A\Vert \leq \norm{A}_\mu b(\lambda)  < \infty 
	\end{equation}
	for any Lipschitz function $f$ where $b(\lambda) \rightarrow 0$, $(\lambda \rightarrow 0)$.
\end{lem}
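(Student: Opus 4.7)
Both bounds reduce to Schur's test applied to the scalar kernel $k(\vm,\vn) := |A_{\vm,\vn}|$ (matrix-operator norm of the $N\times N$ block), using the hypothesis $k(\vm,\vn) \leq \norm{A}_\mu \, \ee^{-\mu|\vm-\vn|}$.

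For the first inequality, I would argue that for $\psi\in\HHB$ (or $\HHE$), one has $\norm{(A\psi)(\vm)} \leq \sum_\vn k(\vm,\vn)\,\norm{\psi(\vn)}$, so that the operator norm of $A$ is dominated by that of the scalar convolution operator with kernel $\ee^{-\mu|\vm-\vn|}$. Schur's lemma then gives
\begin{equation*}
\norm{A} \leq \norm{A}_\mu \sup_{\vm} \sum_{\vn} \ee^{-\mu|\vm-\vn|} = \norm{A}_\mu \Bigl(\sum_{p\in\mathbb Z}\ee^{-\mu|p|}\Bigr)^{2} = \norm{A}_\mu\,c(\mu),
\end{equation*}
with $c(\mu)=\bigl((1+\ee^{-\mu})/(1-\ee^{-\mu})\bigr)^{2}<\infty$. (In the edge case the sums over $\vn$ are only larger if we extend them to all of $\mathbb Z^2$, so the same bound applies.)

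For the second inequality, the kernel of $B_{\lambda,f}:=\ee^{-\lambda f}A\,\ee^{\lambda f}-A$ at $(\vm,\vn)$ is $\bigl(\ee^{-\lambda(f(\vm)-f(\vn))}-1\bigr)A_{\vm,\vn}$. Using $|\ee^{x}-1|\leq \ee^{|x|}-1$ for $x\in\mathbb R$ together with the Lipschitz bound \eqref{def_1Lip}, one has
\begin{equation*}
|(B_{\lambda,f})_{\vm,\vn}| \;\leq\; \norm{A}_\mu \bigl(\ee^{\lambda|\vm-\vn|}-1\bigr)\ee^{-\mu|\vm-\vn|}.
\end{equation*}
Applying Schur once more (the right-hand side is translation-invariant in $\vm-\vn$, hence gives a bound uniform in $f$),
\begin{equation*}
\norm{B_{\lambda,f}} \;\leq\; \norm{A}_\mu \sum_{\vp\in\mathbb Z^{2}}\bigl(\ee^{-(\mu-\lambda)|\vp|}-\ee^{-\mu|\vp|}\bigr)\;=:\;\norm{A}_\mu\,b(\lambda),
\end{equation*}
which is finite for $0\leq\lambda<\mu$. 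Finally, dominated convergence (the summand is bounded by $\ee^{-(\mu-\lambda_0)|\vp|}$ for any fixed $\lambda_0<\mu$, which is summable) shows $b(\lambda)\to 0$ as $\lambda\to 0$, since each summand tends to $0$ pointwise.

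\textbf{Main obstacle.} There is no real conceptual obstacle; the only delicate point is to ensure that the estimates are genuinely uniform in the Lipschitz function $f$ and that the multiplicative operators $\ee^{\pm\lambda f}$ (which may be unbounded) only appear sandwiching $A$, so that $B_{\lambda,f}$ makes sense as a bounded operator on $\HHB$ (or $\HHE$) through its kernel. Schur's test delivers both boundedness and the quantitative estimate in one stroke, making this essentially a one-pass computation.
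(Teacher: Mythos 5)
Your proof is correct and follows essentially the same route as the paper's: both invoke the Holmgren--Schur test on the scalar kernel $|A_{\vm,\vn}|$ for the first inequality, and both estimate the kernel of $\ee^{-\lambda f}A\,\ee^{\lambda f}-A$ via $|\ee^{a}-1|\leq\ee^{|a|}-1$ and the Lipschitz bound, concluding with dominated convergence for $b(\lambda)\to 0$. The only cosmetic difference is that you make explicit the product factorization of $c(\mu)$ over the two coordinate directions.
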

\begin{proof}
	We apply the Holmgren-Schur estimate
	\begin{equation}
	\lVert A \rVert \leq |||A||| \equiv \max \Big( \sup_{\vm \in \mathbb Z^2} \sum_{\vn\in \mathbb Z^2} |A_{\vm,\vn}| \, , \, \vm \leftrightarrow \vn \Big)
	\end{equation}
	and estimate $|||A||| \leq C c(\mu)$ with $c(\mu) = \sum_{\vn \in \mathbb Z^2} \ee^{-\mu |\vn|}$ for any $C$ as in the definition \eqref{local_norm} of the local norm. We then pass to the infimum over $C$. As for the second inequality we estimate
	\begin{align}
	 |(\ee^{- \lambda f} A \ee^{\lambda f}- A)_{\vm,\vn}| &  =  \big|\ee^{-\lambda f(\vm)}A_{\vm,\vn}\ee^{\lambda f(\vn)}- A_{\vm,\vn}\big| \cr
	& \leq  C \ee^{- \mu |\vm-\vn |}\big( \ee^{\lambda |\vm-\vn |}-1\big) 
	\end{align}
	where we have used  $|\ee^{a} - 1 | \leq \ee^{|a|}-1$ and \eqref{def_1Lip}, so as to obtain 
	\begin{equation}
	||| \ee^{- \lambda f} A \ee^{\lambda f}- A ||| \leq \norm{A}_\mu b(\lambda)
	\end{equation}
	with $b(\lambda) = \sum_{\vn\in \mathbb Z^2}  \ee^{- \mu |\vn |} \big( \ee^{\lambda |\vn|}-1\big) < \infty$. Finally $b(\lambda) \rightarrow 0$, ($\lambda \rightarrow 0$) by dominated convergence.
\end{proof}

\begin{cor}\label{cor:localbis}
	Let $A$ be a local operator on $\HHB$ or $\HHE$ with locality exponent $\mu$, then
	\begin{equation}\label{local2}
	\forall\, 0 \leq \lambda < \mu \qquad  \norm{\ee^{- \lambda f} A \ee^{\lambda f}} \leq B_\lambda \norm{A}_\mu 
	\end{equation}
	with $B_\lambda>0$ and for any Lipschitz function $f$. Conversely if $\ee^{- \lambda f} A \ee^{\lambda f}$ is bounded for some $\lambda>0$ and any $f$ Lipschitz then it is local, namely $\norm{A}_\lambda \leq \sup_f\norm{\ee^{-\lambda f}A\ee^{\lambda f}}$.
\end{cor}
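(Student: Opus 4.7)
The corollary has two directions, both of which should follow quickly from Lemma \ref{lem:H_expdecay} and the definition of the local norm, so I do not expect a serious obstacle.

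For the forward direction, I will simply add and subtract $A$ inside the operator norm:
\begin{equation*}
\norm{\ee^{-\lambda f}A\ee^{\lambda f}} \leq \norm{\ee^{-\lambda f}A\ee^{\lambda f} - A} + \norm{A}.
\end{equation*}
The first summand is bounded by $\norm{A}_\mu\, b(\lambda)$ via the second inequality of Lemma \ref{lem:H_expdecay}, and the second summand by $\norm{A}_\mu\, c(\mu)$ via the first inequality. Setting $B_\lambda = b(\lambda) + c(\mu)$ gives the result, uniformly over all Lipschitz $f$.

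For the converse, the key observation is that the matrix element of a bounded operator is controlled by the operator norm, namely $|K_{\vm,\vn}| \leq \norm{K}$ (viewed as an $N\times N$ block norm). Given $\vm,\vn$, the function $f_\vm(\vp) = |\vp - \vm|$ is $1$-Lipschitz by the triangle inequality, and a direct computation gives
\begin{equation*}
(\ee^{-\lambda f_\vm}A\ee^{\lambda f_\vm})_{\vm,\vn} = A_{\vm,\vn}\,\ee^{\lambda(|\vn-\vm|-0)} = A_{\vm,\vn}\,\ee^{\lambda |\vm-\vn|}.
\end{equation*}
Taking the block-operator norm on both sides yields
\begin{equation*}
|A_{\vm,\vn}|\,\ee^{\lambda |\vm-\vn|} \leq \norm{\ee^{-\lambda f_\vm}A\ee^{\lambda f_\vm}} \leq \sup_f \norm{\ee^{-\lambda f}A\ee^{\lambda f}},
\end{equation*}
where the supremum is over all Lipschitz functions of constant $1$. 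Rearranging and taking the infimum over constants $C$ as in \eqref{local_norm} gives $\norm{A}_\lambda \leq \sup_f \norm{\ee^{-\lambda f}A\ee^{\lambda f}}$ as claimed. In the time-dependent case this is applied pointwise in $t \in [0,T]$ and then the sup over $t$ is absorbed on both sides.

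The only slightly subtle point is the choice of Lipschitz function in the converse, but since the definition \eqref{def_1Lip} is symmetric in its two arguments and allows the function to depend on the fixed base point $\vm$, the choice $f_\vm(\vp) = |\vp-\vm|$ works uniformly. No homotopy or density argument is required.
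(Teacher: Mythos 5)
Your proof is correct and follows essentially the same route as the paper: the forward direction is the triangle inequality combined with the two estimates of Lemma \ref{lem:H_expdecay}, and the converse uses the same test Lipschitz function $f_\vm(\vp)=|\vp-\vm|$ together with the elementary bound $|A_{\vm,\vn}|\leq\norm{A}$ (which is precisely the Cauchy--Schwarz step the paper invokes). Nothing is missing.
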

\begin{proof}
	The first statement is an immediate consequence of the previous lemma and the triangle inequality. The second one is proved by taking for any $\vm,\vn \in \mathbb Z^2$, the Lipschitz function $f(\vp) = |\vp - \vm|$ leading to
	\begin{align}
	|A_{\vm,\vn}| &= | \langle \vm  |\ee^{\lambda f} \ee^{-\lambda f}   A\ee^{\lambda f} \ee^{-\lambda f} | \vn \rangle| 
	\leq \norm{\ee^{- \lambda f} A \ee^{\lambda f}} \ee^{-\lambda |\vm-\vn|},
	\end{align} 
	where we have used Cauchy-Schwarz inequality and the fact that $|m\rangle$ and $|n\rangle$ are normalized. 
\end{proof}

Thus locality of $A$ as by Rem.\,\ref{rem:topology} is equivalent to the boundedness of $\ee^{- \lambda f} A \ee^{\lambda f}$, up to a change from $\mu$ to $\lambda < \mu$. We then refine this notion by considering Lipschitz functions in direction $i$ as Lipschitz functions $f_i(\vn) =f(n_i)$. We observe that $f_1 + f_2$ is again Lipschitz, \textit{i.e.} with constant 1.

\begin{defn}
	A bounded operator $A$ on $\HHB$ or $\HHE$  is called, for $i,j = 1,2$, $i\neq j$:
	\begin{itemize}
		\item $i$-\textit{local} (or \textit{local in direction $i$}) if it exists $\lambda >0$ so that
		$\ee^{-\lambda f_i} A \ee^{\lambda f_i}$
		is bounded for any Lipschitz function $f_i$ in direction $i$.
		\item $i$-\textit{confined} (or \textit{confined in direction $i$}) if it exists $\lambda >0$ so that
		$A \ee^{\lambda |n_i|}$
		is bounded.
		\item simultaneously $i$-confined and $j$-local if it exists $\lambda >0$ so that
		$\ee^{-\lambda f_j} A\ee^{\lambda f_j} \ee^{\lambda |n_i|}$ is bounded.
	\end{itemize}
	The bounds are meant uniformly in $f_i$. The suprema provide norms associated with each property.
\end{defn}

For example, in analogy with Lem.\,\ref{lem:H_expdecay} and Cor.\,\ref{cor:localbis}, equation \eqref{pres_D} for $D$ and Def.\,\ref{def:interfaceH} of $H_\mathrm{int}$ means that these two operators are simultaneously 1-confined and 2-local. Another way to produce $i$-confined operators is to use switch functions from Def.\,\ref{def:switch}.

\begin{lem}\label{lem:Lambdaconfinement}
	Let $A$ be a local operator and $\Lambda_i$ a switch function in direction $i$. Then $[\Lambda_i, A]$ is simultaneously $i$-confined and $j$-local with corresponding norm bounded in terms of $\norm{A}_\mu$.
\end{lem}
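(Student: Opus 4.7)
The kernel of $[\Lambda_i,A]$ is $([\Lambda_i,A])_{\vm,\vn}=(\Lambda(m_i)-\Lambda(n_i))A_{\vm,\vn}$. I would exploit two features of switch functions: first, $\Lambda_i$ is a bounded, diagonal operator that depends only on the $i$-th coordinate; second, by Def.\,\ref{def:switch} there exists $M\geq 0$ (depending only on $\Lambda$) so that $\Lambda\equiv 0$ on $(-\infty,-M)$ and $\Lambda\equiv 1$ on $(M,\infty)$. Consequently the prefactor forces $\min(|m_i|,|n_i|)\leq M$ whenever it is nonzero, and the triangle inequality $|n_i|\leq |m_i|+|\vm-\vn|$ then yields
\[
\Lambda(m_i)-\Lambda(n_i)\neq 0 \implies \max(|m_i|,|n_i|)\leq M+|\vm-\vn|.
\]

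For the $j$-locality part, $\Lambda_i$ and $\ee^{\pm\lambda f_j}$ are both diagonal with arguments depending on disjoint coordinates, hence commute, and so
\[
\ee^{-\lambda f_j}[\Lambda_i,A]\ee^{\lambda f_j}=[\Lambda_i,\,\ee^{-\lambda f_j}A\ee^{\lambda f_j}].
\]
For $0\leq\lambda<\mu$, the right-hand factor is bounded by $B_\lambda\norm{A}_\mu$ uniformly in $f_j$ by Cor.\,\ref{cor:localbis}, so the commutator is bounded by $2\norm{\Lambda}_\infty B_\lambda\norm{A}_\mu$.

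For the $i$-confinement part, I would apply the Holmgren--Schur estimate used in Lem.\,\ref{lem:H_expdecay} to the kernel
\[
([\Lambda_i,A]\ee^{\lambda|n_i|})_{\vm,\vn}=(\Lambda(m_i)-\Lambda(n_i))A_{\vm,\vn}\ee^{\lambda|n_i|}.
\]
Using $|\Lambda(m_i)-\Lambda(n_i)|\leq 2\norm{\Lambda}_\infty$, $|A_{\vm,\vn}|\leq \norm{A}_\mu \ee^{-\mu|\vm-\vn|}$, and the implication above to write $\ee^{\lambda|n_i|}\leq \ee^{\lambda M}\ee^{\lambda|\vm-\vn|}$ wherever the prefactor does not vanish, each matrix element is bounded by $2\norm{\Lambda}_\infty \norm{A}_\mu \ee^{\lambda M}\ee^{-(\mu-\lambda)|\vm-\vn|}$ for any $0<\lambda<\mu$. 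Row and column sums are then uniformly controlled by $2\norm{\Lambda}_\infty \norm{A}_\mu \ee^{\lambda M}c(\mu-\lambda)$, giving the desired norm bound linear in $\norm{A}_\mu$.

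For the simultaneous bound I would combine the two arguments: setting $B=\ee^{-\lambda f_j}A\ee^{\lambda f_j}$ and using that $\Lambda_i$ and $\ee^{\lambda|n_i|}$ both commute with $\ee^{\lambda f_j}$ (they involve direction $i$ only, while $f_j$ involves direction $j$ only), the combined operator reads $[\Lambda_i,B]\ee^{\lambda|n_i|}$. Since $|f_j(\vn)-f_j(\vm)|\leq|\vm-\vn|$, a direct estimate on its kernel shows that $B$ is itself local with $\norm{B}_{\mu-\lambda}\leq\norm{A}_\mu$; applying the previous $i$-confinement step to $B$ with a new exponent $\lambda'<\mu-\lambda$ (for instance $\lambda'=\lambda$ provided $2\lambda<\mu$) yields the desired bound, linear in $\norm{A}_\mu$ and uniform in the Lipschitz function $f_j$. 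The main obstacle is the geometric implication at the start: it is precisely what allows the otherwise growing factor $\ee^{\lambda|n_i|}$ to be absorbed into the off-diagonal decay of $A$ at the cost of shrinking the locality exponent from $\mu$ to $\mu-\lambda$.
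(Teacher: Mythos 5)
Your proof is correct in outcome and takes a genuinely different route from the paper's, but one intermediate claim as stated is false and should be repaired.

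The claim that the prefactor $\Lambda(m_i)-\Lambda(n_i)\neq 0$ forces $\min(|m_i|,|n_i|)\leq M$ is wrong: with a step function at the origin, $m_i=5$, $n_i=-5$ gives a nonzero prefactor while $\min(|m_i|,|n_i|)=5>M$. What is true, and what you actually need, is that nonvanishing of $\Lambda(m_i)-\Lambda(n_i)$ forces some integer $p$ lying between $m_i$ and $n_i$ to satisfy $|p|\leq M$ (a point where $\Lambda$ is not locally constant), whence $|m_i|\leq |m_i-p|+|p|\leq |m_i-n_i|+M\leq |\vm-\vn|+M$ and similarly for $|n_i|$. So the conclusion $\max(|m_i|,|n_i|)\leq M+|\vm-\vn|$ survives, just not via the shortcut you gave.

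As for the approach: you work directly with the kernel of $[\Lambda_i,A]\ee^{\lambda|n_i|}$ and absorb the growing weight into the off-diagonal decay via the geometric observation above. The paper instead uses the operator identity
\[
[\Lambda_i,A]=\Lambda_i A(1-\Lambda_i)-(1-\Lambda_i)A\Lambda_i
\]
and, after conjugation, groups the factors as $\Lambda_i\ee^{-\lambda n_i}\cdot\ee^{-\lambda(f_j-n_i)}A\ee^{\lambda(f_j-n_i)}\cdot(1-\Lambda_i)\ee^{\lambda(n_i+|n_i|)}$. The key gain is that $f_j-n_i$ is again a single Lipschitz function of constant $1$ with respect to $|\vm|=|m_1|+|m_2|$, so the middle factor is controlled by Cor.\,\ref{cor:localbis} for any $\lambda<\mu$, while the two outer factors are bounded by the support properties of $\Lambda_i$. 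Your route, by contrast, pays for the $f_j$-conjugation and the $|n_i|$-weight separately and so shrinks the locality exponent twice, ending at $\lambda<\mu/2$; the paper's algebraic regrouping pays only once and keeps the full range $\lambda<\mu$. Both yield the lemma as stated, since only existence of some $\lambda>0$ is required, but the paper's bound is sharper and arguably cleaner in that it bypasses the kernel-level case analysis on the support of the switch.
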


\begin{proof}
	We rewrite 
	\begin{align}\label{compute_lem_icjl}
	\ee^{-\lambda f_j}[\Lambda_i,A]\ee^{\lambda f_j}\ee^{\lambda |n_i|} 
	= \ee^{-\lambda f_j}\big(\Lambda_i A (1-\Lambda_i) - (1-\Lambda_i)A \Lambda_i\big) \ee^{\lambda f_j}\ee^{\lambda |n_i|}
	\end{align}
	 The first term reads
	\begin{equation}
	\ee^{-\lambda f_j} \Lambda_i A (1-\Lambda_i) \ee^{\lambda f_j} \ee^{\lambda |n_i|}   = \Lambda_i \ee^{-\lambda n_i} \cdot \ee^{-\lambda(f_j-n_i)} A \ee^{\lambda(f_j-n_i)} \cdot (1-\Lambda_i) \ee^{\lambda(n_i+|n_i|)}
	\end{equation}
	
	The middle factor is bounded by \eqref{local2} and the other two are by the support property of the switch function. The second term in \eqref{compute_lem_icjl} is similarly bounded. 
\end{proof}

\begin{lem}\label{lem_icjl_traceclass}
	For $i=1,2$ and $j\neq i$ let $A_i$ be simultaneously $i$-confined and $j$-local, and $\Lambda_j$ a switch function in direction $j$. Then 
$
	[\Lambda_j, A_i] $ and $  A_i A_j 
$
	are trace class with matching bonds on the norms. 
\end{lem}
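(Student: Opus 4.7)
The strategy is to reduce both claims to a single ``bi-confinement'' property: an operator $X$ satisfies $X\,\ee^{\lambda|n_1|}\ee^{\lambda|n_2|}$ bounded for some $\lambda>0$. Once that is in hand, trace class follows at once from the factorization
\begin{equation*}
X \,=\, \bigl(X\,\ee^{\lambda|n_1|}\ee^{\lambda|n_2|}\bigr)\cdot \ee^{-\lambda|n_1|}\,\ee^{-\lambda|n_2|},
\end{equation*}
where the first factor is bounded and the second is a trace-class multiplication operator on $\HHB$ (and likewise on $\HHE$), since $\sum_{\vn}\ee^{-\lambda(|n_1|+|n_2|)}<\infty$. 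The resulting trace norm is then controlled by the product of the two factors' norms, which yields the matching bounds asserted in the statement.

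For $A_iA_j$, I would use the simple decomposition
\begin{equation*}
A_iA_j\,\ee^{\lambda|n_i|}\ee^{\lambda|n_j|} \,=\, \bigl(A_i\,\ee^{\lambda|n_i|}\bigr)\cdot\bigl(\ee^{-\lambda|n_i|}A_j\,\ee^{\lambda|n_i|}\ee^{\lambda|n_j|}\bigr).
\end{equation*}
The left factor is bounded by the $i$-confinement of $A_i$. The right factor is precisely the quantity appearing in the simultaneous $i$-local and $j$-confined norm of $A_j$, evaluated on the $1$-Lipschitz function $f_i(\vn)=|n_i|$, hence bounded as well.

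For $[\Lambda_j,A_i]$, I would expand
\begin{equation*}
[\Lambda_j,A_i] \,=\, \Lambda_j A_i(1-\Lambda_j) - (1-\Lambda_j)A_i\Lambda_j,
\end{equation*}
multiply by $\ee^{\lambda|n_i|}\ee^{\lambda|n_j|}$, and exploit that $\Lambda_j$ commutes with $\ee^{\lambda|n_i|}$. Mimicking the insertion trick from the proof of Lem~\ref{lem:Lambdaconfinement}, the first piece rewrites as
\begin{equation*}
\bigl(\Lambda_j\,\ee^{-\lambda n_j}\bigr)\cdot\bigl(\ee^{\lambda n_j}A_i\,\ee^{-\lambda n_j}\ee^{\lambda|n_i|}\bigr)\cdot\bigl((1-\Lambda_j)\,\ee^{\lambda n_j}\ee^{\lambda|n_j|}\bigr),
\end{equation*}
whose outer factors are bounded by the support properties of the switch function (in particular $n_j+|n_j|=0$ for $n_j\le 0$), while the middle factor is bounded by the simultaneous $i$-confinement and $j$-locality of $A_i$ with $f_j(\vn)=-n_j$; the second piece is handled symmetrically. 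The only subtlety---and what I expect to be the main obstacle---is keeping the two types of exponentials disentangled, namely the $\ee^{\pm\lambda n_j}$ from the switch-function argument and the $\ee^{\lambda|n_i|}$ providing $i$-confinement; this goes through cleanly because $|n_i|$ and $n_j$ depend on disjoint coordinates, so their multiplication operators commute and no cross terms appear.
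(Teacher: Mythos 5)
Your proposal is correct and follows essentially the same route as the paper: peel off $\ee^{\lambda|n_i|}\ee^{\lambda|n_j|}$ on the right, show the remaining factor is bounded via the switch-function support trick of Lem.~\ref{lem:Lambdaconfinement} (for $[\Lambda_j,A_i]$) and via the simultaneous confinement/locality bound with $f_i=|n_i|$ (for $A_iA_j$), then conclude trace class from the trace-class multiplication operator $\ee^{-\lambda|n_i|}\ee^{-\lambda|n_j|}$. You merely spell out the intermediate factorization that the paper summarizes as ``by the same arguments,'' so the two proofs are the same in substance.
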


\begin{proof}
	Similarly to the previous proof we write
	\begin{equation}
	[\Lambda_j, A_i] \ee^{\lambda|n_i|} \ee^{\lambda |n_j|} = \Lambda_j  A_i  \ee^{\lambda|n_i|} (1-\Lambda_j) \ee^{\lambda |n_j|} - (1-\Lambda_j)  A_i  \ee^{\lambda|n_i|} \Lambda_j \ee^{\lambda |n_j|}
	\end{equation}
	and see that each term is bounded by the same arguments. Then we write
	\begin{equation}
	[\Lambda_j, A_i] = [\Lambda_j, A_i] \ee^{\lambda|n_i|} \ee^{\lambda |n_j|} \cdot \ee^{-\lambda|n_i|} \ee^{-\lambda |n_j|}.
	\end{equation}
	The second factor is obviously trace class, and so is $[\Lambda_j, A_i]$. Similarly
	\begin{equation}
	A_i A_j  \ee^{\lambda|n_i|} \ee^{\lambda |n_j|} = A_i\ee^{\lambda|n_i|} \cdot \ee^{-\lambda|n_i|} A_j  \ee^{\lambda|n_i|} \ee^{\lambda |n_j|}
	\end{equation}
	is bounded, so that $A_i A_j$ is trace class.
\end{proof}
This lemma, combined with the previous one, will be of particular interest when $A_i = [\Lambda_i, A]$ for $A$ local or $A_i = D$ from Prop.\,\ref{prop:defD} below. We finally need:
\begin{lem}
	\label{lem:com_switch_0}
	Let $\Lambda_i$ be a switch function in direction $i=1,2$ and $A$ an operator such that $[\Lambda_i, A]$ is trace class. Then
	\begin{equation}
	\Tr \big([\Lambda_i, A]\big) = 0.
	\end{equation}
\end{lem}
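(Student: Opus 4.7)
The plan is to compute $\Tr([\Lambda_i, A])$ directly in the position-internal basis $\{|\vm\rangle \otimes |\alpha\rangle : \vm \in \mathbb Z^2 \text{ (or } \mathbb N \times \mathbb Z\text{)},\ \alpha = 1,\ldots,N\}$, in which $\Lambda_i$ is diagonal. First I would invoke the standard fact that for any trace class operator $T$ on a separable Hilbert space and any orthonormal basis $\{e_k\}$, one has $\Tr(T) = \sum_k \langle e_k , T e_k\rangle$ with absolute convergence. By hypothesis this applies to $T = [\Lambda_i, A]$.

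Next I would exploit that $\Lambda_i$ acts on $|\vm\rangle \otimes \mathbb C^N$ by multiplication by the scalar $\Lambda(m_i)$, hence is diagonal with respect to the above basis. It follows that for each $\vm$ and each internal index $\alpha$,
\begin{equation*}
\langle \vm, \alpha | \Lambda_i A | \vm, \alpha \rangle \;=\; \Lambda(m_i)\, \langle \vm, \alpha | A | \vm, \alpha \rangle \;=\; \langle \vm, \alpha | A \Lambda_i | \vm, \alpha \rangle,
\end{equation*}
where the last step uses that the scalar $\Lambda(m_i)$ commutes with everything on the right. Hence every diagonal matrix element of $[\Lambda_i, A]$ vanishes, and summing over $(\vm,\alpha)$ gives $\Tr([\Lambda_i, A]) = 0$.

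There is essentially no obstacle beyond keeping track of what is and is not trace class. In particular, one should \emph{not} try to write $\Tr([\Lambda_i,A]) = \Tr(\Lambda_i A) - \Tr(A\Lambda_i)$ and appeal to cyclicity: neither of $\Lambda_i A$ nor $A\Lambda_i$ needs to be trace class in isolation (the switch function is not even compactly supported), so each summand in this splitting is a priori meaningless. The argument above sidesteps this precisely because the diagonal cancellation happens matrix element by matrix element before any sum is performed; the trace class hypothesis on the full commutator is used only to justify that $\Tr$ coincides with the diagonal sum in the chosen basis.
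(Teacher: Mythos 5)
Your proof is essentially identical to the paper's: both compute the trace via the diagonal kernel in the position basis, observe that $\Lambda_i$ is diagonal there so each diagonal matrix element of the commutator vanishes identically, and both explicitly warn that one cannot split the trace as $\Tr(\Lambda_i A) - \Tr(A\Lambda_i)$ since the individual pieces need not be trace class. Your version is marginally more explicit about the internal index $\alpha$ and about invoking the basis-independence of the trace for trace-class operators, but the argument is the same.
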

\begin{proof}
	Note that only $[\Lambda_i, A]$ is trace class so that we cannot open the commutator and separate the traces. However this allows to compute the trace through the diagonal kernel (take $\mathcal H = \ell^{2}(\mathbb Z^2)$ for concreteness)
	\begin{equation}
	\Tr \big([\Lambda_i, A]\big) = \sum_{\vn \in \mathbb Z^2} ([\Lambda_i, A])_{\vn,\vn} =\sum_{\vn \in \mathbb Z^2} \Lambda(n_i) A_{\vn,\vn} - A_{\vn,\vn}\Lambda(n_i) = 0.
	\end{equation}
	This result  is the analogue of a vanishing integral of a total derivative.
\end{proof}

\subsection{Locality and continuity of propagators \label{sec:localU}}

In this section we show that the propagator is local when the Hamiltonian is, and that the propagator is continuous in the Hamiltonian with respect to the local norm.

\begin{prop}\label{prop:U_expdecay}
	Let $\HB$ be a bulk Hamiltonian in the sense of Def.\,\ref{def:bulk_Hamiltonian}, $\mu$ its locality exponent, and $\UB$ the corresponding unitary propagator. Then
	\begin{equation}
	\forall \, t \in [0,T] \quad \forall\, 0 \leq \lambda < \mu \qquad \norm{\ee^{- \lambda f} \UB(t) \ee^{\lambda f}- \UB(t)} \leq  \alpha_\lambda
	\end{equation}
	for any Lipschitz function $f$ on $\mathbb Z^2$, and with $\alpha_\lambda$ independent of $t$ and $\alpha_\lambda \rightarrow 0$, ($\lambda \rightarrow 0$).
\end{prop}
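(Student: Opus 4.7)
\medskip
\noindent\textbf{Proof plan.} The strategy is to conjugate the propagator and derive a Duhamel-type identity. Define
\begin{equation*}
V(t) = \ee^{-\lambda f} \UB(t) \ee^{\lambda f},\qquad H_\lambda(t) = \ee^{-\lambda f} \HB(t) \ee^{\lambda f}.
\end{equation*}
A direct computation shows $\ii \partial_t V = H_\lambda V$ with $V(0) = \Id$, so that $V(t) = U_\lambda(t,0)$ is the (non-unitary) propagator generated by $H_\lambda$. By Cor.~\ref{cor:localbis} applied to $A = \HB(t)$, the perturbation $R_\lambda(t) = H_\lambda(t) - \HB(t)$ is bounded uniformly in $t \in [0,T]$ and in the Lipschitz function $f$, with
\begin{equation*}
\norm{R_\lambda(t)} \leq \norm{\HB}_\mu \, b(\lambda), \qquad b(\lambda) \to 0 \quad (\lambda \to 0),
\end{equation*}
(this is exactly the content of Lem.~\ref{lem:H_expdecay} extended to $\norm{\cdot}$).

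The first step is to bound $\norm{U_\lambda(t,s)}$ despite the loss of unitarity. Writing $H_\lambda = \HB + R_\lambda$ with $\HB$ self-adjoint and differentiating $\norm{U_\lambda(t,s)\psi}^2$ gives
\begin{equation*}
\partial_t \norm{U_\lambda(t,s)\psi}^2 = 2 \operatorname{Im} \langle R_\lambda(t) U_\lambda(t,s)\psi, U_\lambda(t,s)\psi\rangle \leq 2 \norm{R_\lambda(t)} \norm{U_\lambda(t,s)\psi}^2,
\end{equation*}
so Gr\"onwall yields $\norm{U_\lambda(t,s)} \leq \ee^{\,T \norm{\HB}_\mu b(\lambda)}$ uniformly for $0 \leq s \leq t \leq T$.

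The second step is Duhamel's formula. Setting $\Phi(t) = V(t) - \UB(t)$ one finds $\Phi(0)=0$ and $\ii \partial_t \Phi = H_\lambda \Phi + R_\lambda \UB$, so
\begin{equation*}
\Phi(t) = -\ii \int_0^t U_\lambda(t,s)\, R_\lambda(s)\, \UB(s)\, \dd s .
\end{equation*}
Using $\norm{\UB(s)}=1$ and the two estimates above,
\begin{equation*}
\norm{\Phi(t)} \leq T\, \ee^{T \norm{\HB}_\mu b(\lambda)} \norm{\HB}_\mu b(\lambda) \equiv \alpha_\lambda,
\end{equation*}
which is independent of $t$ and of $f$, and satisfies $\alpha_\lambda \to 0$ as $\lambda \to 0$.

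The main subtlety is the non-self-adjointness of $H_\lambda$, which forces the Gr\"onwall detour above rather than the trivial unitary bound; this is where the restriction $\lambda < \mu$ enters (via $b(\lambda) < \infty$). A secondary technical point is the piecewise strong continuity of $\HB$: on each interval of continuity the Schr\"odinger equation for $V$ and the Duhamel identity hold in the standard sense, and the pieces are glued using \eqref{prop_propag}, so the argument and the bound $\alpha_\lambda$ are insensitive to jump discontinuities.
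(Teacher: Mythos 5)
Your argument is essentially correct in spirit, and it takes a route that is different from — and in one respect sharper than — the paper's.  The paper differentiates the difference $V_n(t)-\UB(t)$ directly and applies Gr\"onwall, picking up an exponent $\norm{\HB}_\mu\bigl(c(\mu)+b(\lambda)\bigr)T$; you instead exploit the self-adjointness of $\HB$ to bound the growth of the non-unitary conjugated propagator $U_\lambda$ \emph{first} (only $R_\lambda$ contributes to $\partial_t\norm{U_\lambda\psi}^2$), and then use Duhamel.  This removes the $c(\mu)$ from the exponent and gives the cleaner constant $\alpha_\lambda = T\,\norm{\HB}_\mu b(\lambda)\,\ee^{T\norm{\HB}_\mu b(\lambda)}$.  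That is a genuinely nicer bound.  (Also, the inequality you invoke comes from Lem.~\ref{lem:H_expdecay}, not Cor.~\ref{cor:localbis}; Cor.~\ref{cor:localbis} bounds $\norm{\ee^{-\lambda f}A\ee^{\lambda f}}$, not the \emph{difference}.)

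However, there is a technical gap that you do not address: $\ee^{\pm\lambda f}$ are \emph{unbounded} multiplication operators when $f$ is an unbounded Lipschitz function (e.g.\ $f(\vp)=|\vp-\vm|$, exactly the case needed in Cor.~\ref{cor:localbis}).  You write ``define $V(t)=\ee^{-\lambda f}\UB(t)\ee^{\lambda f}$'' and then assert ``a direct computation shows $\ii\partial_t V = H_\lambda V$ with $V(0)=\Id$.''  But it is not a priori clear that $V(t)$ is densely defined, let alone bounded or differentiable: one must check that $\UB(t)$ maps the domain of $\ee^{\lambda f}$ into the domain of $\ee^{-\lambda f}$, which is essentially the content of the proposition being proved.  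The Duhamel identity and the propagator identification $V(t)=U_\lambda(t,0)$ inherit the same circularity.  The paper sidesteps this by replacing $f$ with the bounded truncations $f_n = nf/(n+|f|)$, which are again Lipschitz with the same constant: $\ee^{\pm\lambda f_n}$ are then bounded and invertible, the computation is legitimate, and the bound obtained is uniform in $n$, so one can pass to the strong limit at the end.  Your entire argument goes through verbatim with $f$ replaced by $f_n$ throughout (including the sharper Gr\"onwall step, since $R_\lambda^{(n)} = \ee^{-\lambda f_n}\HB\ee^{\lambda f_n}-\HB$ enjoys the same uniform bound by Lem.~\ref{lem:H_expdecay}); you then conclude exactly as the paper does by letting $n\to\infty$.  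So the fix is routine, but as written the proof presupposes precisely what it sets out to establish.
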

\begin{proof}
	In order to work with bounded operators we define for a Lipschitz function $f$ its bounded version $f_n = n f/(n+|f|)$ for $n \in\mathbb N$, which is again Lipschitz and so that $f_n \rightarrow f$ when $n \rightarrow \infty$. Define 
	\begin{equation}
	V_n(t) = \ee^{-\lambda f_n} \UB(t) \ee^{\lambda f_n} - \UB(t),
	\end{equation}
	which is bounded for every $n \in \mathbb N$, with $V_n(0)=0$. Thus
	\begin{equation}
\norm{V_n(t)} \leq \int_0^t \norm{\dfrac{\dd V_n(s)}{\dd s}} \dd s.
 	\end{equation}
	 Using that $\UB(t)\varphi$ satisfies Schr\"odinger equation \eqref{Schrodinger_equation} for every $\varphi \in \HHB$,
	we deduce
	\begin{align}
	\norm{\dfrac{\dd V_n(s)}{\dd s}}
	 & = \left\lVert \ee^{-\lambda f_n} \HB(s)\UB(s) \ee^{\lambda f_n} - \HB(s) \UB(s) \right\rVert \cr
	& = \left\lVert \big( \ee^{-\lambda f_n} \HB(s) \ee^{\lambda f_n} - \HB(s) \big)\ee^{-\lambda f_n}\UB(s)\ee^{\lambda f_n} \right.\cr
	&\quad\left.+ \HB(s) \big(\ee^{-\lambda f_n}\UB(s)\ee^{\lambda f_n} - \UB(s) \big) \right\rVert. 
	\end{align}
	Since $\HB$ is local it is bounded uniformly in time by Lem.\,\ref{lem:H_expdecay} and $\norm{\HB(s)} \leq \norm{\HB}_\mu c(\mu)$, 
	we furthermore deduce
	\begin{align}
	\norm{V_n(t) }
	& \leq \int_0^t \big(\norm{\HB}_\mu (c(\mu)+b(\lambda)) \norm{V_n(s)} + \norm{\HB}_\mu   b(\lambda) \big) \dd s,
	\end{align}
	where $c(\mu)$ and $b(\lambda)$ are independent of $t$ and $n$ and $b(\lambda) \rightarrow 0$ when $\lambda \rightarrow 0$. We then get by application of Gr\"onwall Lemma and the fact that $e^x-1\leq x \ee^x$ for $x>0$
	\begin{align}
	\Vert V_n(t) \Vert &\leq \norm{\HB}_\mu   b(\lambda) t \ee^{\norm{\HB}_\mu (c(\mu) +b(\lambda)) t} \cr
	 &\leq \norm{\HB}_\mu  b(\lambda) T \ee^{\norm{\HB}_\mu (c(\mu) +b(\lambda)) T}  \equiv  \alpha_\lambda
	\end{align}
	uniformly in $t \in [0,T]$ and $n \in \mathbb N$. Thus $V = \ee^{-\lambda f} \UB \ee^{\lambda f} - \UB$ is well-defined and $V_n(t)$ converges strongly to $V(t)$. In particular $\Vert V(t) \Vert \leq \alpha_\lambda$, and from the previous expression we see that $\alpha_\lambda \rightarrow 0$ when $\lambda \rightarrow 0$.
\end{proof}

By Cor.\,\ref{cor:localbis}, this last proposition shows that $\UB$ is local for any exponent $\lambda<\mu$. Furthermore we have:

\begin{prop}\label{cor:U_continuity} Let $\HB$ be a bulk Hamiltonian in the sense of Def.\,\ref{def:bulk_Hamiltonian}, with locality exponent $\mu$. Then the map $\HB \mapsto \UB$ is continuous (uniformly in time) with the respective norms $\norm{\cdot}_\mu$ and $\norm{\cdot}_\lambda$ for $\lambda < \mu$.
\end{prop}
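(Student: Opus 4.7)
The plan is to derive a standard Duhamel-type integral identity relating the difference of two propagators to the difference of their generators, then conjugate by $\ee^{\pm\lambda f}$ for an arbitrary Lipschitz function $f$ and bound each factor using Prop.\,\ref{prop:U_expdecay} and Cor.\,\ref{cor:localbis}. Concretely, let $\HB^{(1)},\HB^{(2)}$ be two bulk Hamiltonians sharing a locality exponent $\mu$, and $\UB^{(1)},\UB^{(2)}$ the corresponding propagators. Computing $\partial_s\bigl(\UB^{(1)}(t,s)\UB^{(2)}(s,0)\bigr)$ from the Schr\"odinger equation and integrating from $0$ to $t$ yields
$$\UB^{(2)}(t) - \UB^{(1)}(t) = \ii\int_0^t \UB^{(1)}(t,s)\bigl(\HB^{(1)}(s)-\HB^{(2)}(s)\bigr)\UB^{(2)}(s)\,\dd s.$$

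Next, for any Lipschitz $f:\mathbb Z^2\to\mathbb R$ and any $0\le\lambda<\mu$, I conjugate this identity by $\ee^{-\lambda f}(\cdot)\ee^{\lambda f}$, inserting $\ee^{\lambda f}\ee^{-\lambda f}=\Id$ between successive factors so that each operator in the integrand appears in its conjugated form. Two ingredients then control the right-hand side. First, by Prop.\,\ref{prop:U_expdecay}, each $\ee^{-\lambda f}\UB^{(i)}(r)\ee^{\lambda f}$ is bounded uniformly in $r\in[0,T]$ and in $f$; the same holds for the two-parameter propagator $\UB^{(1)}(t,s)=\UB^{(1)}(t)\UB^{(1)}(s)^*$ by applying the proposition to both $f$ and $-f$ (noting that $\UB^{(1)}(s)^*$ satisfies an analogous Schr\"odinger equation). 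Second, by Cor.\,\ref{cor:localbis} applied to $A=\HB^{(1)}(s)-\HB^{(2)}(s)$, we have
$$\norm{\ee^{-\lambda f}\bigl(\HB^{(1)}(s)-\HB^{(2)}(s)\bigr)\ee^{\lambda f}} \le B_\lambda \norm{\HB^{(1)}-\HB^{(2)}}_\mu,$$
uniformly in $s$ and $f$. Combining these estimates and integrating over $s\in[0,t]$ gives
$$\norm{\ee^{-\lambda f}\bigl(\UB^{(1)}(t)-\UB^{(2)}(t)\bigr)\ee^{\lambda f}} \le C\,\norm{\HB^{(1)}-\HB^{(2)}}_\mu,$$
uniformly in $t\in[0,T]$ and in $f$, where $C$ depends only on $T,\mu,\lambda$ and the local norms $\norm{\HB^{(i)}}_\mu$. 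Taking the supremum over Lipschitz $f$ and invoking the converse half of Cor.\,\ref{cor:localbis} yields $\norm{\UB^{(1)}(t)-\UB^{(2)}(t)}_\lambda \le C\norm{\HB^{(1)}-\HB^{(2)}}_\mu$, which is precisely the asserted continuity (in fact local Lipschitz continuity).

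The only real subtlety, and the main technical point to watch, is the loss of exponent $\mu\to\lambda<\mu$: it is already forced by Prop.\,\ref{prop:U_expdecay} in order to obtain a uniform bound on the conjugated propagator, and there is no way to avoid it within this framework. Beyond this, the argument is a routine Duhamel estimate; the only nonstandard feature is that every operator appearing is conjugated by $\ee^{\pm\lambda f}$, and I expect the key bookkeeping step to be verifying that all bounds are genuinely uniform in $f$ — a property which, however, is automatic from the way Prop.\,\ref{prop:U_expdecay} and Cor.\,\ref{cor:localbis} are stated.
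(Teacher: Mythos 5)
Your proof is correct, and it takes a route that is noticeably different from the paper's. The paper works directly from the derivative of the \emph{one-parameter} propagators: it writes $\partial_t(\UBo-\UBt)=-\ii(\HBo\UBo-\HBt\UBt)$, decomposes $\HBo\UBo-\HBt\UBt=\HBo(\UBo-\UBt)+(\HBo-\HBt)\UBt$, and obtains a recursive integral inequality for $\norm{\ee^{-\lambda f}(\UBo-\UBt)(t)\ee^{\lambda f}}$ that is then closed by Gr\"onwall, exactly as in the proof of Prop.\,\ref{prop:U_expdecay}. You instead invoke the closed-form Duhamel identity, which collapses the self-referential term and eliminates the need for Gr\"onwall. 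The price you pay is that you must control the conjugated \emph{two-parameter} propagator $\ee^{-\lambda f}\UB^{(1)}(t,s)\ee^{\lambda f}$; you handle this correctly by writing $\UB^{(1)}(t,s)=\UB^{(1)}(t)\UB^{(1)}(s)^*$ and observing that $\ee^{-\lambda f}\UB^{(1)}(s)^*\ee^{\lambda f}=(\ee^{-\lambda(-f)}\UB^{(1)}(s)\ee^{\lambda(-f)})^*$, so Prop.\,\ref{prop:U_expdecay} applies with the Lipschitz function $-f$. Both proofs yield local Lipschitz continuity with the same kind of constant $\sim T\ee^{cT}$; the paper's version is more self-contained (it never needs the two-parameter propagator or its adjoint), while yours is arguably cleaner in that it avoids the Gr\"onwall step and makes the linear dependence on $\norm{\HBo-\HBt}_\mu$ immediate. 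Your concluding remark about the unavoidable loss of exponent $\mu\to\lambda$ is also accurate and consistent with how the statement is phrased.
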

\begin{proof}
	Let $\HBo$ and $\HBt$ be two bulk Hamiltonians and $\UBo$, $\UBt$ their propagators. For $\lambda<\mu$ and $f$ a Lipschitz function we compute
	\begin{align}
	\norm{\ee^{-\lambda f} (\UBo-\UBt)(t) \ee^{\lambda f}} &\leq \int_0^t \dd s \norm{\ee^{-\lambda f} (\HBo\UBo-\HBt\UBt)(s) \ee^{\lambda f}}\cr
	& \leq \int_0^t \dd s \Big( B \norm{\HBo}_\mu \norm{\ee^{-\lambda f} (\UBo-\UBt)(s) \ee^{\lambda f}} \cr
	&\quad+ B\norm{\HBo-\HBt}_\mu \norm{\ee^{-\lambda f} \UBt(s) \ee^{\lambda f}} \Big)
	\end{align}
	where we have used Cor.\,\ref{cor:localbis} for $\HBo$ and $\HBo-\HBt$ with some constant $B >0$. Adding $\UBt-\UBt$ in the last norm, by Prop.\,\ref{prop:U_expdecay} and Gr\"onwall inequality as in the previous proof we deduce
	\begin{equation}
	\norm{\ee^{-\lambda f} (\UBo-\UBt)(t) \ee^{-\lambda f}} \leq B'\norm{\HBo-\HBt}_\mu (1+ \alpha_\lambda) T \ee^{B \norm{\HBo}_\mu T}
	\end{equation}
	uniformly in time so that $\norm{\UBo-\UBt}_\lambda$ is similarly bounded.
\end{proof}

Finally note that all this discussion naturally extends to the edge Hilbert space $\HHE$ since $\HE$ satisfies similar properties of $\HB$ by construction, thus all the proofs remain unchanged.

\begin{cor}\label{cor:UE_expdecay}
	Let $\HB$ be a bulk Hamiltonian in the sense of Def.\,\ref{def:bulk_Hamiltonian}, $\HE$ the corresponding edge Hamiltonian defined in \ref{def:edge_Hamiltonian} and $\UE$ the associated propagator. Then Lem.\,\ref{lem:H_expdecay} holds for $H_E$ and Prop.\,\ref{prop:U_expdecay} and \ref{cor:U_continuity} hold for $U_E$.
\end{cor}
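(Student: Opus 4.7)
The plan is to transfer the proofs of Lem.\,\ref{lem:H_expdecay}, Prop.\,\ref{prop:U_expdecay}, and Prop.\,\ref{cor:U_continuity} essentially verbatim to the edge setting, after checking that every ingredient used in the bulk proofs is available for $\HE$ and $\UE$. The guiding observation is that the bulk arguments only exploit locality of the Hamiltonian, the Schr\"odinger equation satisfied by the propagator, the Holmgren--Schur bound, and Gr\"onwall's inequality, none of which are tied to the specific geometry of $\mathbb Z^2$.

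First I would record that $\HE$ inherits the locality bound with the same exponent. Since $\HE(t)_{\vm,\vn} = \HB(t)_{\vm,\vn}$ for $\vm,\vn \in \mathbb N \times \mathbb Z$, Def.\,\ref{def:edge_Hamiltonian} already yields $\norm{\HE}_\mu \leq \norm{\HB}_\mu$. Lipschitz functions in the sense of \eqref{def_1Lip} and switch functions in the sense of Def.\,\ref{def:switch} on $\mathbb N \times \mathbb Z$ are defined in the obvious way (e.g.\ by restriction of a Lipschitz function on $\mathbb Z^2$), and the Holmgren--Schur estimate applies to operator kernels on $\ell^2$ of any countable set, with the sums $c(\mu)$ and $b(\lambda)$ over $\mathbb N \times \mathbb Z$ dominated by the corresponding sums over $\mathbb Z^2$. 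Together this gives Lem.\,\ref{lem:H_expdecay} for $\HE$ with constants no worse than those for $\HB$.

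Next I would repeat the construction of Prop.\,\ref{prop:U_expdecay} with $\UE$ in place of $\UB$. For a Lipschitz $f$ on $\mathbb N \times \mathbb Z$ and its bounded truncations $f_n$, the operator $V_n(t) = \ee^{-\lambda f_n} \UE(t) \ee^{\lambda f_n} - \UE(t)$ is differentiable because $f_n$ is bounded, and $V_n(0) = 0$. Using the Schr\"odinger equation $\ii \partial_t \UE = \HE \UE$ and the edge version of Lem.\,\ref{lem:H_expdecay}, the same manipulation as in the bulk produces an integral inequality of the form $\norm{V_n(t)} \leq \norm{\HE}_\mu \int_0^t \big( (c(\mu)+b(\lambda))\norm{V_n(s)} + b(\lambda) \big) \dd s$, and Gr\"onwall closes the estimate with a constant $\alpha_\lambda \to 0$ as $\lambda \to 0$, uniformly in $t$ and $n$. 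Passing $n \to \infty$ strongly gives the bound for $f$ itself. The proof of Prop.\,\ref{cor:U_continuity} transfers in the same way: the Duhamel identity for $U_{\mathrm{E},1} - U_{\mathrm{E},2}$ is purely algebraic, the bounds it requires on $\norm{\HE_i}_\mu$ and $\norm{\ee^{-\lambda f}\UE_i \ee^{\lambda f}}$ are provided by the two results just established, and Gr\"onwall concludes.

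I do not expect any real obstacle: the whole point is that the bulk proofs never use translation invariance or the global structure of $\mathbb Z^2$. The one subtlety to keep in mind is that $\UE$ must be treated through its own generator $\HE$ rather than as the truncation $\iota^* \UB \iota$, since the two differ by the edge-confined operator $D$ of \eqref{pres_D} and the truncation does not satisfy a Schr\"odinger equation. Working directly with $(\HE, \UE)$ removes this issue and all the bulk estimates apply unchanged.
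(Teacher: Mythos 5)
Your proof is correct and follows essentially the same route as the paper: the paper itself disposes of Cor.\,\ref{cor:UE_expdecay} with the remark that ``$\HE$ satisfies similar properties of $\HB$ by construction, thus all the proofs remain unchanged,'' and you are simply making explicit what that remark means. The one subtlety you flag — that $\UE$ must be treated via its own generator $\HE$ and the Schr\"odinger equation rather than as the truncation $\iota^*\UB\iota$ — is exactly the right thing to keep in mind, and the verification that Holmgren--Schur, the Lipschitz machinery and Gr\"onwall all go through on $\ell^2(\mathbb N\times\mathbb Z)$ is accurate.
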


\subsection{Comparing bulk and edge propagator}

As announced in the beginning of Sect.\,\ref{sec:main_results}, we prove identity \eqref{pres_D} that compares the edge propagator and the truncated bulk one, and is crucial for the proof of bulk-edge correspondence.

\begin{prop}\label{prop:defD}
	Let $\HB$ and $\HE$ be a bulk and its corresponding edge Hamiltonian, and $\UB$ and $\UE$ the respective propagators. Define 
	\begin{equation}
	D(t) = U_E(t) - \iota^* U_B(t) \iota.
	\end{equation}
	Then for $0 \leq \lambda < \mu$ where $\mu$ is the locality exponent of $\HB$, $D(t)$ is 1-confined and 2-local on $\HHE$, uniformly in $t \in[0,T]$. Moreover the map $\HB \mapsto D(t)$ is continuous in the relevant norms.
\end{prop}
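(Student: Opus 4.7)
The plan is to represent $D(t)$ as a Duhamel integral whose source term manifestly decays away from the edge, and then to propagate that confinement across the propagators using their locality.

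\textbf{Step 1 (Duhamel representation).} Since $\UE$ is generated by $\HE=\iota^*\HB\iota$ while $\iota\iota^*=P_1$, inserting $\iota\iota^*+(1-P_1)=I$ in $\iota^*\HB\UB\iota$ yields
\[
\ii\,\partial_t D(t) = \HE\, D(t) - \iota^*\HB(t)(1-P_1)\UB(t)\,\iota,
\]
with $D(0)=0$. Duhamel's formula then gives
\[
D(t) = -\ii\int_0^t \UE(t,s)\,K(s)\,\UB(s)\,\iota\,\dd s,\qquad K(s):=\iota^*\HB(s)(1-P_1).
\]

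\textbf{Step 2 (Doubly confined source term).} The kernel $K(s)_{\vm,\vn}$ is nonzero only for $m_1\geq 0$ and $n_1<0$, in which case $|m_1-n_1|=m_1+|n_1|$. The locality of $\HB$ thus yields $|K(s)_{\vm,\vn}|\leq C\,\ee^{-\mu(m_1+|n_1|)}\ee^{-\mu|m_2-n_2|}$, so for any $0\leq\lambda<\mu$ and any Lipschitz $f_2$, the operator $\ee^{-\lambda f_2}K(s)\ee^{\lambda(f_2+|n_1|)}$ is bounded by Holmgren--Schur, uniformly in $s\in[0,T]$ and in $f_2$, with bound controlled by $\norm{\HB}_\mu$.

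\textbf{Step 3 (Putting it together).} Using $\iota\,\ee^{\lambda(f_2+n_1)}=\ee^{\lambda(f_2+n_1^+)}\iota$ with $n_1^+:=\max(n_1,0)$, I rewrite
\begin{align*}
\ee^{-\lambda f_2}&D(t)\ee^{\lambda(f_2+n_1)}\\
&= -\ii\int_0^t \bigl[\ee^{-\lambda f_2}\UE(t,s)\ee^{\lambda f_2}\bigr]\bigl[\ee^{-\lambda f_2}K(s)\ee^{\lambda(f_2+|n_1|)}\bigr]\bigl[\ee^{-\lambda(f_2+n_1^+)}\UB(s)\ee^{\lambda(f_2+n_1^+)}\bigr]\iota\,\dd s.
\end{align*}
Since $f_2$ and $n_1^+$ depend on disjoint coordinates, their sum is still Lipschitz of constant $1$ on $\HHB$, so the first and third brackets are bounded by Cor.~\ref{cor:UE_expdecay} and Prop.~\ref{prop:U_expdecay}; the middle bracket is bounded by Step 2. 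Integrating over $s$ yields a bound uniform in $t\in[0,T]$ and in $f_2$, which by the converse part of Cor.~\ref{cor:localbis} is precisely the claim that $D(t)$ is simultaneously 1-confined and 2-local with norm controlled by $\norm{\HB}_\mu$.

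\textbf{Step 4 (Continuity and main difficulty).} Continuity of $\HB\mapsto D$ in the relevant norms follows from the same representation by telescoping $D_1-D_2$ across the three factors, using Prop.~\ref{cor:U_continuity} and its edge analogue, together with the obvious $\norm{\cdot}_\mu$-continuity of $K$ in $\HB$. The main obstacle is the bookkeeping of exponential weights: one must verify that $f_2+n_1^+$ is Lipschitz of constant $1$ on $\HHB$ (which uses crucially that the two summands depend on disjoint coordinates) and that it matches the weight $f_2+n_1$ on the range of $\iota$, where $n_1=n_1^+$. Once these identifications are in place, the locality results for the bulk and edge propagators apply directly.
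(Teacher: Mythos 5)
Your proposal takes essentially the same route as the paper's proof: represent $D(t)$ via Duhamel as an integral of $\UE(t,s)\,\iota^*\HB(s)(1-P_1)\,\UB(s)\,\iota$, observe that the source term $\iota^*\HB(s)(1-P_1)$ is confined near the boundary, and distribute exponential weights across the three factors so that each is bounded by the locality results for $\HB$, $\UB$, $\UE$. The paper packages this by applying the fundamental theorem of calculus directly to $\UE^*(s)\iota^*\UB(s)\iota$ rather than writing the ODE explicitly, and it regularizes the Lipschitz weights with bounded approximants $f^p$ before passing to a uniform limit in $p$ --- a technical step you have skipped, though the estimates you obtain would be uniform in the truncation parameter, so this is easily repaired.

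There is, however, an algebraic slip in Step 3: the middle bracket $\ee^{-\lambda f_2}K(s)\ee^{\lambda(f_2+|n_1|)}$ and the third bracket $\ee^{-\lambda(f_2+n_1^+)}\UB(s)\ee^{\lambda(f_2+n_1^+)}$ do not telescope. Multiplying the two leaves a spurious factor $\ee^{\lambda(|n_1|-n_1^+)}$ sandwiched between $K(s)$ and $\UB(s)$; on the range of $(1-P_1)$, which is exactly where $K(s)$ is supported on the right, this factor equals $\ee^{\lambda|n_1|}>1$, so the product of your three brackets is not $\ee^{-\lambda f_2}\UE(t,s)K(s)\UB(s)\ee^{\lambda(f_2+n_1^+)}$. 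The fix is to use the same extension of $n_1$ in both brackets: either $|n_1|$ or $n_1^+$, since each makes $f_2+g_1$ a Lipschitz function of constant one on $\mathbb Z^2$ (by the disjoint-coordinates argument you correctly flag) and each agrees with $n_1$ on the range of $\iota$. Using $|n_1|$ throughout makes your Step 2 bound exactly what is invoked; using $n_1^+$ throughout makes Step 2 overkill, because $K(s)\ee^{\lambda(f_2+n_1^+)}=K(s)\ee^{\lambda f_2}$ by the $(1-P_1)$ sitting in $K$, reducing the middle factor to the $2$-locality of $\HB$ --- which is precisely the observation the paper uses via the trivial bound on $(1-P_1)\ee^{\lambda f_1^p}$. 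The sign in your Duhamel formula should be $+\ii$, not $-\ii$, but this does not affect the estimates. The continuity argument in Step 4 is the same as the paper's.
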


\begin{proof}
	As in proof of Prop.\,\ref{prop:U_expdecay} we start with bounded version of Lipschitz functions, namely $f_2^p = p f_2/(p+|f_2|)$, and similarly $f_1^p = p n_1 / (p+ n_1)$. Then for $t \in [0,T]$ we compute
	\begin{align}\label{F2F1D_comput}
	\ee^{-\lambda f_2^p} &D(t) \ee^{\lambda f_2^p} \ee^{\lambda f_1^p} \cr 
	& =\ee^{-\lambda f_2^p} \big(\UE(t) - \iota^* \UB(t) \iota\big) \ee^{\lambda f_2^p} \ee^{\lambda f_1^p}  \cr 
	&= -\ee^{-\lambda f_2^p} \UE(t)\int_0^t \dd s \, \partial_s \Big( \UE^*(s) \iota^* \UB(s) \iota \, \ee^{\lambda f_2^p} \ee^{\lambda f_1^p}  \Big) \cr
	& = -\ee^{-\lambda f_2^p} \UE(t)\int_0^t \dd s \,  \UE^*(s) \ii \big(\HE(s) \iota^* - \iota^* \HB(s)\big)  \UB(s)\iota\, \ee^{\lambda f_2^p} \ee^{\lambda f_1^p}  \cr
	& = \ii \int_0^t \dd s \, \ee^{-\lambda f_2^p} \UE(t) \UE^*(s) \iota^* \HB(s) (1-P_{1}) \UB(s) \iota\, \ee^{\lambda f_2^p} \ee^{\lambda f_1^p},
	\end{align}
	where we  used $\partial_s \UB  = -\ii \HB \UB$ and $\partial_s \UE^* = \ii \UE^* \HE$, Def.\,\ref{def:edge_Hamiltonian} of $\HE$ and property  \eqref{J_partialisom}. Then we write
	\begin{equation}
	\iota \ee^{\pm\lambda f_2^p} = \ee^{\pm\lambda f_2^p}  \iota \qquad \ee^{\pm\lambda f_2^p} \iota^* = \iota^* \ee^{\pm\lambda f_2^p},
	\end{equation}
	where on the right hand side of each equation is the natural extension of $\ee^{\pm\lambda f_2^p}$ on $\mathcal \HHB$, that we denote by the same symbol, and similarly for $\ee^{\pm\lambda f_1^p}$ since $f_1^p \rightarrow n_1$ has also a natural extension on $\mathbb Z^2$.
	Thus we can rewrite the integrand as
	\begin{align}\label{Dcont}
	& \ee^{-\lambda f_2^p} \UE(t) \UE^*(s) \iota^* \HB(s) (1-P_{1}) \UB(s) \iota \ee^{\lambda f_2^p} \ee^{\lambda f_1^p} \cr
	& =  \ee^{-\lambda f_2^p} \UE(t)  \ee^{\lambda f_2^p} \, \cdot \, \ee^{-\lambda f_2^p} \UE^*(s)  \ee^{\lambda f_2^p} \, \cdot \, \iota^* \, \ee^{-\lambda f_2^p}  \HB(s) \ee^{\lambda f_2^p}\, \cdot \,  (1-P_{1}) \ee^{\lambda f_1^p} \, \cr
	&\hspace{1cm} \cdot \ee^{-\lambda f_1^p}  \ee^{-\lambda f_2^p} U_B(s) \ee^{-\lambda f_2^p} \ee^{\lambda f_1^p} \, \iota
	\end{align}
	Note that $(1-P_1) \ee^{\lambda f_1^p}$ is bounded uniformly in $p \in \mathbb N$. By Lem.\,\ref{lem:H_expdecay}, Prop.\,\ref{prop:U_expdecay} and Cor.\,\ref{cor:UE_expdecay}, each one of the remaining factor is uniformly bounded in $s \in[0,T]$ and $p \in \mathbb N$.	Then so is the operator appearing on the l.h.s. of \eqref{F2F1D_comput}. Since the estimate is independent of $p$, it strongly converges to $\ee^{-\lambda f_2} D(t) \ee^{\lambda f_2} \ee^{\lambda n_1}$ which is also bounded uniformly in $t \in [0,T]$. More precisely
	\begin{equation}\label{aux3}
	\norm{\ee^{-\lambda f_2} D(t) \ee^{\lambda f_2} \ee^{\lambda n_1}} \leq B T \norm{\UE}_\mu^2 \norm{\HB}_\mu \norm{\UB}_{\mu}
	\end{equation}
	with a constant $B >0$. The continuity follows from \eqref{Dcont} by Cor.\,\ref{cor:UE_expdecay}.
\end{proof}

\section{Proofs \label{sec:proofs}}

We mostly follows the order of statements of Sect.\,\ref{sec:main_results}. First we assume that $\UB(T) = \idB$ and prove the bulk-edge correspondence, then check that it also applies to the relative evolution with effective Hamiltonian in the general case. In between some extra properties are established, such as the invariance under change of boundary condition and additivity property of the bulk index.

\subsection{Periodic unitary propagator \label{sec:proof_BE_per}}

\begin{proof}[\textbf{Proof of Prop.\,\ref{def:IEperiodic}}] 	
 When $\UB(T) =  \idB$, Prop.\,\ref{prop:defD} reduces to $\UE(T)=\idE+D(T)$ with $D(T)$ simultaneously 1-confined and 2-local. In particular
 \begin{equation}\label{UE=Dincomm}
 [\Lambda_2, U_E(T)] = [\Lambda_2, D(T)]
 \end{equation}
 is trace class according to Lem.\,\ref{lem_icjl_traceclass}, so that $\IE$ is well-defined. Then for two switch functions $\Lambda_2$ and $\widetilde \Lambda_2$, the difference of the corresponding indices reads
 \begin{equation}
 \IE - \widetilde \IE = \Tr_{\HHE} \Big( U_E^*(T) \big((\Lambda_2-\widetilde \Lambda_2) D(T)-D(T)(\Lambda_2-\widetilde \Lambda_2 ) \big) \Big).
 \end{equation}
 Since $\Lambda_2 -\widetilde \Lambda_2 $ is compactly supported in direction 2, it is 2-confined, and even obviously simultaneously 1-local and 2-confined. Applying again Lem.\,\ref{lem_icjl_traceclass}, we deduce that
 $(\Lambda_2-\widetilde \Lambda_2) D(T)$ and $D(T)(\Lambda_2-\widetilde \Lambda_2 )$
 are separately trace class, so that we can split the trace into two parts that are actually equal by cyclicity and the fact that $\UE^*(T)$ and $D(T) = \idE - \UE(T)$ commute, so that $\IE = \widetilde \IE$.
 
 In particular we can compute $\IE$ with $\Lambda_2 = P_2$ that is also a projection. Rewriting
 \begin{equation}\label{IEasindex2}
 \IE = \Tr_{\HHE} \big( U_E^*(T) P_2 U_E(T) - P_2 \big) = \mathrm{Ind}(U_E^*(T) P_2 U_E(T), P_2)
 \end{equation}
 we recognize the index of a pair of projections (see \cite{AvronSeilerSimon94}) defined by 
 \begin{equation}
 \mathrm{Ind}(P,Q) = \dim \ker (P - Q - 1) - \dim  \ker (P-Q + 1) \in \mathbb Z.
 \end{equation}
Indeed, when $(P-Q)^{2n+1}$ is trace class for some $n \geq 0$ , then 
\begin{equation}
\forall m \geq n \qquad \Tr((P-Q)^{2m+1} ) = \Tr((P-Q)^{2n+1} ) = \mathrm{Ind}(P,Q).
\end{equation}
In our case, $P = U_E^*(T) P_{2} U_E(T)$, $Q= P_{2}$ and $n=0$ gives \eqref{IEasindex2} which is an integer by the definition above. The continuity of $\HB \mapsto \Tr([P_2,D(T)])$ follows by Prop.\,\ref{prop:defD} and Lem.\,\ref{lem_icjl_traceclass}. It implies that of $\IE$ by \eqref{UE=Dincomm}.
\end{proof}

Since $\IE$ is continuous and integer valued, it is constant. As a consequence we have:

\begin{prop}[Influence of the boundary condition]\label{prop:general_BC}
	 Let $\HB$ be a bulk Hamiltonian so that $\UB(T) =  \idB$. Consider the alternative edge Hamiltonian
	\begin{equation}
	\widetilde \HE(t) = \iota^* \HB(t) \iota + \Hbc(t)
	\end{equation}
	with $\Hbc$ a self-adjoint operator on $\HHE$ that is simultaneously 1-confined and 2-local. Let $\widetilde \UE$ and $\widetilde \IE$ the associated propagator and edge index. Then
$
\widetilde \IE = \IE
$
	where $\IE$ corresponds to $\Hbc=0$.
\end{prop}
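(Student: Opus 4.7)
The plan is to prove the statement by a homotopy argument, relying on the fact that $\mathcal{I}_E$ is an integer that depends continuously on the edge data. Since the bulk Hamiltonian $\HB$ is fixed (and $\UB(T)=\idB$), only the edge part varies, so the assumption of Prop.\,\ref{def:IEperiodic} is preserved along the path.

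First, I would interpolate between the two edge Hamiltonians by setting
\begin{equation*}
H_{\mathrm{E},s}(t) = \iota^*\HB(t)\iota + s\, \Hbc(t), \qquad s \in [0,1],
\end{equation*}
and denote by $U_{\mathrm{E},s}(t)$ the corresponding propagator and by $D_s(t) = U_{\mathrm{E},s}(t) - \iota^*\UB(t)\iota$ the analogue of $D$. At $s=0$ we recover $\UE$ and $D$; at $s=1$ we recover $\widetilde\UE$ and $\widetilde D$. The edge index candidate along the path is
\begin{equation*}
\mathcal{I}_E(s) = \Tr_{\HHE}\!\Big(U_{\mathrm{E},s}^*(T)\,[\Lambda_2, U_{\mathrm{E},s}(T)]\Big) = \Tr_{\HHE}\!\Big(U_{\mathrm{E},s}^*(T)\,[\Lambda_2, D_s(T)]\Big),
\end{equation*}
where the second equality uses $\UB(T)=\idB$ exactly as in the proof of Prop.\,\ref{def:IEperiodic}.

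The key analytical step is to show that $D_s(t)$ is simultaneously $1$-confined and $2$-local uniformly in $s \in [0,1]$ and $t\in[0,T]$, and depends continuously on $s$ in that norm. For this I would use the Duhamel identity
\begin{equation*}
U_{\mathrm{E},s}(t) - \UE(t) = -\ii s \int_0^t U_{\mathrm{E},s}(t)\, U_{\mathrm{E},s}^*(r)\,\Hbc(r)\,\UE(r)\, \dd r,
\end{equation*}
so that $D_s(t) = D(t) + (U_{\mathrm{E},s}(t) - \UE(t))$. The term $D(t)$ is $1$-confined and $2$-local by Prop.\,\ref{prop:defD}. For the integrand: $\Hbc(r)$ is $1$-confined and $2$-local by hypothesis, while $U_{\mathrm{E},s}(t)U_{\mathrm{E},s}^*(r)$ and $\UE(r)$ are $2$-local by Cor.\,\ref{cor:UE_expdecay} applied to $H_{\mathrm{E},s}$ (whose locality constants are controlled by those of $\HB$ and $\Hbc$, uniformly in $s$). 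Sandwiching by the appropriate exponential weights $\ee^{-\lambda f_2}\cdots\ee^{\lambda f_2}\ee^{\lambda|n_1|}$ and distributing them as in the proof of Prop.\,\ref{prop:defD} yields a uniform bound on $D_s$ in the $1$-confined, $2$-local norm. Continuity in $s$ follows from the same Duhamel estimate applied to $D_s - D_{s'}$.

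Once that is in place, Lem.\,\ref{lem_icjl_traceclass} makes $[\Lambda_2, D_s(T)]$ trace class with trace norm continuous in $s$, and together with the uniform operator-norm bound on $U_{\mathrm{E},s}^*(T)$ this gives continuity of $s \mapsto \mathcal{I}_E(s)$. By the argument of Prop.\,\ref{def:IEperiodic} applied to each $s$ (with $\Lambda_2 = P_2$, identifying $\mathcal{I}_E(s)$ with an index of a pair of projections), $\mathcal{I}_E(s) \in \mathbb{Z}$ for every $s$. A continuous integer-valued function on $[0,1]$ is constant, hence $\widetilde{\mathcal{I}}_E = \mathcal{I}_E(1) = \mathcal{I}_E(0) = \mathcal{I}_E$.

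The main obstacle is the uniform $1$-confined and $2$-local estimate on $D_s$, because $H_{\mathrm{E},s}$ is no longer a plain truncation of a bulk Hamiltonian, so the cancellation $H_{\mathrm{E},s}(t)\iota^* - \iota^*\HB(t) = \iota^*\HB(t)(1-P_1) + s\,\Hbc(t)\iota^*$ now has an extra $\Hbc$ contribution. The assumption that $\Hbc$ is itself $1$-confined and $2$-local is precisely what is needed to reabsorb this extra term into the estimates of Prop.\,\ref{prop:defD}, so the argument closes.
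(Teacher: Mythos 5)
Your proof is correct and follows essentially the same strategy as the paper: interpolate via $H_{\mathrm{E},s} = \iota^*\HB\iota + s\Hbc$, establish that the resulting $D_s$ stays simultaneously $1$-confined and $2$-local (so $\mathcal I_E(s)$ remains a well-defined integer continuous in $s$), and conclude by constancy. The paper phrases the key estimate as ``adapt the proof of Prop.\,\ref{prop:defD}'' with $\HE$ replaced by $\HE + s\Hbc$, whereas you obtain the same conclusion via the cleaner Duhamel decomposition $D_s = D + (U_{\mathrm{E},s} - \UE)$; these are equivalent implementations of the same idea.
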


\begin{proof}First note that $\widetilde \HE$ is still local. Thus adapting the proof of Prop.\,\ref{prop:defD} where we replace $\HE$ by $\HE  + \Hbc$ and using the fact that $\Hbc$ is 1-confined and 2-local, we end up with
\begin{equation}
\widetilde \UE(T) = \idE + \widetilde D(T)
\end{equation}
with $\widetilde D(T)$ 1-confined and 2-local, so that $\widetilde \IE$ is well-defined and shares the properties of Prop.\,\ref{def:IEperiodic}. In particular it is now continuous both in $\HB$ and $\Hbc$ in the relevant norms. Moreover note that $\HE$ and $\widetilde \HE$ are homotopic through
\begin{equation}
H_s(t) \equiv \iota^* \HB(t) \iota + s \Hbc(t)
\end{equation}
for $s \in [0,1]$. By Cor.\,\ref{cor:UE_expdecay} this induces a homotopy $U_\lambda$ from $\UE$ to $\widetilde U_E$ which preserves $\HB$ and $\UB$. Thus $\widetilde \IE = \IE$.
\end{proof}

In the translation invariant case, the bulk Hamiltonian satisfies $(\HB)_{\vm,\vn} (t) = (\HB)_{0,\vn-\vm}(t)$ and induces this property on $\UB$, but also on the edge operators $\HE$ and $\UE$ in direction 2, namely $(\UE)_{\vm,\vn}(t) = (\UE)_{(m_1,0),(n_1,n_2-m_2)}(t)$. In that case we define the Fourier transform of $\UE$ in direction 2 by
\begin{equation}
\widehat \UE(t,k_2) =   \sum_{n_2 \in \mathbb Z} \ee^{-\ii k_2 n_2} (\UE)_{0,n_2}(t).
\end{equation}
Noticing that $\ell^2(\mathbb N \times \mathbb Z) \cong \ell^2(\mathbb N) \otimes \ell^2(\mathbb Z)$, this operator is defined for each pair $(m_1,n_1) \in \mathbb N^2$ so that $\widehat \UE(t,k_2)$ acts on $\mathbb \ell^2(\mathbb N) \otimes \mathbb C^N$. Since $\UE$ is local from Cor.\,\ref{cor:UE_expdecay}, this Fourier transform is well-defined and even smooth in $k_2$.

\begin{lem}\label{lem:periodic}
	Let $A$ and $B$ be bounded and translation invariant operators on $\ell^{2}(\mathbb Z)$ with $\mathcal C^1$-Fourier transform $\widehat A(k)$ and $\widehat B(k)$, and $\Lambda$ a  switch function. If $A [\Lambda, B]$ is trace class 
	then
	\begin{equation}
	\Tr_{\ell^2(\mathbb Z)} \Big( A [\Lambda, B] \Big) = (A X B)_{00} = \ii \int_0^{2\pi} \dfrac{\dd k}{2\pi} \widehat A(k) \partial_k \widehat B(k),
	\end{equation}
	where $X$ is the position operator.
\end{lem}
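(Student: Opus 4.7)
The plan is to compute the trace directly in the position basis of $\ell^2(\mathbb Z)$, exploiting translation invariance of $A$ and $B$ and the telescoping behaviour of any switch function, and then to recognise the result as a Fourier integral via Parseval.

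Writing $a(q) = A_{0,q}$ and $b(q) = B_{0,q}$, translation invariance gives $A_{m,n} = a(n-m)$ and $B_{m,n} = b(n-m)$, and a short calculation in which I substitute $q = p-m$ yields
\begin{equation*}
(A[\Lambda, B])_{m,m} = \sum_{q \in \mathbb Z} a(q)\bigl(\Lambda(m+q) - \Lambda(m)\bigr) b(-q).
\end{equation*}
Since $A[\Lambda, B]$ is trace class, the trace equals the sum of these diagonal entries. After exchanging the two sums (see below) the telescoping identity $\sum_{m \in \mathbb Z} (\Lambda(m+q) - \Lambda(m)) = q$, valid for any switch function by its asymptotic values $0$ at $-\infty$ and $1$ at $+\infty$, gives
\begin{equation*}
\Tr_{\ell^2(\mathbb Z)}\bigl(A[\Lambda, B]\bigr) = \sum_{q \in \mathbb Z} q \cdot a(q) \cdot b(-q).
\end{equation*}

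The right-hand side is manifestly $(AXB)_{00}$ because $X$ is diagonal in the position basis with $X|p\rangle = p|p\rangle$, so $(AXB)_{00} = \sum_p A_{0,p} \cdot p \cdot B_{p,0} = \sum_p p \cdot a(p) \cdot b(-p)$. For the Fourier expression I would use that, on the momentum side, $X$ corresponds to $\pm \ii \partial_k$: substituting the Fourier expansions $\widehat A(k) = \sum_p a(p) e^{\pm \ii k p}$ and $\partial_k \widehat B(k)$ into $\ii \int_0^{2\pi} \widehat A(k) \partial_k \widehat B(k) \frac{\dd k}{2\pi}$ and using the orthogonality relation $\int_0^{2\pi} e^{\pm \ii k(p+q)} \frac{\dd k}{2\pi} = \delta_{p+q,0}$ reproduces $\sum_p p \cdot a(p) \cdot b(-p)$, closing the chain of equalities.

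The main obstacle is justifying the interchange of sums above, since $\sum_m (\Lambda(m+q) - \Lambda(m))$ is only conditionally convergent for a general switch function. The cleanest workaround is to reduce to the case where $\Lambda$ is the step function $\chi_{[0, \infty)}$, either by an independence-of-$\Lambda$ argument in the spirit of Prop.\,\ref{def:IEperiodic} or by approximating a general switch function by step functions using the $\mathcal C^1$-regularity of $\widehat A$, $\widehat B$: for each fixed $q$ the integer $m$ then ranges over only the $|q|$ positions separating the two shifted indicator functions, so the interchange becomes a finite counting argument, and the remaining sum over $q$ converges absolutely by the trace class assumption on $A[\Lambda, B]$.
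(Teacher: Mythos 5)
Your proposal follows essentially the same route as the paper: compute the trace through the diagonal kernel, exploit translation invariance and the telescoping sum $\sum_m (\Lambda(m+q)-\Lambda(m)) = q$, then recognise $(AXB)_{00}$ and pass to Fourier. One small correction: with the paper's Def.\,\ref{def:switch} of a switch function, $\Lambda(n)$ \emph{equals} $0$ and $1$ outside a bounded transition region, so for each fixed $q$ the sum over $m$ is a finite sum, not merely conditionally convergent; your worry about reducing to $\chi_{[0,\infty)}$ is therefore unnecessary, and the Fubini interchange is controlled by the decay of $a(q)b(-q)$ (which the $\mathcal C^1$ hypothesis provides) together with the $O(|q|)$ size of the support in $m$ -- a point the paper also leaves implicit.
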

\begin{proof}
	Since $A [\Lambda, B]$ is trace class, its trace can be computed through its diagonal kernel: 
	\begin{align}
	\Tr_{\ell^2(\mathbb Z)} \Big( A [\Lambda, B] \Big) = \sum_{p,q \in \mathbb Z} A_{p,q} B_{q,p} (\Lambda(q) - \Lambda(p) )
	= \sum_{p' \in \mathbb Z} A_{0,p'} p' B_{p',0}
	\end{align}  
	which gives the first equality. We have used $A_{p,q} = A_{0,q-p}$, similarly for $B$, the change of variables $q \mapsto p' = q-p$ and the fact that 
	\begin{equation}
	\sum_{p \in \mathbb Z} (\Lambda(p' + p) - \Lambda(p)) = p'
	\end{equation}
	for any $p' \in \mathbb Z$ and any switch $\Lambda$. The second equality is a standard Fourier computation.
\end{proof}

	By Prop.\,\ref{def:IEperiodic} we know that $\UE^*(T) [\Lambda_2, \UE(T)]$ is trace class, so that we can apply the previous lemma in direction 2
	\begin{equation}
	\IE  =  \ii \, \Tr_{\ell^2(\mathbb N) \otimes \mathbb C^N} \int_0^{2\pi} \dfrac{\dd k_2}{2\pi}   \Big(\widehat \UE^*(T,k_2) \partial_{k_2}\widehat \UE(T,k_2) \Big) 
	\end{equation}
	Finally, since $\UE(T) = \idE + D(T)$ where both $D$ and $\UE$ are 2-local, their respective Fourier transform $\widehat D(T,k_2)$ and $\widehat \UE(T,k_2)$ are smooth in $k_2$, so that $\partial_{k_2}\widehat \UE(t,k_2) = \partial_{k_2} \widehat D(T,k_2)$ is 1-confined on $\ell^2(\mathbb N) \otimes \mathbb C^N$, namely it is trace class for each $k_2$. Hence trace and integral can be exchanged in the last formula. This proves identity \eqref{IE_winding} of Prop.\,\ref{prop:space_periodic_case}.

\begin{proof}[\textbf{Proof of Prop.\,\ref{prop:defI_per}}]
	By Prop.\,\ref{prop:U_expdecay} $\UB$ is local, then by Lem.\,\ref{lem:Lambdaconfinement} $\UB^*[\Lambda_i,\UB]$ is $i$-confined and $j$-local so that the product of two such terms for $i \neq j$ is trace class according to Lem.\,\ref{lem_icjl_traceclass}, so $\IB$ is well-defined. Then for two switch functions in direction 1, consider their difference $\Delta  \Lambda_1$ and the corresponding difference of indices. We open the inner commutator and separate traces, as we can by Lem.\,\ref{lem_icjl_traceclass}. Then we conjugate by $\UB$ the expression under the first one, so as to obtain
	\begin{equation}\label{comput_IB_invswitch}
	\Delta \IB  = \dfrac{1}{2}\int_0^T \dd t \, \Tr_{\HHB} \Big(  (\partial_t \UB) \UB^* \Big[ \Delta \Lambda_1,[\Lambda_2,\UB]\UB^*\Big]\Big)
	 - \Tr_{\HHB} \Big( \UB^* \partial_t \UB \Big[ \Delta \Lambda_1, \,\UB^*[\Lambda_2,\UB]\Big]\Big).
	\end{equation}
	Each term is vanishing: Since $\Delta \Lambda_1$ is 1-confined, one can open the outer commutator in each one. Up to algebraic manipulation we get for the first one
	\begin{multline}
	\Tr_{\HHB} \Big(  (\partial_t \UB) \UB^* \Big[ \Delta \Lambda_1,[\Lambda_2,\UB]\UB^*\Big]\Big)\\ = \partial_t \Tr_{\HHB} \Big( [\UB, \Lambda_2] \UB^* \Delta \Lambda_1\Big)  
	 + \Tr_{\HHB} \Big( [\UB \Lambda_2 \UB^*,\, (\partial_t \UB) \UB^* \Delta \Lambda_1]\Big)
	\end{multline}
	The first term vanishes when integrated over $t$ since $\UB$ is $t$-periodic. As for the second
	\begin{equation}
	[\UB \Lambda_2 \UB^*,\, (\partial_t \UB) \UB^* \Delta \Lambda_1] = \UB [ \Lambda_2 ,\, \UB^*(\partial_t \UB) \UB^* \Delta \Lambda_1 \UB]\UB^*
	\end{equation}
	Since $\Delta \Lambda_1$ is compactly supported it is 1-confined, so that $\UB^*(\partial_t \UB) \UB^* \Delta \Lambda_1 \UB$ is simultaneously 1-confined and 2-local and the commutator of it with $\Lambda_2$ is trace class according to Lem.\,\ref{lem_icjl_traceclass}, and with vanishing trace according to Lem.\,\ref{lem:com_switch_0}. Thus the first part of \eqref{comput_IB_invswitch} vanishes, and similarly for the second, so that $\IB$ is independent of the choice of $\Lambda_1$. We proceed analogously for $\Lambda_2$.
	
	To show that $\IB$ is an integer, we identify it with a non-commutative odd Chern number \cite{ProdanSchulz16}. Since $\UB$ is periodic in time, consider its inverse Fourier transform along the time direction. Namely for $p,q \in \mathbb Z$ define $\check U_{p,q} = \check U_{0,q-p}$ where
	\begin{equation}
	 \check U_{0,p} = \dfrac{1}{T} \int_0^T \UB(t) \ee^{\ii \frac{2\pi}{T}p t} \dd t
	\end{equation} 
	that acts on $\HHB \otimes \ell^2(\mathbb Z)$. Then consider the following operator appearing in \eqref{defIB_per} up to cyclicity
	\begin{equation}
	O = \Big[\UB^*[\Lambda_1,\UB],\,\UB^*[\Lambda_2,\UB]\Big]\UB^* \quad \Rightarrow \quad  \check O = \Big[\check U^*[\Lambda_1,\check U],\,\check U^*[\Lambda_2,\check U]\Big] \check U^*
	\end{equation}
	since $\UB$ and $\UB^*$ are $t$-periodic and $\Lambda_1$, $\Lambda_2$ naturally extends to $\HHB \otimes \ell^2(\mathbb Z)$. Hence by Lem.\,\ref{lem:periodic} in direction $t$ for $A = O$, $B = \UB$ and $\Lambda_t$ a switch function in direction $t$, we finally get
	\begin{equation}
	\IB = -\ii \pi \Tr_{\HB \otimes \ell^2 (\mathbb Z)} \Big( \check U^* [\Lambda_t, \check U] \Big[\check U^* [\Lambda_1, \check U] ,\, \check U^* [\Lambda_2, \check U]\Big]\Big) =  C_3.
	\end{equation}
	This identifies $\IB$ with $C_3$, the non-commutative version of the odd Chern number in dimension 3, see \cite{ProdanSchulz16}. In particular $\IB \in  \mathbb Z$. Finally the continuity is given by opening the double commutator in expression \eqref{defIB_per} of $\IB$, and noticing that
	\begin{equation}
	\UB^*(t)[\Lambda_1,\UB(t)]\UB^*(t)[\Lambda_2,\UB(t)] = - [\Lambda_1, \UB^*(t)][\Lambda_2,\UB(t)]
	\end{equation}
	is trace class by Lem.\,\ref{lem_icjl_traceclass}, and similarly for the second term where $1 \leftrightarrow 2$. Then consider $\UBo$ and $\UBt$ so that $\UBo(T) = \UBt(T) = \idB$ and denote by $\nu$ one of their common locality exponent. By introducing a mixed term, and inspecting the proof of Lem.\,\ref{lem:Lambdaconfinement}
	\begin{align}
	&\norm{\big([\Lambda_1, \UBo^*(t)][\Lambda_2,\UBo(t)]-[\Lambda_1, \UBt^*(t)][\Lambda_2,\UBt(t)]\big)\ee^{\lambda |n_1|}\ee^{\lambda |n_2|}}\cr
	&\hspace{1cm}\leq B \big(\norm{\UBo}_\nu + \norm{\UBt}_\nu \big) \norm{\UBo-\UBt}_\nu
	\end{align}
	for $\lambda < \nu$, uniformly in time, so that $\UB \mapsto [\Lambda_1,\UB^*][\Lambda_2,\UB]$ is continuous with respect to $\norm{\cdot}_\nu$ and trace norm $\norm{\cdot}_1$. By composition with continuous functions, we deduce that $\UB \mapsto \IB$ is continuous in $\norm{\cdot}_\nu$, and by Prop.\,\ref{cor:U_continuity} that $\HB \mapsto \IB$ is continuous in $\norm{\cdot}_\mu$ as long as $\UB(T) = \idB$.
\end{proof}

Before proving the bulk-edge correspondence, we establish another property of the bulk index that will be used in the general case when $\UB$ is not anymore time-periodic. The proof of this proposition is purely algebraic but quite tedious, we postpone it to App.\,\ref{app:additivity}, so as not to overburden the reading.

\begin{prop}[Additivity of the bulk index]\label{prop:add_IB}
	 Consider $U$ and $V$ two unitary propagators 
	  satisfying $U(T) = V(T) = \idB$. Then 
	\begin{equation}
	\IB[UV] = \IB[U] + \IB[V]
	\end{equation}
	where $UV(t) = U(t) V(t)$ on $\HHB$.
\end{prop}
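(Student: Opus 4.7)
The plan is to compute $\IB[UV]$ directly via the product rule for propagator derivatives, isolate the "pure" $U$ and $V$ contributions as $\IB[U] + \IB[V]$, and show the remaining cross terms integrate to zero. Set $W = UV$ and introduce shorthand $\alpha_x = U^*[\Lambda_x, U]$, $\beta_x = V^*[\Lambda_x, V]$ for $x = 1, 2$, together with $\alpha_t = U^*\partial_t U$ and $\beta_t = V^*\partial_t V$. The Leibniz rule on commutators and derivatives immediately gives
\begin{equation}
W^*[\Lambda_x, W] = V^*\alpha_x V + \beta_x, \qquad W^*\partial_t W = V^*\alpha_t V + \beta_t.
\end{equation}

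Substituting into \eqref{defIB_per} expands the integrand into eight terms, sorted by how many factors come from $U$ versus $V$. The two "pure" terms $V^*\alpha_t V \bigl[V^*\alpha_1 V,\,V^*\alpha_2 V\bigr]$ and $\beta_t[\beta_1,\beta_2]$ yield $\IB[U]$ and $\IB[V]$ respectively: cyclicity of the trace strips off the conjugation by $V$ in the $\alpha$-term, which is legitimate because Lem.~\ref{lem:Lambdaconfinement} makes $\alpha_x$ (resp.\ $\beta_x$) simultaneously $x$-confined and $y$-local, so Lem.~\ref{lem_icjl_traceclass} ensures each triple product lies in the trace class. The six cross terms must then be shown to sum to a total time derivative under the trace. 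I would reorganize them via cyclicity, apply the Maurer--Cartan-type identity $\partial_t \alpha_x = [\alpha_x, \alpha_t] + [\Lambda_x, \alpha_t]$ (and its $\beta$-counterpart) to exchange $\partial_t$-factors for products of $\alpha$'s and $\beta$'s modulo commutators with $\Lambda_x$, and invoke Lem.~\ref{lem:com_switch_0} to kill the latter traces. Boundary terms produced by integration by parts in $t$ vanish because $U(0) = V(0) = U(T) = V(T) = \idB$, so $\alpha_x(0) = \alpha_x(T) = \beta_x(0) = \beta_x(T) = 0$ for $x = 1, 2$, which annihilates any endpoint contribution containing a spatial-commutator factor.

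Throughout, the trace-class bookkeeping requires that conjugation by the local unitary $V$ preserve the confined/local structure of the $\alpha_x$'s with at most a slight loss in exponent; this follows from Prop.~\ref{prop:U_expdecay} combined with Cor.~\ref{cor:localbis}, once one rewrites $V^*\alpha_x V = V^*(U^*[\Lambda_x,U])V$ and checks that the kernel decay is inherited from the locality of $U$ and $V$. The main obstacle is purely algebraic: organizing the six cross terms so they telescope into a total $t$-derivative. This is the operator-algebraic incarnation of the Polyakov--Wiegmann identity for Chern--Simons three-forms, where every permutation of trace cyclicity and every integration by parts must be independently justified by the confinement/locality bounds of Sect.~\ref{sec:timeevol}. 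Because the manipulations are straightforward but voluminous, relegating the explicit computation to the appendix is natural.
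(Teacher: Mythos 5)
Your proposal is correct and follows essentially the same route as the paper: substitute $W=UV$ into \eqref{defIB_per}, expand via Leibniz into two ``pure'' contributions and six cross terms, identify the former with $\IB[U]+\IB[V]$ after stripping the $V$-conjugation by cyclicity, and dispose of the latter through Lem.~\ref{lem:com_switch_0} plus a vanishing time-boundary contribution; the paper likewise defers the cross-term algebra to App.~\ref{app:additivity}. The only cosmetic differences are your packaging of the algebra via the Maurer--Cartan identity $\partial_t\alpha_x=[\alpha_x,\alpha_t]+[\Lambda_x,\alpha_t]$ (the paper manipulates directly) and your appeal to $\alpha_x(0)=\alpha_x(T)=0$ for the boundary terms where the paper invokes $T$-periodicity of $U,V$ -- both being equivalent consequences of $U(T)=V(T)=\idB$.
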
 

As we did for the edge index, we can also at the expression of $\IB$ when the bulk Hamiltonian is translation invariant in space. In that case $(\UB)_{\vm,\vn}(t) = (\UB)_{0,\vn-\vm}(t)$ and we define its Fourier transform
\begin{equation}
\widehat \UB(t,k_1,k_2) = \sum_{\vn \in \mathbb Z^2} \ee^{-\ii \mathbf k \cdot \mathbf n} (\UB)_{0,n}(t)
\end{equation}
that defines $\widehat \UB : \mathbb T^3 \rightarrow \mathcal U(N)$, where $\mathbb T^3 = [0,T] \times [0, 2\pi]^2$, namely a matrix valued function periodic in time and quasi-momentum. In analogy with Lem.\,\ref{lem:periodic} we have
\begin{lem}\label{lem:periodic2}
	Let $A$, $B$ and $C$ be three bounded and translation invariant operators on $ \ell^2(\mathbb Z^2)$ with $\mathcal C^2$ Fourier transform denoted by $\widehat A$, $ \widehat B$ and $\widehat C$. Let $\Lambda_1$ and $\Lambda_2$ be two switch functions  in direction 1 and 2. If $A [\Lambda_1, B [\Lambda_2, C]]$ is trace class then
	\begin{equation}
	\Tr_{\ell^2(\mathbb Z^2)} \Big(A [\Lambda_1, B [\Lambda_2, C]]\Big) = (A X_1 B X_2 C)_{00} =  \ii^2 \int \dfrac{\dd^2 k}{(2\pi)^2} \widehat A(k) \partial_{k_1} \big(\widehat  B \, \partial_{k_2}\widehat C\big).
	\end{equation}
\end{lem}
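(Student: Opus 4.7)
The plan is to mirror the one-dimensional argument of Lemma \ref{lem:periodic}, carrying out the ``commutator-to-position-operator'' identification in each direction separately, then passing to Fourier variables by a standard computation.

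First I would expand the trace through the diagonal kernel and open both commutators. Writing $[\Lambda_i, M]_{\vr,\vs} = (\Lambda_i(r_i)-\Lambda_i(s_i))M_{\vr,\vs}$, one obtains
\begin{equation*}
\Tr\big(A[\Lambda_1,B[\Lambda_2,C]]\big) = \sum_{\vn,\vr,\vp} A_{\vn,\vr}(\Lambda_1(r_1)-\Lambda_1(n_1)) B_{\vr,\vp}(\Lambda_2(p_2)-\Lambda_2(n_2)) C_{\vp,\vn}.
\end{equation*}
By translation invariance $A_{\vn,\vr}=A_{0,\vr-\vn}$ etc., so setting $\va=\vr-\vn$ and $\vb=\vp-\vr$ the three matrix elements become $A_{0,\va} B_{0,\vb} C_{0,-\va-\vb}$ and the switch-function factors become $(\Lambda_1(n_1+a_1)-\Lambda_1(n_1))(\Lambda_2(n_2+a_2+b_2)-\Lambda_2(n_2))$, so that the sum over $\vn$ decouples into the two one-dimensional sums.

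Next I would apply the key identity $\sum_p(\Lambda(p+m)-\Lambda(p)) = m$ (valid for any switch function $\Lambda$ and any $m\in\mathbb Z$) in each direction, which produces the scalar factors $a_1$ and $a_2+b_2$. This yields
\begin{equation*}
\Tr\big(A[\Lambda_1,B[\Lambda_2,C]]\big) = \sum_{\va,\vb} A_{0,\va}\, a_1 (a_2+b_2)\, B_{0,\vb}\, C_{0,-\va-\vb},
\end{equation*}
where the integers $a_1,a_2,b_2$ commute freely with the matrix factors. A direct expansion of $(AX_1 B X_2 C)_{00}$, using the change of variables $\vs=\vn-\vr$, gives exactly the same sum, establishing the middle identity.

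For the final Fourier expression I would insert the representations $A_{0,\va}=\int \widehat A(\vk)\ee^{\ii\vk\cdot\va}\dd^2k/(2\pi)^2$, and likewise for $B$, $C$, with three independent momentum integrations. The sum over $\va,\vb$ produces two momentum-space delta functions that collapse the three integrals to one, while the $a_1$ and $a_2+b_2$ prefactors are absorbed into derivatives $-\ii\partial_{k_1}$ and $-\ii\partial_{k_2}$ acting on the exponentials: the factor $a_1$ yields an overall $\ii\partial_{k_1}$ acting on the product $\widehat B\,\partial_{k_2}\widehat C$, while $(a_2+b_2)$ yields $\ii\partial_{k_2}\widehat C$. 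Combining the two factors of $\ii$ produces the announced $\ii^2$.

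The only subtlety, and the main point to verify, is that all the intermediate manipulations (interchange of the $\vn$-summation with the $\va,\vb$-summations, and the Fourier-representation step) are legitimate. This is guaranteed by the trace-class hypothesis on $A[\Lambda_1,B[\Lambda_2,C]]$ together with the $\mathcal C^2$ regularity of $\widehat A,\widehat B,\widehat C$, which ensure absolute convergence of the sums defining $(AX_1 B X_2 C)_{00}$ and justify Fubini and integration by parts against the bounded $\widehat A$.
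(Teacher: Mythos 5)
Your proposal is correct and reproduces, with the details filled in, precisely the argument the paper intends: the paper's entire proof of Lemma~\ref{lem:periodic2} is the remark that it is ``completely similar to the one of Lem.~\ref{lem:periodic}, one dimension higher,'' and you have supplied exactly that two-dimensional version --- diagonal-kernel expansion of the trace, reduction by translation invariance, the telescoping identity $\sum_p(\Lambda(p+m)-\Lambda(p))=m$ applied separately in each direction (which factorizes because $\Lambda_1,\Lambda_2$ depend on disjoint coordinates), and the passage to Fourier variables with the position factors becoming derivatives.
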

The proof is completely similar to the one of Lem.\,\ref{lem:periodic}, one dimension higher. If we rewrite the operator appearing in the  definition \eqref{defIB_per} of the bulk index as
\begin{equation}\label{aux2}
 \Big[\UB^*[\Lambda_1,\UB],\,\UB^*[\Lambda_2,\UB] \Big] = -\big[\Lambda_1, \UB^*[\Lambda_2, \UB]\big] + \big[\Lambda_2, \UB^*[\Lambda_1, \UB]\big], 
\end{equation}
each term is separately trace class by Lem.\,\ref{lem_icjl_traceclass} since $\UB$ is local. The locality also implies that $\widehat \UB$ is smooth in $k_1$ and $k_2$. We then apply Lem.\,\ref{lem:periodic2} to each part of \eqref{aux2} to end up with identity \eqref{IB_degree} of Prop.\,\ref{prop:space_periodic_case}.

\medskip

Finally, the proof of the bulk-edge correspondence is based on a partial result that improves Prop.\,\ref{prop:defD}. 

\begin{lem}\label{lem:defDelta}
	Let $\HB$ be a bulk Hamiltonian and $\HE,\, \UB, \UE$ the corresponding edge Hamiltonian and bulk and edge propagator. For any switch function in direction 2
	\begin{equation}
	\Delta(t) \equiv [\Lambda_2, \UE(t)] \UE^*(t) - \iota^* [\Lambda_2, \UB(t)] \UB^*(t) \iota
	\end{equation}
	 is trace class on $\HHE$ for every $t \in [0,T]$.
\end{lem}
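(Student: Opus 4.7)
The plan is to expand $\Delta(t)$ using the decomposition $\UE(t) = \iota^* \UB(t) \iota + D(t)$ from Prop.\,\ref{prop:defD}, and then show each resulting term is trace class using the confinement/locality machinery of Sect.\,\ref{sec:local_conf_trace}. Since $\Lambda_2$ naturally commutes with $\iota,\iota^*$ (identifying the switch functions on $\HHB$ and $\HHE$), one has $[\Lambda_2,\UE]=\iota^*[\Lambda_2,\UB]\iota+[\Lambda_2,D]$ and $\UE^*=\iota^*\UB^*\iota+D^*$. Multiplying these out, using $\iota\iota^*=P_1$, and subtracting $\iota^*[\Lambda_2,\UB]\UB^*\iota$ (which I split as $\iota^*[\Lambda_2,\UB]P_1\UB^*\iota+\iota^*[\Lambda_2,\UB](1-P_1)\UB^*\iota$) yields the identity
\begin{equation*}
\Delta \;=\; [\Lambda_2,D]D^* + [\Lambda_2,D]\,\iota^*\UB^*\iota + \iota^*[\Lambda_2,\UB]\iota\, D^* - \iota^*[\Lambda_2,\UB](1-P_1)\UB^*\iota.
\end{equation*}

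The first three terms are handled directly. Because $D$ is simultaneously $1$-confined and $2$-local (Prop.\,\ref{prop:defD}), Lem.\,\ref{lem_icjl_traceclass} yields that $[\Lambda_2,D]$ is trace class; multiplying by the bounded operators $D^*$ or $\iota^*\UB^*\iota$ keeps it trace class. For the third term, Lem.\,\ref{lem:Lambdaconfinement} applied to the local operator $\UB$ gives that $[\Lambda_2,\UB]$ is simultaneously $2$-confined and $1$-local; compressing by $\iota^*,\iota$ preserves these properties on $\HHE$, and then Lem.\,\ref{lem_icjl_traceclass} combined with the $1$-confined, $2$-local factor $D^*$ produces a trace class operator.

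The main obstacle is the last term $\iota^*[\Lambda_2,\UB](1-P_1)\UB^*\iota$, for which the single switch $\Lambda_2$ does not immediately provide the second direction of confinement needed for Lem.\,\ref{lem_icjl_traceclass}. The key trick is to recognize that $P_1$ is itself a switch function in direction $1$: since $\iota^*\iota=\Id_{\HHE}$ means $\UB^*\iota=\UB^*P_1\iota$, and $(1-P_1)P_1=0$, one rewrites
\begin{equation*}
(1-P_1)\UB^*\iota \;=\; -(1-P_1)[P_1,\UB^*]\,\iota.
\end{equation*}
By Prop.\,\ref{prop:U_expdecay}, $\UB^*$ is local, so Lem.\,\ref{lem:Lambdaconfinement} (with the switch $P_1$ in direction $1$) tells us that $[P_1,\UB^*]$ is simultaneously $1$-confined and $2$-local. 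Since $P_1$ depends only on $n_1$, multiplication by $(1-P_1)$ on the left commutes with any $e^{\lambda f_2}$ and preserves both properties. Hence $(1-P_1)[P_1,\UB^*]$ is still $1$-confined and $2$-local, while $[\Lambda_2,\UB]$ is $2$-confined and $1$-local; Lem.\,\ref{lem_icjl_traceclass} therefore delivers a trace class operator on $\HHB$, which remains trace class after sandwiching with $\iota,\iota^*$. Combining all four contributions concludes that $\Delta(t)$ is trace class.
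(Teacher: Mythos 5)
Your proof is correct and takes essentially the same route as the paper: after substituting $\UE = \iota^*\UB\iota + D$, your identity for $\Delta$ coincides with the paper's (the paper writes the problematic piece as $\iota^*[\Lambda_2,\UB][P_1,\UB^*]\iota$, which by $(1-P_1)P_1=0$ and $P_1\iota=\iota$ equals your $-\iota^*[\Lambda_2,\UB](1-P_1)\UB^*\iota$), and the trace-class verification via Lem.\,\ref{lem:Lambdaconfinement}, Lem.\,\ref{lem_icjl_traceclass}, and the observation that $P_1$ is a switch function in direction $1$ is exactly the paper's argument.
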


\begin{proof}
	From Prop.\,\ref{prop:defD} we have that $\UE(t) = \iota^* \UB(t) \iota + D(t)$. We put this expression of $\UE$ in the definition of $\Delta$, use the fact that $\iota^* \Lambda_2 = \Lambda_2 \iota^*$ and $\Lambda_2 \iota = \iota \Lambda_2$ where on the right hand side we mean the extension of $\Lambda_2$ on $\HHB$, and that $\iota\iota^* = P_1$. We end up with
	\begin{equation}
	\Delta = \iota^* [\Lambda_2, \UB] [P_1, \UB^*] \iota + \iota^*[\Lambda_2, \UB ] \iota D^* + [\Lambda_2, D](\iota^* \UB^* \iota + D^*)
	\end{equation}
	where each term is separately trace class by using that $\UB$ is local, $D$ is $1$-confined and $2$-local, $P_1$ is also a switch function and by applying Lem.\,\ref{lem:Lambdaconfinement} and \ref{lem_icjl_traceclass}.
\end{proof}

\begin{proof}[\textbf{Proof of Thm.\,\ref{thm:BE_per}}] We start by the edge index that we rewrite for convenience 
 \begin{equation}
 \IE = \Tr_{\HHE} \Big([\Lambda_2, \UE(T)] \UE^*(T)\Big).
 \end{equation}
 In order to restore a time dependence we introduce a cut-off in direction 1. For $r \in \mathbb N$ take $Q_{1,r} = \chi_{n_1 < r}$ on $\HHE$ and note that $Q_{1,r} = \idE - P_{1,r}$ where $P_{1,r}$ is a also a switch function. Since $\UB(T) = \idB$ the operator in the previous expression of $\IE$ is nothing but $\Delta(T)$  which is trace class. Moreover $Q_{1,r} \rightarrow \idE$ strongly when $r \rightarrow \infty$, so that
 \begin{equation}
 \IE = \lim_{r\rightarrow \infty} \IE^r \qquad \IE^r = \Tr_{\HHE} \Big([\Lambda_2, \UE(T)] \UE^*(T) Q_{1,r} \Big)
 \end{equation} 
 Then we can rewrite
 \begin{equation}
 \IE^r = \int_0^T \dd t\, \partial_t \Tr_{\HHE} \Big([\Lambda_2, \UE(t)] \UE^*(t) Q_{1,r} \Big) \equiv \int_0^T \dd t\, I_r(t)
 \end{equation} 
 Indeed
 \begin{equation}
 I_r(t) = \Tr_{\mathcal \HHE} \Big([\Lambda_2, \partial_t \UE(t)] \UE^*(t) Q_{1,r} \Big)
  -  \Tr_{\mathcal \HHE} \Big([\Lambda_2, \UE(t)] \UE(t)^* (-\ii \HE)(t) Q_{1,r} \Big) 
 \end{equation}
 is trace class by Lem.\,\ref{lem:Lambdaconfinement} and \ref{lem_icjl_traceclass} since $\UE$, $\HE$ and $\partial_t \UE = -\ii \HE \UE$ are local and $Q_{1,r}$ is trivially 1-confined and 2-local simultaneously. From now on we drop the time dependence in $I_r(t)$. By Lem.\,\ref{lem:com_switch_0} we have
 \begin{align}
0 &= \Tr_{\mathcal \HHE} \Big(\big[\Lambda_2, (\partial_t \UE) \UE^* Q_{1,r} \big]\Big)\nonumber\\
 &= \Tr_{\mathcal \HHE} \Big([\Lambda_2, \partial_t \UE] \UE^* Q_{1,r} \Big) + \Tr_{\mathcal \HHE} \Big(\partial_t \UE[\Lambda_2, \UE^* Q_{1,r}]  \Big)
 \end{align}
 where on the r.h.s the first term appears in $I_r(t)$ and the second one can be expanded by using that $[\Lambda_2, Q_{1,r}] = 0$, $[\Lambda_2, \UE^*] = - \UE^*[\Lambda_2, \UE] \UE$ and $(\partial_t \UE)\UE^* = -\ii \HE$. We end up with
 \begin{equation}\label{comput_BE_edgeIrt}
 I_r(t) = \Tr_{\HHE} \Big([\Lambda_2, \UE] \UE^* \, [P_{1,r}\,,\ii \HE ]\Big)
 \end{equation}
 where we have also used that $[\HE,Q_{1,r}] =[P_{1,r},\HE]$. This expression can now be  recast as a bulk expression. By Lem.\,\ref{lem:defDelta}, Def.\,\ref{def:edge_Hamiltonian} of $\HE$, and by denoting $\iota^* P_{1,r} = P_{1,r} \iota^*$ where on the right hand side $P_{1,r} = \chi_{n_1\geq r}$ on $\HHB$, and similarly with $\iota$, we get
 \begin{equation}
 I_r(t)
  = \Tr_{\HHE} \Big(\iota^* [\Lambda_2, \UB] \UB^* \, [P_{1,r}\,, \ii \HB] \iota + \iota^* \big[[\Lambda_2, \UB] \UB^*,P_1 \big]
   [P_{1,r}\,, \ii H_B] \iota+  \Delta [P_{1,r}\,, \ii \HE] \Big)
 \end{equation}
 where we have used that $\iota \iota^* = P_1$ and $\iota^*\iota=\idE$. The traces of the last two terms vanish in the limit $r \rightarrow \infty$. Indeed both $[[\Lambda_2, \UB] \UB^*,P_1 \big]$ and $\Delta$ are trace class according to Lem.\,\ref{lem_icjl_traceclass} and \ref{lem:defDelta} respectively, and $P_{1,r} \rightarrow 0$ strongly, so that 
 \begin{equation}
 \Tr_{\HHB} \Big( \iota^* \big[[\Lambda_2, \UB] \UB^*,P_1 \big] [P_{1,r}\,, \ii H_B ] \iota+  \Delta [P_{1,r}\,, \ii \HE ] \Big) \mathop{\longrightarrow}_{r \rightarrow \infty} 0.
 \end{equation}
 Finally note that for any trace class operator $O$ on $\HHE$ one has $\Tr_{\HHE}(O) = \Tr_{\HHB}(\iota O \iota^*)$, so that
 \begin{equation}
 I_r(t) = \Tr_{\HHB}\Big( [\Lambda_2, \UB] \UB^* [P_{1,r}\,, \ii \HB ] P_1  \Big) + o(1)
  \end{equation}
  where we have used again that $\iota \iota^* = P_1$, $P_1^2 = P_1$ and the cyclicity of trace. 
  
  The next step is to show that $P_1$ can be omitted in the previous expression. Intuitively, $[P_{1,r}\,, \HB ]$ is confined along $n_1 = r$ so that its contribution for $n_1 <0$ vanishes exponentially when $r$ is big enough. More explicitly we compute
  \begin{equation}
\Tr_{\HHB}\Big( [\Lambda_2, \UB] \UB^* [P_{1,r}\,, \ii \HB ] (1 - P_1)  \Big) 
= -\Tr_{\HHB}\Big( P_{1,r} [(1 - P_1),\ii \HB] [\Lambda_2, \UB] \UB^*   \Big) \rightarrow 0
  \end{equation}
  where we  used that $P_{1,r}(1-P_1)=0$ and the cyclicity of the trace. Since  $[(1 - P_1), \ii \HB ] [\Lambda_2, \UB] \UB^* $ is trace class and $P_{1,r} \rightarrow 0$ strongly then the previous expression vanishes in the limit $r \rightarrow \infty$. Moreover the trace on the l.h.s. can be split into two traces so that
   \begin{equation}
   I_r(t) = \Tr_{\HHB}\Big( [\Lambda_2, \UB] \UB^* [P_{1,r}\,,\ii \HB ] \Big) + o(1)
   \end{equation}
   as announced. This expression is nothing but \eqref{comput_BE_edgeIrt} where we have replaced every E by B and up to corrections vanishing in the limit $r \rightarrow \infty$, even when integrated over the compact interval $[0,T]$.
   
   The final step is to get back the expression of the bulk index as in \eqref{defIB_per}. First we rewrite $\ii \HB = -(\partial_t \UB) \UB^*$. Then we have the following identity
   \begin{align}
   \label{compute_BE_algebraicI}
   &\Tr_{\HHB}\Big( [\Lambda_2, \UB] \UB^* [(\partial_t \UB) \UB^*, P_{1,r} ] \Big) \cr&= \dfrac{1}{2} \Tr_{\mathcal H_B} \Big( (\partial_t \UB) \UB^* \Big[[P_{1,r},\, \UB]\UB^*, \,[\Lambda_2, \UB] \UB^* \Big]\Big)\cr
  &\quad  + \dfrac{1}{2}\partial_t \Tr_{\HHB} \Big(\Big[[\Lambda_2, \UB]  , P_{1,r}  \Big]  \UB^*\Big).
   \end{align}
   This identity is purely algebraic but quite tedious to show so we postpone the computation to App.\,\ref{app:BE_algebraicI}. Since $\UB$ is periodic, the second term vanishes when integrated over time. Conjugating the first one by $\UB^*$ and $\UB$ and putting all together we get
   \begin{equation}
   \IE = \lim_{r \rightarrow \infty} \dfrac{1}{2}\int_0^T \dd t \, \Tr_{\mathcal H_B} \Big( \UB^*\partial_t \UB  \Big[\UB^*[P_{1,r}\,, \UB], \,\UB^*[\Lambda_2, \UB]  \Big]\Big)
   \end{equation}
   but on the right hand side we recognize the bulk index expression, that is independent of the choice of switch function. In particular $P_{1,r}$ can be replaced by $P_1$ or any $\Lambda_1$, so that the limit is trivial and we get $\IE = \IB$.
\end{proof}

\subsection{General case}

In the general case the bulk-edge correspondence is a corollary of Thm.\,\ref{thm:BE_per}, so we only need to check that this theorem applies, namely that the effective Hamiltonian $\HB^\varepsilon$ from Def.\,\ref{def:Heff} has the required properties, in particular that it is local. By spectral decomposition

\begin{equation}\label{Hepsilon_spec}
\UB(T) = \int_{\mathcal S^1} \lambda \dd P(\lambda) \quad \Rightarrow \quad   \HB^{\varepsilon} = \dfrac{\ii}{T} \int_{\mathcal S^1} \log_{-T \varepsilon}( \lambda) \dd P(\lambda)
\end{equation}
where the integration is done over the unit circle and $\dd P(\lambda)$ is the spectral measure of $U_B(T)$.

\begin{prop}\label{prop:Heff_local}
	Let  $\HB^\varepsilon$ be an effective Hamiltonian constructed from a bulk Hamiltonian $\HB$. Then $\HB^\varepsilon$ is local, namely it exists $\lambda^* >0$ such that for $0 \leq \lambda < \lambda^*$ 
	\begin{equation}
	\Vert \ee^{-\lambda f} \HB^\varepsilon \ee^{\lambda f} - H_B^\varepsilon \Vert \leq \beta_\lambda
	\end{equation}
	 for any Lipschitz function $f$. Moreover $\beta_\lambda \rightarrow 0$ for $\lambda \rightarrow 0$.
\end{prop}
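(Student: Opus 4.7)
The plan is to represent $\HB^\varepsilon$ via the Dunford--Riesz holomorphic functional calculus and then transfer the locality of $\UB(T)$, provided by Prop.\,\ref{prop:U_expdecay}, to $\HB^\varepsilon$. Since $\ee^{-\ii T\varepsilon}$ lies in a spectral gap of $\UB(T)$, there exists $\delta>0$ such that $\sigma(\UB(T))$ is at distance at least $2\delta$ from the closed ray $\{r\ee^{-\ii T\varepsilon}:r\geq 0\}$. One picks a bounded rectifiable contour $\Gamma\subset\mathbb C$ encircling $\sigma(\UB(T))$ once counterclockwise, lying at distance at least $\delta$ from it, and on which the branch $\log_{-T\varepsilon}$ is holomorphic. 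By spectral mapping, \eqref{Hepsilon_spec} is equivalent to
\begin{equation}
\HB^\varepsilon \;=\; \frac{\ii}{2\pi T}\oint_\Gamma \log_{-T\varepsilon}(z)\,(z-\UB(T))^{-1}\,\dd z.
\end{equation}

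Given a Lipschitz function $f$, the strategy is to compare resolvents. Set $\tilde U=\ee^{-\lambda f}\UB(T)\ee^{\lambda f}$, which by Prop.\,\ref{prop:U_expdecay} is bounded and satisfies $\|\tilde U-\UB(T)\|\leq\alpha_\lambda$ with $\alpha_\lambda\to 0$, uniformly in $f$. Choose $\lambda^*>0$ small enough that $\alpha_{\lambda^*}<\delta$; then for every $0\leq\lambda<\lambda^*$ the contour $\Gamma$ still encloses $\sigma(\tilde U)$, and $\sup_{z\in\Gamma}\|(z-\tilde U)^{-1}\|\leq M$ for some $M$ depending only on $\delta$ and $\|\UB(T)\|=1$, not on $f$ or $\lambda$. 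Working first with the bounded truncations $f_n=nf/(n+|f|)$ from the proof of Prop.\,\ref{prop:U_expdecay}, the algebraic identity $z-\tilde U_n=\ee^{-\lambda f_n}(z-\UB(T))\ee^{\lambda f_n}$ gives $\ee^{-\lambda f_n}(z-\UB(T))^{-1}\ee^{\lambda f_n}=(z-\tilde U_n)^{-1}$, and combining with the resolvent identity yields
\begin{equation}
\ee^{-\lambda f_n}\HB^\varepsilon\ee^{\lambda f_n}-\HB^\varepsilon \;=\; \frac{\ii}{2\pi T}\oint_\Gamma \log_{-T\varepsilon}(z)\,(z-\tilde U_n)^{-1}\bigl(\tilde U_n-\UB(T)\bigr)(z-\UB(T))^{-1}\,\dd z.
\end{equation}
Taking operator norms bounds the right-hand side by $\beta_\lambda:=\frac{|\Gamma|}{2\pi T}\,M^2\,\sup_{z\in\Gamma}|\log_{-T\varepsilon}(z)|\cdot\alpha_\lambda$, independent of $n$ and of $f$, with $\beta_\lambda\to 0$. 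Strong convergence $f_n\to f$ together with the $n$-uniform bound transfers the estimate to the unbounded Lipschitz $f$, just as at the end of the proof of Prop.\,\ref{prop:U_expdecay}.

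The main obstacle is the uniformity in $f$: one must ensure that $\Gamma$ and the resolvent bound $M$ can be chosen to depend only on the spectral geometry of $\UB(T)$, never on the particular Lipschitz function used for conjugation. This is exactly what forces $\lambda^*$ to be strictly smaller than the locality exponent $\mu$ of $\HB$: a first loss occurs in Prop.\,\ref{prop:U_expdecay} (from $\mu$ down to any $\lambda<\mu$) and a further loss is dictated by the quantitative condition $\alpha_{\lambda^*}<\delta$ that keeps $\Gamma$ separated from the perturbed spectrum $\sigma(\tilde U)$. Once this is in place, the very same contour representation gives continuity of the map $\UB(T)\mapsto\HB^\varepsilon$ in the local norm, since perturbing $\UB(T)$ by a small operator amounts, via the resolvent identity, to the same kind of estimate as above.
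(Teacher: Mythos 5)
Your proposal is correct and follows essentially the same route as the paper's proof: a Riesz contour representation of $\HB^\varepsilon$ with a branch cut in the gap, comparison of $R_{U_n}(z)$ and $R_{\UB}(z)$ via the resolvent identity, a uniform resolvent bound using $\alpha_\lambda<\delta$ from Prop.\,\ref{prop:U_expdecay}, and passage to the limit through the bounded truncations $f_n$. The only cosmetic difference is that you package the two resolvent bounds into a single constant $M^2$ whereas the paper keeps $1/\eta$ and $1/(\eta-\alpha_\lambda)$ separately; the estimates and the conclusion $\beta_\lambda\to 0$ are identical.
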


	\begin{proof}
		As in the proof of Prop.\,\ref{prop:U_expdecay} we consider bounded $f_n$ instead of $f$ to work with bounded operators, get a uniform estimate independent of $n$ allowing to consider the $n \rightarrow \infty$ limit. We compute $\HB^\varepsilon$ through the resolvent formula
		\begin{equation}
		\HB^\varepsilon = - \dfrac{1}{2\pi \ii} \dfrac{\ii}{T} \int_\Gamma \dd z  \log_{-T \varepsilon}(z) R_{\UB}(z) 
		\end{equation}
		where $R_{\UB}(z) \equiv (\UB(T) - z)^{-1}$ and $\Gamma$ is illustrated in Fig.\,\ref{fig:gammalog}. In particular one has
		\begin{equation}
		\ee^{-\lambda f_n} \HB^\varepsilon \ee^{\lambda f_n} = - \dfrac{1}{2\pi \ii} \dfrac{\ii}{T} \int_\Gamma \dd z  \log_{-T \varepsilon}(z) R_{U_n}(z), 
		\end{equation}
		where we defined $U_n(T) \equiv \ee^{-\lambda f_n} \UB(T) \ee^{\lambda f_n}$.
		\begin{figure}[htb]
			\centering
			\begin{tikzpicture}[scale=0.8]
			\newdimen\r
			\pgfmathsetlength\r{1.5cm}
			\draw[thick] (0,0) circle (\r);
			\draw[line width=0.15cm,DeepSkyBlue3] (20:\r) arc (20:130:\r);
			\draw[line width=0.15cm,DeepSkyBlue3] (-40:\r) arc 
			(-40:-120:\r);
			
			\draw[densely dashed] (190:0) -- (190:{1.2*\r}) node[anchor=north east] 
			{$\,\ee^{-\ii T\varepsilon}$};
			\draw (-0.05,-0.05) -- (0.05,0.05);
			\draw (0.05,-0.05) -- (-0.05,0.05);
			\draw (0,0) node[below] {0};
			
			\draw[red,thick] (140:0.8*\r) arc (140:-130:0.8*\r);
			\draw[red,thick] (140:1.2*\r) arc (140:-130:1.2*\r);
			
			\draw[red,thick] (140:1.2*\r) -- (140:0.8*\r);
			\draw[red,thick] (-130:1.2*\r) -- (-130:0.8*\r);
			
			\draw[red] (2.2,0) node {$\Gamma$};
			
			\draw[red,thick] (1.2*\r,0) -- (1.1*\r,-0.2);
			\draw[red,thick] (1.2*\r,0) -- (1.3*\r,-0.2);			
			\end{tikzpicture}
			\caption{\label{fig:gammalog}Contour $\Gamma$ to compute the logarithm with branch cut in the spectral gap of $\UB(T)$.}
		\end{figure}
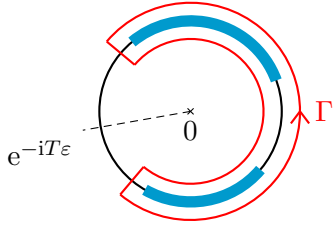
		
		The usual resolvent identity leads to
		\begin{equation}
		R_{U_n}(z) \Big( 1 + (U_n(T) - \UB(T))) R_{\UB}(z) \Big) = R_{\UB}(z).
		\end{equation}
		First we have
		\begin{equation}
		\Vert R_{\UB}(z) \rVert \leq \dfrac{1}{\mathrm{dist}(z,\sigma(U(T)))} \leq \dfrac{1}{\eta}
		\end{equation}
		where $\eta = \mathop{\inf}_{z \in  \Gamma} \big(\mathrm{dist}(z,\sigma(U(T))) \big) >0 $. Then from Prop.\,\ref{prop:U_expdecay}, we know that for $\lambda$ sufficiently small, let say $\lambda < \lambda^*$ one has
		\begin{equation}
		\Vert U_n(T) - \UB(T) \Vert \leq \alpha_\lambda < \eta
		\end{equation}
		independently from $n$. This implies that $1+(U_n(T)-\UB(T)) R_{\UB}(z)$ is invertible for $z \in \Gamma$. Thus
		\begin{equation}
		\Vert R_{U_n}(z) \Vert = \left\lVert R_{\UB}(z)\Big(1+(U_n(T)-\UB(T)) R_{\UB}(z)\Big)^{-1} \right\rVert \leq \dfrac{1}{\eta-\alpha_\lambda},
		\end{equation}
		so that $R_{U_n}(z)$ is bounded independently from $n$. We compute
		\begin{equation}
		\ee^{-\lambda f_n} \HB^\varepsilon \ee^{\lambda f_n} - \HB^\varepsilon = - \dfrac{1}{2\pi \ii} \dfrac{\ii}{T} \int_\Gamma \dd z  \log_{-T \varepsilon}(z) (R_{U_n}(z)-R_{\UB}(z)) 
		\end{equation}
		then again by the resolvent identity and the previous estimates
		\begin{align}
		\left\lVert R_{U_n}(z)-R_{\UB}(z) \right\rVert & = \left\lVert R_{U_n}(z) (\UB(T)-U_n(T)) R_{\UB}(z) \right\rVert
		\leq \dfrac{1}{\eta-\alpha_\lambda} \alpha_\lambda \dfrac{1}{\eta}.
		\end{align}
		Finally
		\begin{equation}
		\left\lVert \ee^{-\lambda f_n} \HB^\varepsilon \ee^{\lambda f_n} - \HB^\varepsilon \right\rVert \leq \dfrac{|\Gamma|}{2\pi T} \sup_{z \in \Gamma} \{|\log(z)| \} \, \dfrac{\alpha_\lambda}{\eta(\eta-\alpha_\lambda)} \quad  \equiv \beta_\lambda
		\end{equation}
		for $\lambda < \lambda^*$ such that $\alpha_\lambda < \eta$. The term on the r.h.s. is finite, independent of $n$ and goes to 0 when $\lambda \rightarrow 0$. Thus we have the same when $n \rightarrow \infty$, leading to the result.
	\end{proof}

 We then study the influence on the choice of $\varepsilon$, first on $\HB^\varepsilon$ then on the bulk index. The proof of Lem.\,\ref{lem:Heff_infl_epsilon} is straightforward. Both identities come from the spectral decomposition \eqref{Hepsilon_spec}, and the properties of the logarithm. The first one from the fact that $\log_{\alpha + 2\pi } = \log_{\alpha} +  2\pi\ii $ and the second from 
	\begin{equation}
	\log_{\alpha'}(\ee^{\ii \phi})- \log_{\alpha}(\ee^{\ii \phi}) = \left\lbrace\begin{array}{ll}
	0, & (0 \leq \phi < \alpha) \\
	2 \pi \ii,   & (\alpha < \phi < \alpha') \\
	0,  & (\alpha' < \phi < 2\pi) \\
	\end{array}\right..
	\end{equation} 

As we shall see, \eqref{Heff2pi} tells us that we can restrict $\varepsilon$ to any interval of length $2\pi/T$ and \eqref{comp_Heff} compares two effective Hamiltonians in that interval. In particular they coincide when $\ee^{-\ii T \varepsilon}$ and $\ee^{-\ii T \varepsilon'}$ belong to the same gap. 

\begin{proof}[\textbf{Proof of Prop.\,\ref{prop:influence_varep}}]
	By construction, $\HB^\varepsilon$ is time independent, so that the relative evolution is
	\begin{equation}
	U_\mathrm{B,rel}^\varepsilon(t) = \left\lbrace \begin{array}{ll}
	\UB(2t), & (0 \leq t \leq T/2) \\
	\exp\big(-\ii 2(T-t) \HB^\varepsilon\big), &  (T/2 \leq t \leq  T)
	\end{array}\right..
	\end{equation}
	From \eqref{Heff2pi} we deduce $U_\mathrm{B,rel}^{\varepsilon+2\pi/T} = U_\mathrm{B,rel}^\varepsilon U_{\Id}$ where 
	    \begin{equation}\label{defUIn}
	    U_{\Id}(t) = 	\left\lbrace \begin{array}{ll}
	    \idB, &  (0 \leq t \leq T/2) \\
	    \exp\Big(-\ii 2(T-t)  \dfrac{2\pi }{T} \idB \Big),   &  (T/2 \leq t \leq  T)
	    \end{array}\right.
	    \end{equation}
	that also satisfy $U_{\Id}(T)=\idB$. Moreover $\IB[U_{\Id}]= 0$ since $U_{\Id}$  acts trivially on $\HHB$, so that by the additivity from Prop.\,\ref{prop:add_IB} we deduce $\IB(\varepsilon + 2\pi /T) = \IB(\varepsilon)$.
	
	Similarly, for $0 \leq  \varepsilon' - \varepsilon< 2 \pi$ we get from \eqref{comp_Heff} that $U_\mathrm{B,rel}^{\varepsilon'}\! = U_\mathrm{B,rel}^\varepsilon U_{P_{\varepsilon,\varepsilon'}}$ where $U_{P_{\varepsilon,\varepsilon'}}$ is similar to $U_{\Id}$ but with $P_{\varepsilon,\varepsilon'}$ instead of $\idB$ in \eqref{defUIn}. It is then shown in App.\,\ref{app:chern} that
	\begin{equation} \label{id_chern}
	\IB[U_{P,\varepsilon,\varepsilon'}] =-  2\pi \ii \,\Tr\Big( P_{\varepsilon,\varepsilon'} \Big[  \big[\Lambda_1,P_{\varepsilon,\varepsilon'} \big],  \big[\Lambda_2, P_{\varepsilon,\varepsilon'} \big] \Big] P_{\varepsilon,\varepsilon'} \Big) 
	= c(P_{\varepsilon,\varepsilon'})  \in \mathbb Z
	\end{equation}
	which is the Kubo-St\v{r}eda formula or non-commutative Chern number of $P_{\varepsilon,\varepsilon'}$ from the Quantum Hall effect \cite{AvronSeilerSimon94}. We conclude by the additivity property of $\IB$ from Prop.\,\ref{prop:add_IB}.
\end{proof}

We finally deal with continuity properties.

\begin{prop}\label{HeffcontinousinUB}
	It exists $\lambda,\,\nu >0$ such that $\UB \mapsto \HB^\varepsilon$ is continuous with respect to $\norm{\cdot}_\nu$ and $\norm{\cdot}_\lambda$ as long as $\ee^{-\ii T \varepsilon}$ lies in a spectral gap of $\UB(T)$.
\end{prop}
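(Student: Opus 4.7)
The plan is to adapt the resolvent calculus of the proof of Prop.~\ref{prop:Heff_local} so as to compare two effective Hamiltonians rather than to compare a single one with its spatial conjugate. Fix a reference $U_B$ with spectral gap containing $\ee^{-\ii T\varepsilon}$, let $\Gamma$ be a closed contour encircling $\sigma[U_B(T)]$ as in Fig.~\ref{fig:gammalog}, and let $\eta=\mathrm{dist}(\Gamma,\sigma[U_B(T)])>0$. Choose $\nu<\mu$ (with $\mu$ the locality exponent of $U_B$ provided by Prop.~\ref{prop:U_expdecay}) and consider a second propagator $U_B'$ that is local with the same exponent $\nu$ and sufficiently close in $\norm{\cdot}_\nu$. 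By Lem.~\ref{lem:H_expdecay}, closeness in $\norm{\cdot}_\nu$ implies closeness in operator norm, so the spectral gap persists and we may use the same contour $\Gamma$ (possibly shrinking $\eta$ slightly) for both $U_B(T)$ and $U_B'(T)$. Then
\begin{equation}
\HB^\varepsilon - H_B^{\varepsilon\,\prime} = -\dfrac{1}{2\pi T}\int_\Gamma \dd z\, \log_{-T\varepsilon}(z)\bigl(R_{\UB}(z)-R_{\UB'}(z)\bigr).
\end{equation}

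To obtain an estimate in the local norm $\norm{\cdot}_\lambda$, conjugate by $\ee^{\pm\lambda f}$ for a Lipschitz function $f$. The second resolvent identity gives
\begin{equation}
\ee^{-\lambda f}\bigl(R_{\UB}(z)-R_{\UB'}(z)\bigr)\ee^{\lambda f}
= \ee^{-\lambda f}R_{\UB}(z)\ee^{\lambda f}\cdot \ee^{-\lambda f}\bigl(\UB'(T)-\UB(T)\bigr)\ee^{\lambda f}\cdot \ee^{-\lambda f}R_{\UB'}(z)\ee^{\lambda f}.
\end{equation}
The middle factor is controlled by Cor.~\ref{cor:localbis}: it is bounded by $B_\lambda\,\norm{\UB-\UB'}_\nu$ provided $\lambda<\nu$. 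For the outer factors, observe that $\ee^{-\lambda f}R_{\UB}(z)\ee^{\lambda f}=(\ee^{-\lambda f}\UB(T)\ee^{\lambda f}-z)^{-1}$ is the resolvent of a conjugated unitary which, by Prop.~\ref{prop:U_expdecay}, differs from $\UB(T)$ in operator norm by at most $\alpha_\lambda$ uniformly in $f$; picking $\lambda$ small enough so that $\alpha_\lambda<\eta/2$ yields the uniform bound
\begin{equation}
\sup_{f\text{ Lip.},\,z\in\Gamma}\norm{\ee^{-\lambda f}R_{\UB}(z)\ee^{\lambda f}}\leq \dfrac{1}{\eta-\alpha_\lambda},
\end{equation}
and likewise for $\UB'$. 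Integrating over $\Gamma$ and taking the supremum over Lipschitz $f$, Cor.~\ref{cor:localbis} then yields
\begin{equation}
\norm{\HB^\varepsilon-H_B^{\varepsilon\,\prime}}_\lambda \leq \dfrac{|\Gamma|\sup_{z\in\Gamma}|\log_{-T\varepsilon}(z)|}{2\pi T(\eta-\alpha_\lambda)^2}\, B_\lambda\,\norm{\UB-\UB'}_\nu,
\end{equation}
which is the continuity statement with the specified norms.

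The main technical obstacle is ensuring that all constants can be made uniform in $f$ and in the perturbed propagator. The uniformity in $f$ is the usual dividend of Prop.~\ref{prop:U_expdecay}, which provides a bound $\alpha_\lambda$ independent of the Lipschitz function. The uniformity in the perturbation requires that the spectral gap of $U_B(T)$ be stable under small changes in the local norm; this is automatic because $\norm{U_B-U_B'}\leq c(\nu)\norm{U_B-U_B'}_\nu$ by Lem.~\ref{lem:H_expdecay}, so a small local-norm perturbation produces a small operator-norm perturbation and hence leaves a fixed contour $\Gamma$ outside the spectrum. Composition with Prop.~\ref{cor:U_continuity} then yields the customary corollary that $\HB\mapsto \HB^\varepsilon$ is continuous in the appropriate local norms, which is what is used in the proof of Cor.~\ref{cor:IB_contiuity}.
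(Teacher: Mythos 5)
Your proof is correct and follows essentially the same route as the paper: you use the resolvent formula for $\HB^\varepsilon$, apply the second resolvent identity, conjugate throughout by $\ee^{\pm\lambda f}$, control the middle factor via Cor.~\ref{cor:localbis} and the outer resolvents via the uniform bound $\alpha_\lambda$ from Prop.~\ref{prop:U_expdecay}. The only minor addition is that you spell out explicitly, via Lem.~\ref{lem:H_expdecay}, why the spectral gap (and hence the contour $\Gamma$) is stable under small perturbations in $\norm{\cdot}_\nu$, a point the paper leaves implicit by assuming a ``contour common to'' both propagators.
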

\begin{proof}
	Let $\UBo(T)$ and $\UBt(T)$ with $\ee^{-\ii T \varepsilon}$ belonging to a common spectral gap. Take $0 < \lambda < \lambda^*$ from Prop.\,\ref{prop:Heff_local} so that $\HBo^\varepsilon$ and $\HBt^\varepsilon$ are both local with common exponent $\lambda$. Similarly to the proof of Prop.\,\ref{prop:Heff_local}
	\begin{equation}
	\ee^{-\lambda f} (\HBo^\varepsilon-\HBt^\varepsilon) \ee^{\lambda f}
	 = - \dfrac{1}{2\pi \ii} \dfrac{\ii}{T} \int_\Gamma \dd z  \log_{-T \varepsilon}(z) R_{U_{1f}}(z) \ee^{-\lambda f} (\UBo-\UBt)(T) \ee^{\lambda f} R_{U_{2f}}(z) 
	\end{equation}
	where $U_{if} = \ee^{-\lambda f}U_{\mathrm B,i}(T) \ee^{\lambda f}$, $\Gamma$ is a contour common to $\UBo(T)$ and $\UBt(T)$, and where we have used the resolvent identity. We know from the previous proof that $R_{U_{1f}}$ and $R_{U_{2f}}$ are both bounded for $z \in \Gamma$. By Prop.\,\ref{prop:U_expdecay} and Cor.\,\ref{cor:localbis}, we know that $\ee^{-\lambda f} (\UBo-\UBt)(T) \ee^{\lambda f}$ is bounded by $\norm{(\UBo-\UBt)}_\nu$ for some $\nu > \lambda$. Thus
	\begin{equation}
	\norm{\ee^{-\lambda f} (\HBo^\varepsilon-\HBt^\varepsilon) \ee^{\lambda f}} \leq B \norm{(\UBo-\UBt)}_\nu
	\end{equation}
	and consequently we have a similar estimate for $\Vert (\HBo^\varepsilon-\HBt^\varepsilon)\Vert_\lambda$.
\end{proof}

Together with Prop.\,\ref{cor:U_continuity}, we deduce that $\HB \mapsto U_{\mathrm{B,rel}}^\varepsilon$ is continuous, respectively with $\norm{\cdot}_\mu$ and $\norm{\cdot}_\lambda$. This proves Cor.\,\ref{cor:IB_contiuity} on homotopy invariance of $\IB$.

\subsection{Interface index properties}

Note that one can also embed the edge Hamiltonians instead of gluing the bulk ones. Namely by considering $\mathbb N_- = \mathbb Z \setminus \mathbb N$,\, $\HHE^- = \ell^2(\mathbb N_- \times \mathbb Z) \otimes \mathbb C^N$, $\iota_- : \HHE^- \rightarrow \HHB$ and $\iota_-^*: \HHB \rightarrow \HHE^-$, one has
	\begin{equation}\label{HIedges}
	\HI = \iota H_{\mathrm{E},1} \iota^* + \iota_- H_{\mathrm{E},2}^- \iota_-^* + H_\mathrm{int}
	\end{equation}
where we have defined $\HE^- = \iota_-^* \HB \iota$, namely the edge Hamiltonian on the other half space, and used \eqref{J_partialisom} and similarly $\iota_-^* \iota_- = \Id_{\HHE^-}$ and $\iota_- \iota_-^* = 1-P_1$. 

\begin{lem}\label{lem:defDI}
	Let $\HI$ be the interface Hamiltonian from Def.\,\ref{def:interfaceH}. Then the corresponding propagator satisfies
	\begin{equation}
	\UI(t) = \iota U_{\mathrm{E},1}(t) \iota^* +  \iota_- U_{\mathrm{E},2}^-(t) \iota_-^* + \DI(t)
	\end{equation}
	where $U_{\mathrm{E},2}^-$ is generated by $\HE^-$ and $\DI(t)$ is simultaneously 1-confined and 2-local.
\end{lem}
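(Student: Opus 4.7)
\textbf{Proof plan for Lem.\,\ref{lem:defDI}.} The strategy is a direct analogue of the proof of Prop.\,\ref{prop:defD}: introduce the candidate \enquote{approximate propagator}, identify its generator, derive a Duhamel-type integral representation for the remainder $\DI$, and then obtain the confinement/locality estimate by inserting exponential weights along the product.

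Using the alternative expression \eqref{HIedges} for $\HI$, set
\begin{equation*}
V(t) := \iota U_{\mathrm{E},1}(t)\iota^* + \iota_- U_{\mathrm{E},2}^-(t)\iota_-^*.
\end{equation*}
Since $\iota\iota^* = P_1$, $\iota_-\iota_-^* = 1-P_1$ are complementary projections, $V(t)$ is unitary on $\HHB$. Using $\iota^*\iota=\Id_{\HHE}$, $\iota_-^*\iota_-=\Id_{\HHE^-}$ and the Schr\"odinger equations for $U_{\mathrm{E},1}$, $U_{\mathrm{E},2}^-$, one computes $\ii\partial_t V = (\HI - H_\mathrm{int}) V$. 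Hence $\DI(t) := \UI(t)-V(t)$ satisfies $\ii\partial_t\DI = \HI \DI + H_\mathrm{int} V$ with $\DI(0)=0$, and Duhamel's formula yields
\begin{equation*}
\DI(t) = -\ii\int_0^t \UI(t)\,\UI^*(s)\, H_\mathrm{int}(s)\, V(s)\,\dd s .
\end{equation*}

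To establish that $\DI$ is simultaneously $1$-confined and $2$-local, I mimic the computation \eqref{F2F1D_comput}--\eqref{Dcont}: replacing $f_2$ and $|n_1|$ by their bounded cut-offs $f_2^p$ and $f_1^p$, I conjugate the integrand to write
\begin{equation*}
\ee^{-\lambda f_2^p}\UI(t)\ee^{\lambda f_2^p}\cdot \ee^{-\lambda f_2^p}\UI^*(s)\ee^{\lambda f_2^p}\cdot \ee^{-\lambda f_2^p}H_\mathrm{int}(s)\ee^{\lambda f_2^p}\ee^{\lambda f_1^p}\cdot \ee^{-\lambda f_1^p} V(s) \ee^{\lambda f_1^p}.
\end{equation*}
The first two factors are bounded uniformly in $s,t\in[0,T]$ because $\HI$ is a bulk Hamiltonian, so Prop.\,\ref{prop:U_expdecay} applies to $\UI$. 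The third factor is bounded by hypothesis, since $H_\mathrm{int}$ is simultaneously $1$-confined and $2$-local. For the last factor I use that $P_1$ and $1-P_1$ commute with every function of $n_2$, so
\begin{equation*}
\ee^{-\lambda f_1^p} V(s)\ee^{\lambda f_1^p} = \iota\,\ee^{-\lambda f_1^p}U_{\mathrm{E},1}(s)\ee^{\lambda f_1^p}\iota^* + \iota_-\,\ee^{-\lambda f_1^p}U_{\mathrm{E},2}^-(s)\ee^{\lambda f_1^p}\iota_-^*,
\end{equation*}
and each summand is bounded uniformly in $s$ and $p$ by Cor.\,\ref{cor:UE_expdecay} applied to $U_{\mathrm{E},1}$ and $U_{\mathrm{E},2}^-$ respectively (note $f_1^p$ is Lipschitz on each half-space). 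Passing to the limit $p\to\infty$ via strong convergence, integration over $s\in[0,t]$ yields a uniform bound on $\ee^{-\lambda f_2}\DI(t)\ee^{\lambda f_2}\ee^{\lambda|n_1|}$ for any Lipschitz $f_2$, proving the claim.

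The main technical obstacle is the handling of $V$: unlike in Prop.\,\ref{prop:defD}, $V$ is not the propagator of a \emph{single} bulk Hamiltonian in the sense of Def.\,\ref{def:bulk_Hamiltonian} (the block-diagonal generator $\HI-H_\mathrm{int}$ fails to be local across the interface $n_1=0$), so locality of $V$ must be read off from its block-diagonal structure and Cor.\,\ref{cor:UE_expdecay} rather than directly from Prop.\,\ref{prop:U_expdecay}. The commutation $[P_1, \ee^{\lambda f_2}]=0$ and $[P_1, \ee^{\lambda f_1^p}]\neq 0$ but harmless (since $\ee^{\pm\lambda f_1^p}$ preserves each half-space) is the point that makes the argument go through.
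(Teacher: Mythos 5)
Your argument is essentially correct and takes a genuinely different route from the paper. The paper's proof decomposes $\UI$ into the four blocks $P_1\UI P_1$, $(1-P_1)\UI(1-P_1)$, $P_1\UI(1-P_1)$, $(1-P_1)\UI P_1$; it applies (a variant of) Prop.\,\ref{prop:defD} with $\HI$ playing the role of the bulk Hamiltonian, $H_{\mathrm E,1}$ that of the edge Hamiltonian, and $\iota^*H_\mathrm{int}\iota$ as a boundary perturbation in the sense of Prop.\,\ref{prop:general_BC}, to conclude that the diagonal blocks equal $\iota U_{\mathrm E,1}\iota^*$ and $\iota_- U_{\mathrm E,2}^-\iota_-^*$ up to simultaneously $1$-confined and $2$-local remainders; and it disposes of the off-diagonal blocks by observing that they are $1$-confined directly from the locality of $\UI$, since their kernel is supported on pairs $(\vm,\vn)$ straddling $n_1=0$. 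Your proof collapses all of this into a single Duhamel step for $\DI=\UI-V$ after recognizing $V$ as the propagator of the block-diagonal generator $\HI-H_\mathrm{int}$. This is more economical: you do not need Prop.\,\ref{prop:general_BC}, and the off-diagonal blocks never appear. The price, which you correctly flag, is that $V$ is not the propagator of a single bulk Hamiltonian in the sense of Def.\,\ref{def:bulk_Hamiltonian}, so its locality must be read off blockwise from Cor.\,\ref{cor:UE_expdecay} rather than from Prop.\,\ref{prop:U_expdecay}.

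One algebraic slip in the confinement estimate: the product of the four factors you display equals
\[
\ee^{-\lambda f_2^p}\UI(t)\UI^*(s)H_\mathrm{int}(s)\ee^{\lambda f_2^p}\,V(s)\,\ee^{\lambda f_1^p},
\]
not the desired $\ee^{-\lambda f_2^p}\UI(t)\UI^*(s)H_\mathrm{int}(s)V(s)\ee^{\lambda f_2^p}\ee^{\lambda f_1^p}$: the $\ee^{\lambda f_2^p}$ emerging from the $H_\mathrm{int}$ factor is stranded on the wrong side of $V(s)$. The last factor should be
\[
\ee^{-\lambda f_1^p}\ee^{-\lambda f_2^p}V(s)\ee^{\lambda f_2^p}\ee^{\lambda f_1^p}
=\ee^{-\lambda(f_1^p+f_2^p)}V(s)\ee^{\lambda(f_1^p+f_2^p)},
\]
which is still bounded uniformly in $s$ and $p$ because $f_1^p+f_2^p$ is Lipschitz with constant $1$ and each block of $V$ is local by Cor.\,\ref{cor:UE_expdecay}, exactly as you say. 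So this is a typo rather than a missing idea, but the factorization as written is not an identity and should be corrected. Also note that $[P_1,\ee^{\lambda f_1^p}]=0$, since both are functions of $n_1$ alone; your closing remark should say this commutator vanishes, not that it is nonzero but harmless.
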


\begin{proof}
 From \eqref{HIedges} and Def.\,\ref{def:interfaceH}
 \begin{equation}
 \iota^* H_I \iota = H_{\mathrm E,1} + \iota^* H_\mathrm{int} \iota,
 \end{equation}
 where $  \iota^* H_\mathrm{int} \iota$ being simultaneously 1-confined and 2-local on $\HHE$ plays the role of a boundary condition as in Prop.\,\ref{prop:general_BC}, and $\HI$ that of the bulk Hamiltonian for $H_{\mathrm E,1}$.  In particular by adapting the proof of Prop.\,\ref{prop:defD}
 \begin{equation}
U_{\mathrm E,1}(t) = \iota^* \UI(t) \iota + \DIo(t) \quad \Rightarrow \quad P_1 U_I P_1 = \iota ( U_{\mathrm E,1} - \DIo) \iota^*
 \end{equation}
 where $\DIo$ is simultaneously 1-confined and 2-local and where we have used \eqref{J_partialisom}. Similarly:
 \begin{equation}
(1-P_1) U_I (1-P_1) = \iota_- ( U_{\mathrm E,2}^- - \DIt) \iota_-^*
 \end{equation}  on the other half space. By decomposing $\UI$ over subspaces associated to $P_1$ and $1-P_1$ we get
 \begin{equation}
 \UI = \iota U_{\mathrm E,1} \iota^*  + \iota_- U_{\mathrm E,2}^- \iota_-^* - \iota \DIo \iota^* - \iota_- \DIt \iota_-^* + P_1 \UI (1-P_1) + (1-P_1) \UI P_1
 \end{equation}
Each of the last four terms is simultaneously 1-confined and 2-local from the properties of $\DIo$, $\DIt$ and the fact that $\UI$ is local. Together they define $\DI$.
\end{proof}

\begin{proof}[\textbf{Proof of Prop.\,\ref{prop:interfaceindex}}]
	From Lem.\,\ref{lem:defDI} we get $\UI(T)$ in terms of $U_{\mathrm{E},1}(T)$ and $U_{\mathrm{E},2}(T)$, but since the corresponding bulk propagator are not $\idB$, we need to normalize $\UI(T)$ as in \eqref{defII}. In particular consider the special interface 
	\begin{equation}
	\HBt(t)= \iota H_{\mathrm{E},2} \iota^* + \iota_- H_{\mathrm{E},2}^- \iota_-^* + P_1 \HBt (1-P_1) + (1-P_1) \HBt P_1  
	\end{equation}
	which is nothing but an interface decomposition of $\HBt$. Lem.\,\ref{lem:defDI} gives
	\begin{equation}
	\UBt(T)  = \iota U_{\mathrm{E},2}(T) \iota^* +  \iota_- U_{\mathrm{E},2}^-(T) \iota_-^* + \widetilde \DI(T)
	\end{equation}
	Hence by Lem.\,\ref{lem:defDI} applied for $\UI$ and $\UBt$ we deduce after some algebra
	\begin{equation}\label{comput_Dhat}
	\UBt^*\UI(T) = \iota U_{\mathrm{E},2}^* U_{\mathrm{E},1} (T) \iota^* + \iota_- \Id_{\HHE^-} \iota_-^*  + \widehat D(T)
 	\end{equation}
 	where we have used $\iota_-^* \iota = 0$ and $\iota^* \iota_- =0$ and where $\widehat D(T)$ is simultaneously 1-confined and 2-local. Finally, from Prop.\,\ref{prop:defD} and the fact that $\UBo(T) = \UBt(T)$ we deduce at $t=T$
 	\begin{equation}
 	U_{\mathrm{E},2}^* U_{\mathrm{E},1} = \idE + \iota^* \UBt^* [P_1, \UBo] \iota + \iota^*  \UBt^* \iota D_1 + D_2^* (\iota^* \UBo \iota + D_1)
 	\end{equation}
 	where each term except $\idE$ is simultaneously 1-confined and 2-local. Putting all together, we deduce that $[\Lambda_2, \UBt^*\UI(T)]$ is trace class so that $\II$ is finite. Similarly to the proof of Prop.\,\ref{defIB_per}, $\II$ can be identified with an index of a pair of projections so it is integer valued, independent of $\Lambda_2$ and continuous (with the local norm) in $\UBt$ and $\UI$. In particular consider the deformation of the previous derivation to the sharp interface where the two halves are disconnected
 	\begin{equation}
 	\HI = \iota H_{\mathrm{E},1} \iota^* + \iota_- H_{\mathrm{E},2}^- \iota_-^*,\qquad \widetilde{H}_{\mathrm B,2} = \iota H_{\mathrm{E},2} \iota^* + \iota_- H_{\mathrm{E},2}^- \iota_-^*
 	\end{equation}
 	In that case the corresponding evolutions are also disconnected so that in \eqref{comput_Dhat} $\widehat D(T) = 0$, and we deduce $\II = \IE^\mathrm{rel}$ from expression \eqref{IErel} and $\Tr_{\HHB}(\iota O \iota^* ) = \Tr_{\HHE}(O)$.
\end{proof}

\appendix

\section{Some algebraic computations \label{sec:app}}
\subsection{Additivity of the bulk index \label{app:additivity}}

Here we prove Prop.\,\ref{prop:add_IB}. It is purely algebraic but quite tedious. From the definition \eqref{defIB_per} of $\IB$ we compute 

\begin{align}\label{dem_additivity1}
&\Tr \,(UV)^* \partial_t (UV) \Big[ (UV)^* \big[\Lambda_1 , UV\big], (UV)^* \big[\Lambda_2 , UV\big]\Big] \cr
&= \Tr\, U^* \partial_t U \Big[ U^*\big[\Lambda_1 , U\big], U^* \big[\Lambda_2 , U\big] \Big] + \Tr\, V^* \partial_t V \Big[ V^*\big[\Lambda_1 , V\big], V^* \big[\Lambda_2 , V\big] \Big] \cr
&\hspace{0.5cm}\left.\begin{array}{l}
+ \Tr\, U^* \partial_t U \Big[\big[\Lambda_1 , V\big] V^*, U^* \big[\Lambda_2 , U\big] \Big] + \Tr\, U^* \partial_t U \Big[U^*\big[\Lambda_1 , U\big] ,  \big[\Lambda_2 , V\big] V^*\Big] \\
+ \Tr\, U^* \partial_t U \Big[\big[\Lambda_1 , V\big] V^*,  \big[\Lambda_2 , V\big]V^* \Big] +  \Tr\,  (\partial_t V) V^* \Big[U^* \big[\Lambda_1 , U\big] , U^* \big[\Lambda_2 , U\big] \Big] \\
+ \Tr\,  (\partial_t V) V^* \Big[ \big[\Lambda_1 , V\big]V^* , U^* \big[\Lambda_2 , U\big] \Big] + \Tr\,  (\partial_t V) V^* \Big[ U^* \big[\Lambda_1 , U\big] ,  \big[\Lambda_2 , V\big] V^*\Big] 
\end{array}\right\rbrace \equiv R \cr
\end{align}

where we have used Leibniz rule for $\partial_t$ and $[\Lambda_i, \cdot\,]$ and the cyclicity of trace (note that each written term is trace class by Lem.\,\ref{lem_icjl_traceclass} as long as $U$ and $V$ are local). The two first terms in the latter equation corresponds to the index of $U$ and $V$ when integrated over time. After a bit of algebra one can check that the remaining last three lines are actually equal to
\begin{align}
R = & - \Tr\,\Big[\Lambda_1, U^* \big[\Lambda_2,U\big] (\partial_t V) V^* - U^* \partial_t U \big[ \Lambda_2, V\big] V^*\Big] \cr
& - \Tr\,\Big[\Lambda_2, U^* \partial_t U \big[ \Lambda_2, V\big] V^*- U^* \big[\Lambda_1,U\big] (\partial_t V) V^* \Big] \cr
& - \Tr\,\partial_t\Big( U^* \big[\Lambda_1, U\big] \big[\Lambda_2, V\big] V^* - U^* \big[\Lambda_2, U\big] \big[\Lambda_1, V\big] V^*\Big).
\end{align}
The first two terms are trace class with vanishing trace according to Lem.\,\ref{lem:com_switch_0} and the last one is a total time derivative that vanishes when integrated over time since $U$ and $V$ are periodic by assumption. Thus $R$ vanishes when integrated from $0$ to $T$
so that \eqref{dem_additivity1} leads to the expected result.  Note that this proof is nothing but the one given in \cite{Lyon15} in the periodic case adapted to the the derivatives $[\Lambda_i, \cdot\,]$ for the space directions.

\subsection{Proof of identity \eqref{compute_BE_algebraicI} \label{app:BE_algebraicI}}

In the following all the traces involved are finite using that $\UB$ is local and Lem.\,\ref{lem:Lambdaconfinement} and \ref{lem_icjl_traceclass}. On the one hand we can expand
\begin{align}\label{BE_app_onehand}
& \Tr_{\HHB} \Big( [\Lambda_2, \UB] \UB^* \, [(\partial_t \UB)\UB^*, P_{1,r}] \Big)
 = \Tr_{\HHB} \Big( [\Lambda_2, \UB] \UB^* \, [\partial_t \UB, P_{1,r}]\UB^* \Big)\nonumber\\
&\qquad -  \Tr_{\HHB} \Big( [\Lambda_2, \UB] \UB^*\,\partial_t \UB \UB^* \,[\UB, P_{1,r}]\UB^* \Big)
\end{align}
and on the other hand we notice that, due to Lem.\,\ref{lem:com_switch_0} 
\begin{align}
0 &=\Tr_{\HHB}  \Big[[\Lambda_2, \UB] \UB^*\, (\partial_t \UB)\UB^*, P_{1,r}  \Big] 
= \Tr_{\HHB} \Big( [\Lambda_2, \UB] \UB^* \, [(\partial_t \UB)\UB^*, P_{1,r}] \Big)\nonumber\\
&\qquad + \Tr_{\HHB} \Big(\Big[[\Lambda_2, \UB] \UB^*\, , P_{1,r}  \Big] (\partial_t \UB)\UB^*\Big).
\end{align}
The first term is the one of interest and the second can be expanded 
\begin{align}
&\Tr_{\HHB} \Big(\Big[[\Lambda_2, \UB] \UB^*\, , P_{1,r}  \Big] (\partial_t \UB)\UB^*\Big) 
= \Tr_{\HHB} \Big(\Big[[\Lambda_2, \UB]  , P_{1,r}  \Big] \UB^*\,(\partial_t \UB)\UB^*\Big)\nonumber\\
&\qquad - \Tr_{\HHB} \Big( [\Lambda_2, \UB] \UB^*\,[\UB, P_{1,r}]\UB^*\,\partial_t \UB \UB^*  \Big).
\end{align}
Then we rewrite the first term appearing here using an integration by parts, namely
\begin{align}
&\Tr_{\HHB} \Big(\Big[[\Lambda_2, \UB]  , P_{1,r}  \Big] \UB^*\,(\partial_t \UB)\UB^*\Big) \cr
&= -\partial_t \Tr_{\HHB} \Big(\Big[[\Lambda_2, \UB]  , P_{1,r}  \Big]  \UB^*\Big) + \Tr_{\HHB} \Big(\Big[[\Lambda_2, \partial_t \UB]  , P_{1,r}  \Big]  \UB^*\Big).
\end{align}
Finally, similarly as before,
\begin{align}
 & \Tr_{\HHB} \Big(\Big[[\Lambda_2, \partial_t \UB]  , P_{1,r}  \Big]  \UB^*\Big) \cr
   & = - \Tr_{\HHB} \Big(\Big[[\partial_t \UB, P_{1,r}], \Lambda_2\Big] \UB^*\Big) \cr &
 =- \Tr_{\HHB} \Big(\Big[[ \partial_t \UB , P_{1,r} ]\UB^* , \Lambda_2 \Big]\Big) +  \Tr_{\HHB} \Big([ \partial_t \UB , P_{1,r} ][\UB^* , \Lambda_2 ]\Big) \
\end{align}
where the first term vanishes by Lem.\,\ref{lem:com_switch_0}. Putting together the last three equations, we deduce
\begin{align}\label{BE_app_otherhand}
\Tr_{\HHB} &\Big( [\Lambda_2, \UB] \UB^* \, [(\partial_t \UB)\UB^*, P_{1,r}] \Big)\cr
& =  \partial_t \Tr_{\HHB} \Big(\Big[[\Lambda_2, \UB]  , P_{1,r}  \Big]  \UB^*\Big) -  \Tr_{\HHB} \Big([\UB^* , \Lambda_2 ][ \partial_t \UB , P_{1,r} ] \Big) \cr
& \hspace{0.5cm} + \Tr_{\HHB} \Big( [\Lambda_2, \UB] \UB^*\,[\UB, P_{1,r}]\UB^*\,\partial_t \UB \UB^*  \Big).
\end{align}
 Noticing that $[\UB^*, \Lambda_2] = - \UB^* [\UB,\Lambda_2] \UB^*$ and summing \eqref{BE_app_onehand} and \eqref{BE_app_otherhand} we get identity \eqref{compute_BE_algebraicI}.
 
\subsection{Proof of identity \eqref{id_chern}\label{app:chern}} 

We first rewrite $P = P_{\varepsilon,\varepsilon'}$ and $\Tr = \Tr_{\HHB}$. Note that $P$ is a spectral projector of $\UB(T)$ so it is also local and all the following traces are finite. By definition the first half of the time integral is trivial for $U_{P,\varepsilon,\varepsilon'}$ (defined similarly to \eqref{defUIn}). So that up to a change of variables 
\begin{equation}
\IB[U_{P,\varepsilon,\varepsilon'}] = \dfrac{\ii \pi}{T} \int_0^T \dd t \Tr\Big( P \Big[ \ee^{2\pi \ii \frac{t}{T} P} \big[\Lambda_1, \ee^{-2\pi \ii \frac{t}{T} P} \big], \ee^{2\pi \ii \frac{t}{T} P} \big[\Lambda_2, \ee^{-2\pi \ii \frac{t}{T} P} \big] \Big]\Big),
\end{equation}
where we have used the fact that $\ee^{-2\pi \ii \frac{t}{T} P} = \ee^{-2\pi \ii \frac{t}{T} }P + I -P$. Then we notice that, since $P^2 = P$
\begin{align}
P \Big[  \big[\Lambda_1, P \big],  \big[\Lambda_2, P \big] \Big] P 
&= - P \Lambda_1 (I-P) \Lambda_2 P + P \Lambda_2 (I-P) \Lambda_1 P \nonumber\\
& = P \Big[   \big[\Lambda_1, P \big], P \big[\Lambda_2, P \big] \Big] P + P \Big[P   \big[\Lambda_1, P \big],  \big[\Lambda_2, P \big] \Big] P
\end{align}
and
\begin{equation}
\Tr\Big( P \Big[  P \big[\Lambda_1, P \big], P \big[\Lambda_2, P \big] \Big] P \Big) = 0
\end{equation}
Then expanding $\ee^{-2\pi \ii \frac{t}{T} P} = \ee^{-2\pi \ii \frac{t}{T} }P + I -P$ in the trace of the previous integral we are left after some algebra with
\begin{align}
&\Tr\Big( P \Big[ \ee^{2\pi \ii \frac{t}{T} P} \big[\Lambda_1, \ee^{-2\pi \ii \frac{t}{T} P} \big], \ee^{2\pi \ii \frac{t}{T} P} \big[\Lambda_2, \ee^{-2\pi \ii \frac{t}{T} P} \big] \Big]\Big) \cr
& = 2 \Big(\cos \Big( \dfrac{2\pi t}{T}\Big) -1 \Big) \Tr\Big( P \Big[  \big[\Lambda_1, P \big],  \big[\Lambda_2, P \big] \Big] P \Big), 
\end{align}
which leads to \eqref{id_chern} after integration over $t$.


\end{document}